
%

\documentclass[aps,pra,twocolumn,amsmath,amssymb,superscriptaddress]{revtex4-1}
\usepackage{graphicx}
\usepackage{dcolumn}
\usepackage{amsmath}
\usepackage{amssymb}
\usepackage{subfigure,amsmath,verbatim,moreverb,bm}
\usepackage{amsmath,amsthm,amssymb,url,graphicx}
\usepackage{mathtools}
\usepackage{float}
\usepackage{epstopdf}
\usepackage{color}
\usepackage{mathrsfs}

\newcommand{\Hun}{\text{\upshape H}^1}
\newcommand\R{{\mathbb R}}
\newcommand\Rp{{\mathbb P}}
\newcommand\N{{\mathbb N}}

\def\rv{{\bf r}}

\def\fv{{\bf f}}
\def\sv{{\bf s}}

\def\beq{\begin{equation}}
\def\eeq{\end{equation}}

\def\Ne{N_{\rm e}}
\def\Re{R_{\rm e}}
\def\ber{\begin{eqnarray}}
\def\eer{\end{eqnarray}}
\def\S{\mathscr{S}}
\def\H{\mathcal H}
\def\E{\mathcal E}
\def\P{\mathcal{P}}
\newcommand{\T}{\text{\upshape T}}
\DeclareMathOperator{\Dd}{D}
\theoremstyle{plain}
\newtheorem{teo}{Theorem}[section]

\newtheorem{cor}[teo]{Corollary}

\newtheorem{prop}[teo]{Proposition}

\theoremstyle{remark}
\DeclarePairedDelimiter{\abs}{\lvert}{\rvert}

\newcommand*{\difd}{d}
\newcommand*{\binteg}[3]{\int^{\mathrlap{#3}}_{\mathrlap{#2}}\difd{#1}\,}
\newcommand*{\integ}[1]{\!\int\!\difd{#1}\:}
\newcommand*{\iinteg}[2]{\integ{#1}\!\!\integ{#2}}


\begin{document}
		
\title{The strictly-correlated electron functional for spherically symmetric systems revisited}
\author{Michael Seidl}\affiliation{Department of Theoretical Chemistry and Amsterdam Center for Multiscale Modeling, FEW, Vrije Universiteit, De Boelelaan 1083, 1081HV Amsterdam, The Netherlands}
\author{Simone Di Marino}\affiliation{Scuola Normale Superiore, Pisa, Italy.} 
\author{Augusto Gerolin} \affiliation{Department of Mathematics and Statistics, University of Jyv\"{a}skyl\"{a}, P.O. Box 35 (MaD), FI-40014 Finland.}
\author{Luca Nenna} \affiliation{Universit\'{e} Paris-Dauphine, PSL Research University, CNRS, UMR 7534, 75016, Paris, France and Inria-Paris, MOKAPLAN.}
\author{Klaas J. H. Giesbertz} \affiliation{Department of Theoretical Chemistry and Amsterdam Center for Multiscale Modeling, FEW, Vrije Universiteit, De Boelelaan 1083, 1081HV Amsterdam, The Netherlands}
\author{Paola Gori-Giorgi} \affiliation{Department of Theoretical Chemistry and Amsterdam Center for Multiscale Modeling, FEW, Vrije Universiteit, De Boelelaan 1083, 1081HV Amsterdam, The Netherlands}

\date{\today}

\begin{abstract}
The strong-interaction limit of the Hohenberg-Kohn functional defines a multimarginal optimal transport problem with Coulomb cost. From physical arguments, the solution of this limit is expected to yield  strictly-correlated particle positions, related to each other by co-motion functions (or optimal maps), but the existence of such a deterministic solution in the general three-dimensional case is still an open question. A conjecture for the co-motion functions for radially symmetric densities was presented in Phys.~Rev.~A {\bf 75}, 042511 (2007), and later used to build approximate exchange-correlation functionals for electrons confined in low-density quantum dots. Colombo and Stra [Math.~Models Methods Appl.~Sci., {\bf 26} 1025 (2016)] have recently shown that these conjectured maps are not always optimal. Here we revisit the whole issue both from the formal and numerical point of view, finding that even if the conjectured maps are not always optimal, they still yield an interaction energy (cost) that is numerically very close to the true minimum. We also prove that the functional built from the conjectured maps has the expected functional derivative also when they are not optimal.
\end{abstract}

\maketitle
\section{Introduction and Definitions}
The strong-interaction limit (SIL) of density functional theory, first studied by Seidl and coworkers
\cite{Sei-PRA-99,SeiPerLev-PRA-99,SeiGorSav-PRA-07,GorVigSei-JCTC-09}, is defined as the minimum electron-electron repulsion energy
in an $N$-electron quantum state $\Psi$ with given single-electron density $\rho(\rv)$:
\ber
V_{ee}^{\rm SIL}[\rho]&=&\inf_{\Psi\to\rho}\langle \Psi  | \hat{V}_{ee}| \Psi\rangle\nonumber\\
&\equiv&\inf_{\Psi\to\rho}\integ{\rv_1}\dotsi\integ{\rv_N}\abs{\Psi}^2C_{\rm Coul}.
\label{eq:defSCE}
\eer
Here, $\Psi\to\rho$ means that the infimum is searched over all the $N$-electron wavefunctions $\Psi=\Psi(\rv_1,\dotsc,\rv_N)$ in
$d$-dimensional space, $\rv_i\in\R^d$ (spins may be ignored in this limit) that are associated with the same given particle density
$\rho(\rv)$ \cite{Lev-PNAS-79}. While in chemistry only the case $d=3$ is interesting, low-dimensional effective problems with 
$d=1,2$ are often considered in physics. $\hat{V}_{ee}$ is the multiplicative operator of the Coulomb repulsion,
\ber
\hat{V}_{ee}&=&\sum_{i=1}^{N-1}\sum_{j=i+1}^N\frac1{\abs{\rv_i-\rv_j}}\nonumber\\
&\equiv&C_{\rm Coul}(\rv_1,\dotsc,\rv_N).
\label{eq:defVee}\eer
As a candidate for the minimizer $\abs{\Psi}^2$ in Eq.~\eqref{eq:defSCE}, the concept of strictly correlated electrons (SCE) was
introduced in Ref.~\onlinecite{Sei-PRA-99} and generalized in Ref.~\onlinecite{SeiGorSav-PRA-07}. The idea is that the minimizer 
$\abs{\Psi}^2$ in Eq.~\eqref{eq:defSCE} is not a regular function -- therefore, Eq.~\eqref{eq:defSCE} is written as an infimum and 
not as a minimum \cite{CotFriKlu-CPAM-13} -- but a distribution $\abs{\Psi_{\rm SCE}}^2$ that is zero everywhere except on a 
$d$-dimensional subset $\Omega_\rho$, Eq.~\eqref{Omega_rho} below, of the full $Nd$-dimensional configuration space,
\begin{multline}
\label{eq:psi2}
\abs{\Psi_{\rm SCE}(\rv_1,\dotsc,\rv_N)}^2 = \frac{1}{N!} \sum_{\wp}
\int d\rv \, \frac{\rho(\rv)}{N} \,  \\
{} \times\delta\big(\rv_1-\fv_{\wp(1)}(\rv)\big) \dotsb \delta\big(\rv_N-\fv_{\wp(N)}(\rv)\big).
\end{multline}
Here, $\wp$ denotes a permutation of $1,\dotsc,N$, guaranteeing that $\abs{\Psi_{\rm SCE}}^2$ is symmetric with respect to 
exchanging the coordinates of quantum-mechanically identical particles. The $\delta$-functions describe ``strict correlation'':
In any configuration $(\rv_1,\dotsc,\rv_N)$ resulting from simultaneous measurement of the $N$ electronic positions in such a 
state, $N-1$ vectors $\rv_n$ are always fixed by the remaining one, e.g., $\rv_n=\fv_n(\rv_1)$ for $n=2,\dotsc,N$.
The so-called {\em co-motion} functions $\fv_n(\rv)$ satisfy the differential equation
\beq
\rho(\rv)d \rv = \rho\bigl(\fv_n(\rv)\bigr) d\fv_n(\rv)
\label{eq:fi}
\eeq
which, together with the cyclic group properties,
\ber
\fv_1(\rv)&\equiv&\rv, \nonumber\\
\fv_2(\rv)&\equiv&\fv(\rv), \nonumber\\
\fv_3(\rv)&=     &\fv(\fv(\rv)), \nonumber\\
&\vdots& \label{eq:groupprop}\\
\fv_N(\rv)&=     &\underbrace{\fv(\fv(\dotso\fv(\rv)\dotso))}_\text{$N\!-\!1$ times}   ,\nonumber\\
&&\underbrace{\fv(\fv(\dotso\fv(\rv)\dotso))}_\text{$N$ times} = \rv,\nonumber
\eer
ensure that $\abs{\Psi_{\rm SCE}}^2$ of Eq.~\eqref{eq:psi2} has the density $\rho(\rv)$. The resulting SCE model for the 
functional of Eq.~\eqref{eq:defSCE} reads
\ber
\widetilde{V}_{ee}^{\rm SCE}[\rho,\{\fv_n\}]&=&\integ{\bf s}\frac{\rho({\bf s})}{N}\sum_{i= 1}^{N-1}
\sum_{j= i+1}^N\frac{1}{\abs{{\bf f}_i({\bf s})-{\bf f}_j({\bf s})}} 
\nonumber \\
&=&\frac{1}{2}\integ{\bf s}\,\rho({\bf s})\sum_{i= 2}^N
\frac{1}{\abs{{\bf s}-{\bf f}_i({\bf s})}}.
\label{eq:VeeSCE2}
\eer
Now, the whole problem is reduced to finding for a given density $\rho$ the optimal functions $\fv_n(\rv)$ that satisfy 
Eqs.~\eqref{eq:fi} and \eqref{eq:groupprop}, in short-hand notation ``$\{\fv_n\}:\rho$'', and yield the lowest possible value when 
inserted in Eq.~\eqref{eq:VeeSCE2},
\beq
V_{ee}^{\rm SCE}[\rho]\equiv\inf_{\{\fv_n\}:\rho}\widetilde{V}_{ee}^{\rm SCE}[\rho,\{\fv_n\}].
\label{eq:VeeSCE3}
\eeq
Since in principle the true minimizer in Eq.~\eqref{eq:defSCE} might not be of the SCE type of Eq.~\eqref{eq:psi2}, we generally have 
$V_{ee}^{\rm SIL}[\rho]\le V_{ee}^{\rm SCE}[\rho]$. In Ref.~\onlinecite{ColDiM-INC-13}, the opposite inequality has been also 
proven, $V_{ee}^{\rm SIL}[\rho]\ge V_{ee}^{\rm SCE}[\rho]$, so that
\beq
V_{ee}^{\rm SIL}[\rho] = V_{ee}^{\rm SCE}[\rho].
\eeq
However, observe that in the general $d>1$ and $N>2$ case it is not known whether the infimum in Eq.~\eqref{eq:VeeSCE3} is always a minimum.

As shown by Eqs.~\eqref{eq:fi}--\eqref{eq:VeeSCE3}, the functional $V_{ee}^{\rm SCE}[\rho]$ has a highly non-local dependence on 
$\rho$. Nevertheless,  at least for densities for which the inf in Eq.~\eqref{eq:VeeSCE3} is a min,
its functional derivative $v_{\rm SCE}[\rho](\rv)\equiv \delta V_{ee}^{\rm SCE}[\rho]/\delta\rho(\rv)$ is (up to the usual arbitrary constant) simply given by \cite{MalGor-PRL-12,MalMirCreReiGor-PRB-13}
\beq
-\nabla v_{\rm SCE}[\rho](\rv)
=\sum_{i=2}^N \frac{\rv-\fv_i[\rho](\rv)}{\abs{\rv-\fv_i[\rho](\rv)}^3}.
\label{eq:nablav}
\eeq
Since Eq.~\eqref{eq:nablav} is readily evaluated, once the co-motion functions $\fv_n[\rho](\rv)$ are known, it provides a {\em powerful shortcut} to solve the Kohn-Sham equations for systems close to the strong-interaction limit \cite{MalGor-PRL-12,MalMirCreReiGor-PRB-13,MenMalGor-PRB-14}.

Eq.~\eqref{eq:nablav} has a simple interpretation: The repulsive many-body force exerted in an SCE state on one electron at 
position $\rv$ by the other $N-1$ electrons is exactly due to a local one-body potential $v_{\rm SCE}[\rho](\rv)$ or, equivalently, is 
compensated by the effect of the potential $v^{\rm inv}_{\rm SCE}=-v_{\rm SCE}$ 
\cite{Sei-PRA-99,MalGor-PRL-12,MalMirCreReiGor-PRB-13}. Therefore, the quantum state corresponding to the distribution
$|\Psi_{\rm SCE}|^2$ should be the ground state of the purely multiplicative (potential energy only) Hamiltonian 
\cite{SeiGorSav-PRA-07}
\begin{multline}
E_{\rm pot}[\rho](\rv_1,\dotsc,\rv_N)=\\
C_{\rm Coul}(\rv_1,\dotsc,\rv_N)+\sum_{i=1}^Nv^{\rm inv}_{\rm SCE}[\rho](\rv_i),
\label{eq:EpotCoulSCE}\end{multline}
representing the potential energy of $N$ electrons in the external potential $v^{\rm inv}_{\rm SCE}[\rho](\rv)$. 
This is possible only when the $\R^{Nd}\to\R$ function $E_{\rm pot}[\rho](\rv_1,\dotsc,\rv_N)$ is minimum on the $d$-dimensional 
support $\Omega_\rho\subset\R^{Nd}$ of $|\Psi_{\rm SCE}|^2$,
\beq
\Omega_\rho\equiv\Big\{\Big(\fv_1[\rho](\rv),\dotsc,\fv_N[\rho](\rv)\Big)\Big|\rho(\rv)\ne0\Big\}.
\label{Omega_rho}\eeq
Such a degenerate minimum will be investigated in section \ref{secHessian}.

The SCE ansatz of Ref.~\onlinecite{Sei-PRA-99} has been shown to be the exact minimizer for the problem of Eq.~\eqref{eq:defSCE}
for an arbitrary number $N$ of electrons in $d=1$ dimension \cite{ColDepDiM-CJM-14} and for $N=2$ electrons in any 
dimension $d$ \cite{ButDepGor-PRA-12,CotFriKlu-CPAM-13}. 

For densities that are spherically symmetric, denoted here as $\rho\in{\cal P}_{\rm RAD}$, with $N\ge3$ particles, Seidl, Gori-Giorgi and Savin \cite{SeiGorSav-PRA-07} (hereafter SGS) have suggested a generalization of the $d=1$ solution, constructing co-motion functions $\fv^{\rm SGS}_n[\rho](\rv)$, Eq.~\eqref{fvSGS} below, which define a density functional
\beq
V_{ee}^{\rm SGS}[\rho]=\widetilde{V}_{ee}^{\rm SCE}\big[\rho,\{\fv^{\rm SGS}_n[\rho]\}\big]\qquad(\rho\in{\cal P}_{\rm RAD}).
\label{VeeSGS}\eeq
The SGS solution and the corresponding potential computed via Eq.~\eqref{eq:nablav}, have been used in Ref.~\onlinecite{MenMalGor-PRB-14} to obtain self-consistent ground-state densities and energies for $N$ electrons confined in two-dimensional quantum traps, by solving the KS equations with the SCE functional as an approximation for the Hartree-exchange-correlation energy and potential. The SGS solution has also been used to compute energy densities in the strong-interaction limit for several atoms \cite{MirSeiGor-JCTC-12,VucIroSavTeaGor-JCTC-16}, and it has been extended to the dipolar interaction \cite{MalMirMenBjeKarReiGor-PRL-15}.

Colombo and Stra \cite{ColStr-MMMAS-16} have recently found a counterexample that shows that the SGS co-motion functions do not always yield the minimum for the problem of Eq.~\eqref{eq:VeeSCE3} when $\rho\in{\cal P}_{\rm RAD}$, so that
\beq
V^{\rm SIL}_{ee}[\rho]=V^{\rm SCE}_{ee}[\rho]\le V^{\rm SGS}_{ee}[\rho]\qquad(\rho\in{\cal P}_{\rm RAD}).
\eeq
However, the SGS solution is physically appealing, and it was found to provide KS self-consistent energies and densities that are generally accurate when the system is driven to the dilute regime (see, e.g., Fig.~1 of Ref.~\onlinecite{MenMalGor-PRB-14}). The purpose of this paper is to further study the whole issue, investigating whether the SGS solution provides an accurate approximation for the SCE functional for spherically symmetric densities even for the cases in which it is not the true minimizer. After giving the needed basic definitions from optimal transport (Sec.~\ref{sec:OT}), we review and extend the findings of Colombo and Stra \cite{ColStr-MMMAS-16} in Secs.~\ref{sec:SGSrevisit}-\ref{sec:CounterExpl}, and we then investigate numerically cases in which SGS is not optimal (Sec.~\ref{sec:NumExpts}). 
Finally, under mild assumptions, we prove in Sec.~\ref{sec:non-optimal} that Eq.~\eqref{eq:nablav} still provides the functional derivative of $V_{ee}^{\rm SGS}[\rho]$, implying that SGS can be used as a meaningful Hartree-exchange-correlation potential in the KS equations, even when not optimal.


\section{Formulation as an OT problem}
\label{sec:OT}

In recent years, it has been realized that the problem posed by Eq.~\eqref{eq:defSCE} is equivalent to an optimal transport (OT) 
problem with Coulomb cost $C_{\rm Coul}$ \cite{ButDepGor-PRA-12,CotFriKlu-CPAM-13}. To grasp this reformulation, instead of the function $|\Psi(\rv_1,\rv_2)|^2$
for $N=2$ electrons, consider a probability distribution (or measure) $\gamma(\rv_1,\rv_2)\ge0$,
\beq
\iinteg{\rv_1}{\rv_2}\gamma(\rv_1,\rv_2)=1,
\label{gammaNorm}\eeq
with two (possibly different) given marginals $\eta_1$ and $\eta_2$,
\beq
\int d\rv_2\gamma(\rv,\rv_2)=\eta_1(\rv),\qquad\int d\rv_1\gamma(\rv_1,\rv)=\eta_2(\rv).
\label{eq:margs}\eeq
Let $\eta_1(\rv)$ be the original spatial (mass) density distribution of soil, to be transported to some final destination with 
given distribution $\eta_2(\rv)$. Moreover, let $C(\rv_1,\rv_2)$ be the (economical) cost for a mass element to be transported 
from position $\rv_1\in{\rm supp}(\eta_1)$ to $\rv_2\in{\rm supp}(\eta_2)$. Then, the expectation
\beq
\langle C\rangle_\gamma=\iinteg{\rv_1}{\rv_2}\gamma(\rv_1,\rv_2)C(\rv_1,\rv_2)
\label{eq:gammaC}\eeq
represents the total cost when the entire amount of soil is transported from $\eta_1$ to $\eta_2$ according to the particular 
``transport plan'' $\gamma$. OT theory attempts to determine an optimal $\gamma$ to minimize $\langle C\rangle_\gamma$ 
for the cost $C(\rv_1,\rv_2)=\vert \rv_1-\rv_2\vert^p$ with $p\geq 1$, searching for a solution to the Monge--Kantorovich $({\cal MK})$ problem
\beq
({\cal MK})\qquad\min_{\gamma\in\Pi(\R^{2d};\eta_1,\eta_2)}\langle C\rangle_\gamma.
\label{eq:MongeKantorovich}\eeq
Here, $\Pi(\R^{2d};\eta_1,\eta_2)$ denotes the set
%
%
\footnote{
$\Pi(\R^{2d};\eta_1,\eta_2)$ is a \emph{compact} set. Consequently, since ${\langle C\rangle_{\gamma}}$ is a linear (thus continuous) functional of $\gamma$, Eq.~\eqref{eq:MongeKantorovich} is truly a minimum, not only an infimum.
}
of all probability measures $\gamma$ on $\R^{2d}$ having the given marginals $\eta_1$ and $\eta_2$. 
Any $\gamma\in\Pi(\R^{2d};\eta_1,\eta_2)$ is specified by the probabilities $p_\Omega=\int_\Omega d\rv_1d\rv_2\gamma(\rv_1,\rv_2)$
it assignes to the subsets $\Omega\subseteq\R^{2d}$. Since not every $\gamma$ can be represented by a regular function 
$\gamma(\rv_1,\rv_2)$, see Eq.~\eqref{eq:gammaSCE} below as an example, we write Eq.~\eqref{gammaNorm}
as $\int d\gamma=1$ and, more generally, write
\beq
p_\Omega\equiv\int_\Omega d\rv_1d\rv_2\gamma(\rv_1,\rv_2)=\int_\Omega d\gamma.
\eeq
Correspondingly, Eq.~\eqref{eq:gammaC} is generally written as
\beq
\langle C\rangle_\gamma=\int C(\rv_1,\rv_2)\,d\gamma.
\label{eq:gammaCGen}\eeq
Moreover, we write Eqs.~\eqref{eq:margs} using the pushforward notation $\pi^\sharp_i$ (meaning integration over all variables but the $i^{\rm th}$),
\beq
\eta_k(\rv)=\pi^\sharp_k\gamma(\rv).
\label{gammaProbabil}\eeq
When the cost is separable, $C(\rv_1,\rv_2)=A(\rv_1)+B(\rv_2)$, with two functions $A,B:\R^d\to\R$, Eq.~\eqref{eq:gammaCGen} becomes
\ber
\langle C\rangle_\gamma&\equiv&\int\big[A(\rv_1)+B(\rv_2)\big]\,d\gamma\nonumber\\
&=&\int d\rv\,A(\rv)\,\eta_1(\rv)+\int d\rv\,B(\rv)\,\eta_2(\rv),
\label{eq:gammaCfgOLD}\eer
an expresion which depends on the two marginals $\eta_1$ and $\eta_2$ of $\gamma$ only, but not on $\gamma$ itself.

In 1781, Monge \cite{Mon-BOOK-1781} originally conjectured that the optimal transport plan $\gamma$ with cost 
$C_M(\rv_1,\rv_2) = \vert \rv_1-\rv_2\vert$ be {\em deterministic}, implying a ``transport map'' $\fv:\R^d\to\R^d$ that strictly 
determines the final position $\rv_2$ of each mass element by its initial one, $\rv_2=\fv(\rv_1)$. This was proven to be true by 
Brenier \cite{Bre-CPAM-91} for the cost $C_B(\rv_1,\rv_2) = \vert \rv_1-\rv_2\vert^2$ and, later, by Caffarelli, Feldman \& McCann 
\cite{CafFelMcC-JAMS-2002} and Trudinger \& Wang \cite{TruWan-CVPDE-2001} for the Monge cost $C_M$. For these costs, the optimal $\gamma$ is not a 
regular function of $(\rv_1,\rv_2)$. However, using physicist's Dirac's $\delta$-``function'' notation, such a $\gamma$ of the {\em Monge (or SCE)} type 
can be written as
\beq
\gamma(\rv_1,\rv_2)=\delta\big(\rv_2-\fv(\rv_1)\big)\,\eta_1(\rv_1)
\label{eq:gammaSCE}\eeq
[cf.~Eq.~\eqref{eq:psi2}], and Eq.~\eqref{eq:gammaC} becomes in this case
\beq
\langle C\rangle_\gamma=\int d\rv_1\,C\big(\rv_1,\fv(\rv_1)\big)\,\eta_1(\rv_1).
\label{eq:CexpectSCE}\eeq
Correspondingly, Eq.~\eqref{eq:MongeKantorovich} is the generalized version by Kantorovich (${\cal K}$) \cite{Kan-DAN-42} of 
Monge's original problem,
\beq
({\cal M})\qquad\min_{\fv\in F(\R^d;\eta_1,\eta_2)}\int d\rv\,C\big(\rv,\fv(\rv)\big)\,\eta_1(\rv).
\label{eq:Monge}\eeq
Here, $F(\R^d;\eta_1,\eta_2)$ denotes the set of all transport maps $\fv:\R^d\to\R^d$ that yield the given marginals $\eta_1$ and 
$\eta_2$,
\beq
\int d\rv_1\delta\big(\rv_2-\fv(\rv_1)\big)\,\eta_1(\rv_1)=\eta_2(\rv_2).
\eeq
In the special case $\eta_1=\eta_2\equiv\frac{\rho}2$ with identical marginals, and with the Coulomb 
cost of Eq.~\eqref{eq:defVee}, $C(\rv_1,\rv_2) = 1/\abs{\rv_1-\rv_2}$, we see that 
Eq.~\eqref{eq:CexpectSCE} becomes Eq.~\eqref{eq:VeeSCE2} with $N=2$,
\beq
\langle C_{\rm Coul}\rangle_{\gamma_{f}}=\frac12\int d\rv_1\,\frac{\rho(\rv_1)}{|\rv_1-\fv(\rv_1)|}.
\eeq
In particular, the optimization problem addressed in the lines following Eq.~\eqref{eq:VeeSCE2} is, in 
the case $N=2$, identical with Monge's problem $({\cal M})$, Eq.~\eqref{eq:Monge}.

It is known, however, that minimizers $\gamma$ of the Monge (or SCE) type of Eq.~\eqref{eq:gammaSCE} do 
not always occur.

Generalizing to probability measures $\gamma$ on $\R^{Nd}$, with $N$ given marginals $\eta_1,\dotsc,\eta_N$,
$\gamma\in\Pi(\R^{Nd};\eta_1,\dotsc,\eta_N)$, and considering the special case when all marginals are 
identical, $\eta_i(\rv)=\frac{\rho(\rv)}N$ for $i=1,\dotsc,N$, we see that Eq.~\eqref{eq:defSCE} defines a 
multi-marginal OT problem with Coulomb cost, $C=C_{\rm Coul}$,
\beq
V_{ee}^{\rm SIL}[\rho]=\min_{\gamma\in\Pi(\R^{Nd},\rho)}\langle C_{\rm Coul}\rangle_\gamma.
\label{eq:VeeSCE_OT}\eeq
Here, $\Pi(\R^{Nd},\rho)\equiv\Pi_{\rm sym}(\R^{Nd};\frac{\rho}N,\dotsc,\frac{\rho}N)$, including only 
measures $\gamma$ that are symmetric with respect to exchanging different coordinates $\rv_i$ and 
$\rv_j$ of identical particles. Eq.~\eqref{eq:gammaCfgOLD} for a separable cost 
$C=\sum_{i=1}^NA_i(\rv_i)$ now reads
\beq
\langle C\rangle_\gamma=\sum_{i=1}^N\int d\rv A_i(\rv)\frac{\rho(\rv)}N.
\label{eq:gammaCfg}\eeq
For brevity, we shall often write $\Pi(\R^{Nd},\rho)=\Pi(\rho)$.

\section{The radial problem and the SGS ansatz}
\label{sec:SGSrevisit}

In Ref.~\onlinecite{SeiGorSav-PRA-07}, SGS have suggested a possible solution $\{\fv_n^{\rm SGS}\}$ to the problem of 
Eq.~\eqref{eq:VeeSCE3}, see Eq.~\eqref{fvSGS} below, applicable to any density $\rho\in{\cal P}_{\rm RAD}$. ${\cal P}_{\rm RAD}$
denotes the set of all radially symmetric densities $\rho$ in $d$ dimensions with an arbitrary number $N$ of electrons,
\beq
\int_0^\infty dr\,J_d(r)\rho(r)\equiv\int_0^\infty dr\,\mu(r)=N.
\eeq
Here, $J_d(r)$ is the $d$-dimensional Jacobian, $J_3(r)=4\pi r^2$, $J_2(r)=2\pi r$, $J_1(r)=2$. 
To keep the notation simple, we shall mostly stick with the case $d=3$ in the following.

By ${\cal P}_{\rm SGS}$, we denote the set of all densities $\rho\in{\cal P}_{\rm RAD}$ for which the SGS solution is correct.
It is known that ${\cal P}_{\rm SGS} = {\cal P}_{\rm RAD}$ for $N=2$ \cite{CotFriKlu-CPAM-13}, and that
${\cal P}_{\rm SGS} \neq \emptyset$ for $N=3$ \cite{ColStr-MMMAS-16}.
In section \ref{sec:CounterExpl}, we shall see \cite{ColStr-MMMAS-16} that ${\cal P}_{\rm SGS}\ne{\cal P}_{\rm RAD}$ for $N=3$.


\subsection{The reduced cost and the radial problem}

Using spherical polar coordinates $\rv_n=(r_n,\theta_n,\phi_n)$, $n=1,\dotsc,N$, SGS in a first step define 
the reduced interaction (or reduced radial cost) as
\beq
V(r_1,\dotsc,r_N)=\min_{\Omega_1,\dotsc,\Omega_N}C_{\rm Coul}(\rv_1,\dotsc,\rv_N),
\label{Vspheric}\eeq
minimizing $C_{\rm Coul}$ at fixed radial coordinates $(r_1,\dotsc,r_N)$ with respect to all angular 
coordinates $\Omega_n\equiv(\theta_n,\phi_n)$. This step is completely independent of the density 
$\rho(r)$. Just as $V(r_1,\dotsc,r_N)$, the resulting minimizing angles $\theta_n$ and $\phi_n$ are 
universal functions of $(r_1,\dotsc,r_N)$,
\beq
\Omega_n(r_1,\dotsc,r_N)=\big(\theta_n(r_1,\dotsc,r_N),\;\phi_n(r_1,\dotsc,r_N)\big),
\label{angles}\eeq
when we fix, e.g., $\theta_1=\phi_1=\phi_2=0$.
These $2N-3$ optimal angles are the solution of the electrostatic equilibrium problem for $N$ neutral sticks of lengths 
$r_1,\dotsc,r_N$ having the same point charge $q$ glued at one end, and the other end fixed in the origin, 
in such a way that they are free to rotate in $d=3$ dimensions \cite{SeiGorSav-PRA-07}. 
Some properties \cite{ColStr-MMMAS-16} of the universal function $V(r_1,\dotsc,r_N)$ are  summarized in Appendix \ref{appEpot}.

In a second step, now considering the density $\rho$, SGS introduce radial co-motion functions $f^{\rm SGS}_n[\rho](r)$, see
Eqs. \eqref{fSGSdef1} and \eqref{fSGSdef2} below: When one electron has the radial coordinate $r_1=r$, then the radial coordinates 
of the remaining $N-1$ electrons ($n=2,\dotsc,N$) are given by
\beq
r_n=f^{\rm SGS}_n[\rho](r).
\eeq
For completeness, we introduce $f^{\rm SGS}_1[\rho](r)\equiv r$. Writing $f^{\rm SGS}_n[\rho](r)=f_n(r)$, the angular coordinates 
of all electrons are then fixed by the universal functions \eqref{angles},
\ber
\theta_n=\tilde{\theta}_n(r)&\equiv&\theta_n\big(f_1(r),\dotsc,f_N(r)\big),\nonumber\\
\phi_n=\tilde{\phi}_n(r)&\equiv&\phi_n\big(f_1(r),\dotsc,f_N(r)\big).
\label{anglesr}\eer
Formally, the full SGS vectorial co-motion functions can therefore be written as
\beq
\fv_n^{\rm SGS}[\rho](r)=\left(\begin{array}{c}f_n(r)\sin\tilde{\theta}_n(r)\cos\tilde{\phi}_n(r)\\
f_n(r)\sin\tilde{\theta}_n(r)\sin\tilde{\phi}_n(r)\\f_n(r)\cos\tilde{\theta}_n(r)\end{array}\right).
\label{fvSGS}
\eeq
Due to Eq.~\eqref{eq:fi}, the $f_n(r)$ must satisfy the differential equation
\beq
\mu(r)dr=\mu\big(f_n(r)\big)\,|f'_n(r)|\,dr,
\label{DEqfSGS}\eeq
where $\mu(r)=J_d(r)\rho(r)$.

\subsection{Functional $V^{\rm SGS}_{ee}[\rho]$ and potential $v_{\rm SGS}[\rho](r)$}

Writing $\fv_n^{\rm SGS}[\rho](r)=\fv_n(r)$, we obviously have
\beq
\sum_{i=1}^{N-1}\sum_{j=i+1}^N\frac1{|\fv_i(r)-\fv_j(r)|}=V\big(f_1(r),\dotsc,f_N(r)\big)
\eeq
and, due to Eq.~\eqref{eq:VeeSCE2}, the functional of Eq.~\eqref{VeeSGS} reads
\beq
V^{\rm SGS}_{ee}[\rho]=\int_0^\infty dr\,J_d(r)\,\frac{\rho(r)}N\,V\big(f_1(r),\dotsc,f_N(r)\big).
\label{VeeSGS1}\eeq
For a simplification, see Eq.~\eqref{VeeSCEfSGS} below.

By construction, the electrostatic force acting on the electron at position $\rv$, exerted by the other $N-1$ ones which occupy
the positions $\fv_n^{\rm SGS}[\rho](r)$ ($n=2,\dotsc,N$), points in the direction of $\rv$. Consequently, there is a central-force
potential $v_{\rm SGS}[\rho](r)$ with the property
\beq
\sum_{n=2}^N\frac{\rv-\fv_n^{\rm SGS}[\rho](r)}{|\rv-\fv_n^{\rm SGS}[\rho](r)|^3}=-\nabla v_{\rm SGS}[\rho](r).
\label{vSGSder}\eeq
Therefore, when $\rho\in{\cal P}_{\rm SGS}$ and the $\fv^{\rm SGS}_n[\rho](\rv)$ are minimizing in Eq.~\eqref{eq:VeeSCE3},
$v_{\rm SGS}[\rho](r)$ is the potential in Eq.~\eqref{eq:nablav},
\beq
\rho\in{\cal P}_{\rm SGS}:\qquad v_{\rm SCE}[\rho](\rv)=v_{\rm SGS}[\rho](r).
\eeq
Up to a constant, it can be evaluated via
\beq
v_{\rm SGS}[\rho](r)=\int_r^\infty ds
\left(\sum_{n=2}^N\frac{\sv-\fv_n^{\rm SGS}[\rho](s)}{|\sv-\fv_n^{\rm SGS}[\rho](s)|^3}\right)\cdot\frac{\sv}s.
\label{vSGSint}\eeq

For any $\rho\in{\cal P}_{\rm RAD}$, even when $\rho\notin{\cal P}_{\rm SGS}$,
the potential energy of Eq.~\eqref{eq:EpotCoulSCE} can be evaluated at the SGS positions 
$\rv_i=\fv^{\rm SGS}_i(r)$, when the potential $v_{\rm SGS}[\rho](r)$ is used as a model for $v_{\rm SCE}[\rho](r)$,
\ber
\widetilde{E}(r)&=&E_{\rm pot}[\rho]\big(\fv^{\rm SGS}_1(r),\dotsc,\fv^{\rm SGS}_N(r)\big)\nonumber\\
&=&V\big(f^{\rm SGS}_1(r),\dotsc,f^{\rm SGS}_N(r)\big)-\sum_{i=1}^Nv_{\rm SGS}[\rho](r_i).\qquad
\label{constEpot}\eer
Since $\frac{d}{dr}\widetilde{E}(r)=0$, see Eq.~\eqref{proofEpot} of Appendix \ref{appEpot}, this quantity is in fact 
constant on the $d$-dimensional set
\beq
\Omega^{\rm SGS}_\rho\equiv\Big\{\Big(\fv^{\rm SGS}_1[\rho](r),\dotsc,\fv^{\rm SGS}_N[\rho](r)\Big)\Big|\rho(r)\ne0\Big\}.
\label{Omega^SGS_rho}\eeq
However, it is not always minimum there, see section \ref{secHessian}, indicating that the SGS solution is not always optimal.

\subsection{Construction of the radial co-motion functions}

We recall and review the construction of the functions $f^{\rm SGS}_n[\rho](r)$, Eqs.~(44--46) in
Ref.~\onlinecite{SeiGorSav-PRA-07}, clarifying some issues, such as the fulfillement of the group properties. 
As a first step, in terms of the radial cumulative distribution function
\beq
\Ne(r)=\int_0^r ds\,J_d(s)\,\rho(s)
\label{defNe}\eeq
and its inverse $\Re(\nu)=\Ne^{-1}(\nu)$, we define the radii
\beq
a_k=\Re(k)\qquad(k=0,1,\dotsc,N).
\label{defRadiian}\eeq
For densities supported on the whole $\R^d$ , we have $a_0=0$ and $a_N=\infty$, but we will consider later also densities with compact support.

Satisfying Eq.~\eqref{DEqfSGS}, we define for even $n\in\{2,\dotsc,N\}$
\beq
f_n^{\rm SGS}[\rho](r)=
\left\{\begin{array}{lcc}\Re\big(n-\Ne(r)\big)&&r\le a_n,\\\Re\big(\Ne(r)-n\big)&&r\ge a_n.\end{array}\right.
\label{fSGSdef1}\eeq
Since $r\le a_N$, this implies $f_N^{\rm SGS}[\rho](r)=\Re\big(N-\Ne(r)\big)$ when $N$ is even.
For odd $n\in\{1,\dotsc,N\}$, we define
\beq
f_n^{\rm SGS}[\rho](r)=
\left\{\begin{array}{lcc}\Re\big(\Ne(r)+n-1\big)&&r<a_{N-n+1},\\\Re\big(2N+1-n-\Ne(r)\big)&&r>a_{N-n+1},\end{array}\right.
\label{fSGSdef2}\eeq
generally implying that $f_1^{\rm SGS}[\rho](r)=r$.


\begin{figure}[t]
\includegraphics[width=\columnwidth]{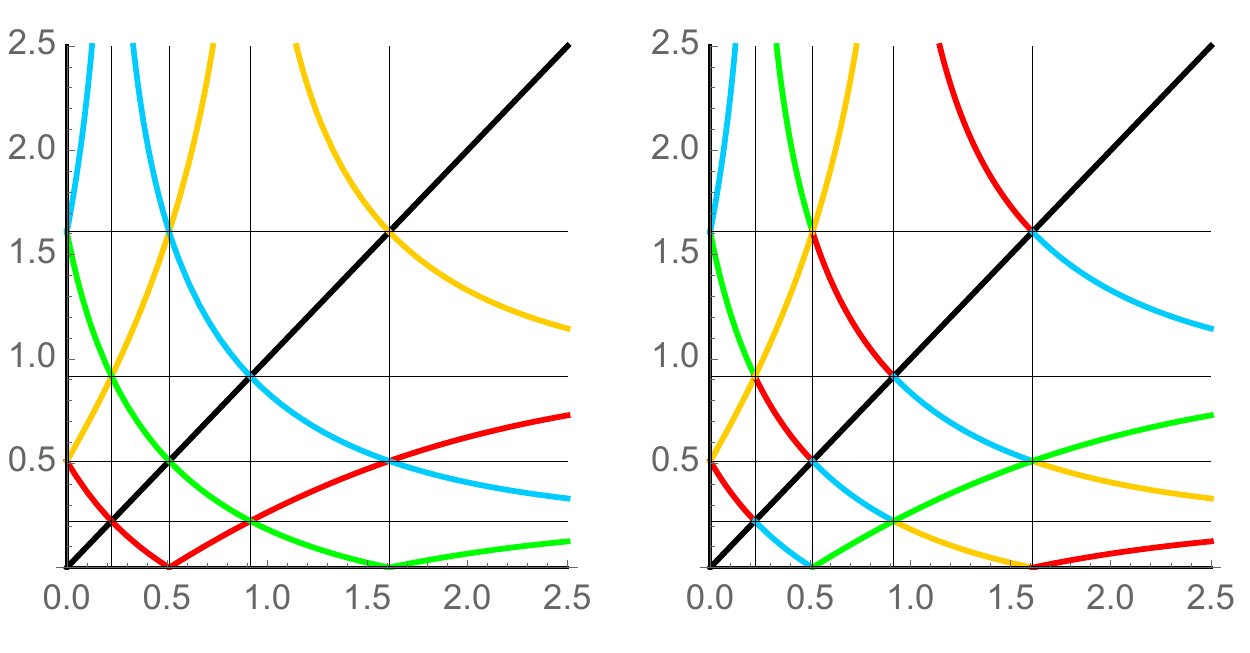} 
\caption{The radial co-motion functions $f_n^{\rm SGS}(r)$ from Eq.~\eqref{FexSGS} (left panel, cf.~Fig.~6 in 
Ref.~\cite{SeiGorSav-PRA-07}) and the equivalent ones, $g_n^{\rm SGS}(r)$, Eq.~\eqref{fnSGS2} (right panel),
for the density of Eq.~\eqref{denFig1}, with $N=5$ electrons in $d=3$ dimensions.
Colors: Black, red, yellow, green, blue, respectively, for $n=1,2,3,4,5$.}
\label{fig:F}
\end{figure}

As an example in $d=3$, consider the density
\beq
\rho(r)=\frac{N}{4\pi}\frac{e^{-r}}{r^2},
\label{denFig1}\eeq
for $N$ electrons. In this case, $a_0=0$, $a_N=\infty$, and
\beq
\Ne(r)=N(1-e^{-r}),\qquad\Re(\nu)=-\log\Big(1-\frac{\nu}N\Big).
\eeq
For $N=5$, Eqs.~\eqref{fSGSdef1} and \eqref{fSGSdef2} yield the functions
\begin{subequations}\label{FexSGS}
\begin{align}
f_1^{\rm SGS}[\rho](r)	&\equiv \hphantom{\biggl\{} r, \\
f_2^{\rm SGS}[\rho](r) &= \begin{cases}
-\log\big(\frac85-e^{-r}\big)	&\big(r\le\log\frac53\big), \\
-\log\big(\frac25+e^{-r}\big)	&\big(r\ge\log\frac53\big),
\end{cases} \\
f_3^{\rm SGS}[\rho](r) &= \begin{cases}
-\log\big(e^{-r}-\frac25\big)	&\big(r<\log\frac52\big), \\
-\log\big(\frac25-e^{-r}\big)	&\big(r>\log\frac52\big),
\end{cases} \\
f_4^{\rm SGS}[\rho](r) &= \begin{cases}
-\log\big(\frac65-e^{-r}\big)	&\big(r\le\log5\big), \\
-\log\big(\frac45+e^{-r}\big)	&\big(r\ge\log5\big),
\end{cases} \\
f_5^{\rm SGS}[\rho](r) &= \begin{cases}
-\log\big(e^{-r}-\frac45\big)	&\big(r<\log\frac54\big), \\
-\log\big(\frac45-e^{-r}\big)	&\big(r>\log\frac54\big),
\end{cases}
\end{align}
\end{subequations}
plotted in the left panel of Fig.~\ref{fig:F}.

Fixed solely by the radial density profile  $\rho(r)$, the $f^{\rm SGS}_n[\rho](r)$ can be obtained without knowing the angles 
\eqref{angles}. Notice that each spherical shell $a_{k-1}\le r<a_k$, $k=1,\dotsc,N$ always contains exactly one electron.

\subsection{Group relations}
\label{gSGS}

While the $f^{\rm SGS}_n(r)$ are continuous, see Fig.~\ref{fig:F}, we may also consider modified radial co-motion 
functions $g^{\rm SGS}_n(r)$ that explicitly satisfy the group relations of Eq.~\eqref{eq:groupprop},
\ber
g^{\rm SGS}_1(r)&\equiv&r,\nonumber\\
g^{\rm SGS}_2(r)&=&g(r),\nonumber\\
g^{\rm SGS}_3(r)&=&g(g(r)),\quad\text{etc.}
\label{fnSGS2}\eer
The elementary co-motion function $g(r)$ here is defined piecewise on each radial 
interval $I_k=(a_{k-1},a_k)$, with $k=1,\dotsc,N$: For $k<N$, we generally define
\ber
g(r)=\Re\big(2k-\Ne(r)\big),\quad r\in I_k\quad(k<N).
\label{piecewise1}\eer
For $k=N$, we distinguish even from odd values of $N$,
\beq
g(r)=\left\{\begin{array}{l@{\quad}r}\Re\big(N-\Ne(r)\big)&(\text{$N$ even})\\
\Re\big(\Ne(r)-N+1\big)&(\text{$N$ odd})\end{array}\right\},\quad r\in I_N.
\label{piecewise2}\eeq
We see that $g$ maps $I_k$ to $I_{k+1}$ ($k<N$) and $I_N$ to $I_1$.

The $g^{\rm SGS}_n(r)$ are equivalent to the $f^{\rm SGS}_n(r)$, see Fig.~\ref{fig:F}, in the sense that for all $r\in[a_0,a_N]$ 
we have
\beq
\big\{g^{\rm SGS}_1(r),\dotsc,g^{\rm SGS}_N(r)\big\}=\big\{f^{\rm SGS}_1(r),\dotsc,f^{\rm SGS}_N(r)\big\}.
\eeq

\subsection{A simple consequence}

Due to Eqs.~\eqref{fSGSdef1} and \eqref{fSGSdef2}, the function $f_n^{\rm SGS}[\rho](r)\equiv f_n(r)$ maps the
interval $I_1=[a_0,a_1]$ onto $I_n=[a_{n-1},a_n]$, either monotonically or anti-monotonically,
\beq
r\in[a_0,a_1]:\qquad|f'_n(r)|=(-1)^{n+1}f'_n(r)
\label{signDiffEqnSGS}\eeq
Consequently, for any function $U(r)$, Eq.~\eqref{DEqfSGS} implies
\ber
\int_{a_0}^{a_1}U\big(f_n(r)\big)\mu_d(r)dr
&=&(-1)^{n+1}\int_{f_n(a_0)}^{f_n(a_1)}U(s)\mu_d(s)ds\nonumber\\
&=&\int_{a_{n-1}}^{a_n}U(s)\mu_d(s)ds.
\label{identityI1In}\eer
For $U(r)=V(f_1(r),\dotsc,f_N(r))$, Eq.~\eqref{identityI1In} yields
\ber
\int_{a_{n-1}}^{a_n}V\big(f_1(s),\dotsc,f_N(s)\big)\mu_d(s)ds=\nonumber\\
\int_{a_0}^{a_1}V\big(f_1(r),\dotsc,f_N(r)\big)\mu_d(r)dr,
\eer
where we have used the symmetry \eqref{symmV} of the function $V(r_1,\dotsc,r_N)$ and the fact that $\{f_k(f_n(r))\}_{k=1,\dotsc,N}$ 
is a permutation of $\{f_k(r)\}_{k=1,\dotsc,N}$. Consequently, Eq.~\eqref{VeeSGS1} can be written as \cite{SeiGorSav-PRA-07}
\beq
V^{\rm SGS}_{ee}[\rho]=\int_{a_0}^{a_1}dr\,J_d(r)\,\rho(r)\,V\big(f_1(r),\dotsc,f_N(r)\big).
\label{VeeSCEfSGS}\eeq

\section{Counterexample to the SGS solution}
\label{sec:CounterExpl}
Let ${\cal P}_{\rm RAD}$ again be the set of all radially symmetric densities in $d$ dimensions. 
There is no criterion yet for the subsets ${\cal P}_{\rm SGS}\subset{\cal P}_{\rm SCE}\subset{\cal P}_{\rm RAD}$, where
${\cal P}_{\rm SCE}$ only comprises spherically symmetric densities for which in Eq.~\eqref{eq:VeeSCE3} the infimum is a minimum ({\it i.e.}, there is an SCE-type minimizer).
If such a minimizer has the SGS co-motion functions, the density belongs to ${\cal P}_{\rm SGS}$.

As a counterexample $\rho \notin{\cal P}_{\rm SGS}$, we consider for $N=3$ electrons in $d=3$ the spherical density
\beq
\rho_{a,\varepsilon}(r) = \begin{cases}
\frac3{4\pi a\varepsilon}\frac1{r^2}	&(a\le r\le b), \\
0							&(\text{elsewhere}),
\end{cases}
\label{denCex}\eeq
with two independent parameters $a,\varepsilon>0$ and 
\beq
b=(1+\varepsilon)a\equiv a+\varepsilon a.
\eeq
For sufficiently small $\varepsilon>0$, we shall see that $\rho_{a,\varepsilon}\notin{\cal P}_{\rm SGS}$. More precisely,
we shall find
\beq
V_{ee}^{\rm SGS}[\rho_{a,\varepsilon}]=
V_{ee}^{\rm SIL}[\rho_{a,\varepsilon}]+\frac{\sqrt{3}}{1080}\frac{\varepsilon^2}a+O(\varepsilon^3).
\eeq

\subsection{The SGS solution}

The density of Eq.~\eqref{denCex} describes $N=3$ electrons, distributed within a radial shell with inner radius $a_0=a$ and outer 
radius $a_3=b$. In this case, Eq.~\eqref{defNe} yields the radial distribution function
\beq
\Ne(r)= \begin{cases}
0					&(r\le a), \\
\frac3{\varepsilon a}(r-a)	&(a\le r\le b), \\
3					&(r\ge b),
\end{cases}
\eeq
implying the intermediate radii $a_1=(1+\frac{1}{3}\varepsilon)a$ and $a_2=(1+\frac{2}{3}\varepsilon)a$,
and the SGS radial co-motion functions
\begin{subequations}\label{FcexSGS}
\begin{align}
f_1^{\rm SGS}(r)	&\equiv \hphantom{\biggl\{} r, \\
f_2^{\rm SGS}(r) &= \begin{cases}
a_2-(r-a)	&(a\le r\le a_2), \\
a+(r-a_2)	&(a_2\le r\le b),
\end{cases} \\
f_3^{\rm SGS}(r) &=\begin{cases}
a_2+(r-a)	&(a\le r\le a_1), \\
b-(r-a_1)	&(a_1\le r\le b).
\end{cases}
\end{align}
\end{subequations}
These functions, along with the corresponding equivalent functions $g^{\rm SGS}_n(r)$ of Eq.~\eqref{fnSGS2}, are plotted in 
Fig.~\ref{fig:FcexSGS}.

\begin{figure}
\includegraphics[width=\columnwidth]{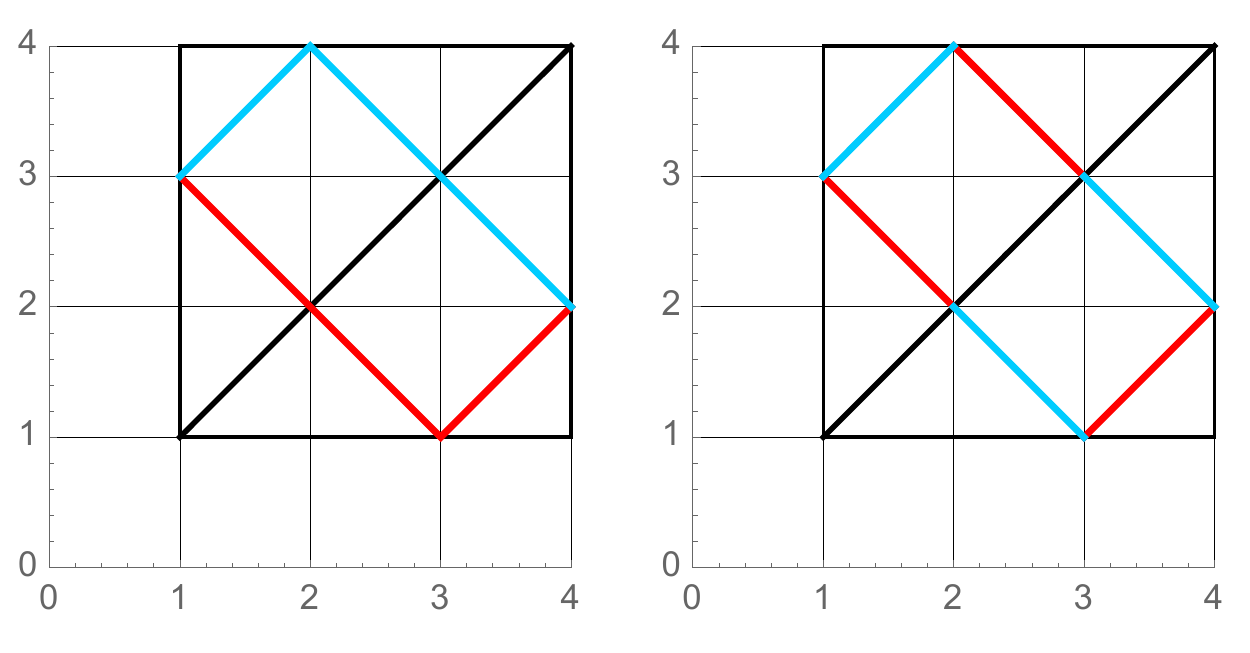} 
\caption{
The SGS co-motion functions $f_n^{\rm SGS}(r)$ (left panel), Eq.~\eqref{FcexSGS}, and the equivalent ones $g^{\rm SGS}_n(r)$ 
(right panel) for the density of Eq.~\eqref{denCex}, with $a=1$ and $\varepsilon=3$. Colors: Black, red, and blue for $n=1,2,3$, 
respectively. (Cf.~Fig.~\ref{fig:F}.)
}
\label{fig:FcexSGS}
\end{figure}

\subsubsection{The expectation $\langle\hat{V}_{ee}\rangle$}

With these functions in Eq.~\eqref{VeeSCEfSGS}, we obtain
\begin{multline}
\begin{aligned}
V_{ee}^{\rm SGS}[\rho_{a,\varepsilon}]
&=\binteg{r}{a}{a_1}\frac3{\varepsilon a}\,V\Big(r,\,a_2-r+a,\,a_2+r-a\Big) \\
&=\frac{3}{a}\binteg{x}{0}{1/3} V\Big(1+\varepsilon x, \\
\end{aligned} \\
1 + \varepsilon\bigl(\tfrac{2}{3} - x\bigr), 1 + \varepsilon\bigl(\tfrac{2}{3}+x\bigr)\Big).
\label{VeeSGSCex}
\end{multline}
Here we have substituted $r=a(1+\varepsilon x)$ and used the scaling property $V(ar_1,ar_2,ar_3)=\frac1aV(r_1,r_2,r_3)$.
This integral can be evaluated for different values $\varepsilon$ when the minimization of Eq.~\eqref{Vspheric} is 
performed numerically, cf. Eq. \eqref{eq:minTheta12_0}. The result is reported in Fig.~\ref{fig:SGS} (blue dots) as a function of $\varepsilon$.

For small $\varepsilon\to0$, we may use the expansion \eqref{TaylorV} of the function $V$ in Appendix \ref{appEpot}
(setting $a=1$ there) and integrate analytically in Eq.~\eqref{VeeSGSCex},
\ber
V_{ee}^{\rm SGS}[\rho_{a,\varepsilon}]
=\frac{\sqrt{3}}a\Big[1-\frac{\varepsilon}2+\frac7{27}\,\varepsilon^2+O(\varepsilon^3)\Big].
\label{VeeSGSCexExpan}\eer
As $\varepsilon\to\infty$, Eq.~\eqref{VeeSGSCex} asymptotically becomes
\begin{align}
V_{ee}^{\rm SGS}[\rho_{a,\varepsilon}]
&\to \frac3a\binteg{x}{0}{1/3}
V\Big(\varepsilon x, \varepsilon\bigl(\tfrac{2}{3} - x\bigr), 
\varepsilon\bigl(\tfrac{2}{3} + x\bigr)\Big) \notag \\
&= \frac{3}{a\,\varepsilon}\binteg{x}{0}{1/3} V\Big(x, \tfrac{2}{3}-x, \tfrac{2}{3}+x\Big) \notag \\
&= \frac{3.559}{a\,\varepsilon}.
\label{VeeSGSCexAsymp}
\end{align}
The expansions \eqref{VeeSGSCexExpan} and \eqref{VeeSGSCexAsymp} are plotted in Fig.~\ref{fig:SGS} as solid curves.

\begin{figure}
\includegraphics[width=0.9\columnwidth]{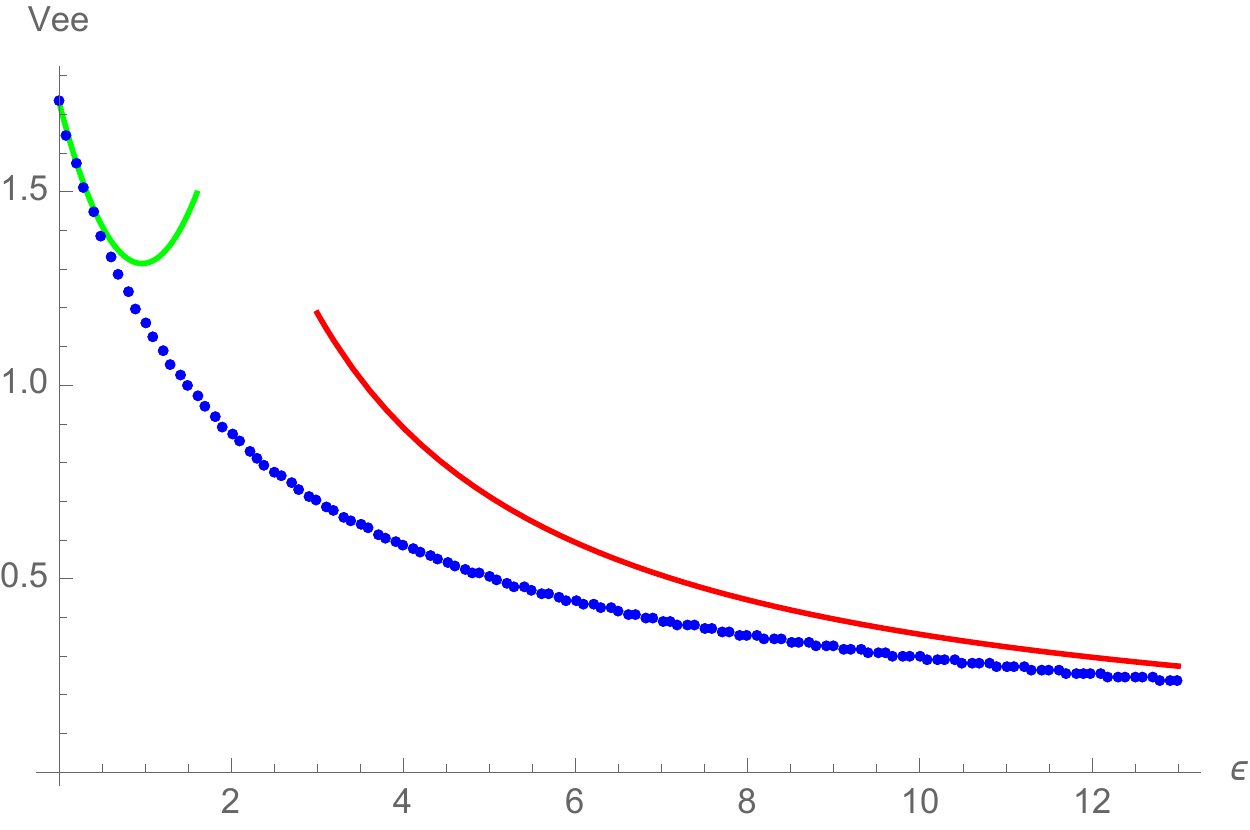} 
\caption{
The values $V_{ee}^{\rm SGS}[\rho_{a,\varepsilon}]$ of Eq.~\eqref{VeeSGSCex}, plotted for $a=1$ versus $\varepsilon$ (blue dots). 
Solid curves: The expansions \eqref{VeeSGSCexExpan} for small $\varepsilon$ (green) and \eqref{VeeSGSCexAsymp} for large $\varepsilon$ (red).
}
\label{fig:SGS}
\end{figure}

\subsubsection{Hessian matrix of classical potential energy}
\label{secHessian}

A necessary (but not sufficient) condition for $\rho\in{\cal P}_{\rm SGS}$, is that the potential energy of Eq.~\eqref{eq:EpotCoulSCE} must have a minimum \cite{GorVigSei-JCTC-09} on the $d$-dimensional set 
$\Omega^{\rm SGS}_\rho$ of Eq.~\eqref{Omega^SGS_rho}. We shall now show that this condition is violated for the density $\rho_{a,\varepsilon}$.

In the case $N=3$, the simplest choice in Eq.~\eqref{angles} is fixing $\theta_1=\theta_2=\frac{\pi}2$ and $\phi_1=0$.
Then, Eq.~\eqref{anglesr} implies two numerical functions $\tilde{\phi}_2(r)$ and $\tilde{\phi}_3(r)$, plus 
$\tilde{\theta}_3(r)\equiv\frac{\pi}2$, confining the positions $\fv_n^{\rm SGS}[\rho_{a,\varepsilon}](r)$ of
Eq.~\eqref{fvSGS} to the $xy$-plane.  Eq.~\eqref{vSGSint} for $U(r)=v_{\rm SGS}[\rho](r)$ now yields
\begin{multline}
U'(r)=\\-\sum_{n=2}^3\frac{r-f_n(r)\cos\tilde{\phi}_n(r)}{\bigg[\Big(r-f_n(r)\cos\tilde{\phi}_n(r)\Big)^2
+\Big(f_n(r)\sin\tilde{\phi}_n(r)\Big)^2\bigg]^{3/2}}.
\end{multline}
This function and its derivative $U''(r)$ are readily evaluated numerically.

For simplicity, we treat the problem in 2D, confining the position vectors $\rv_1,\rv_2,\rv_3$ in Eq.~\eqref{eq:EpotCoulSCE} to 
the $xy$-plane. For the full 3D treatment, see Appendix \ref{Hess3D}.

In terms of the polar coordinates $\{r_n,\phi_n\}_{n=1,\dotsc,N}$ of the $N=3$ electrons in the $xy$-plane, the potential energy 
function of Eq.~\eqref{eq:EpotCoulSCE} for a radial density $\rho(r)$ reads
\ber
E^{\rm SGS}_{\rm pot}[\rho](\rv_1,\rv_2,\rv_3)&=&{\cal C}(\{r_n,\phi_n\})-\sum_{i=1}^3U(r_i)\nonumber\\
&\equiv&{\cal E}(\{r_n,\phi_n\}).
\label{EpotEcal}\eer
Here, ${\cal C}$ represents the Coulomb interaction $C_{\rm Coul}$,
\begin{multline}
{\cal C}(\{r_n,\phi_n\})\equiv\frac1{|\rv_1-\rv_2|}+\frac1{|\rv_1-\rv_3|}+\frac1{|\rv_2-\rv_3|}=\\
\sum_{i=1}^{2}\sum_{j=i+1}^3\Big[r_i^2-2r_ir_j\cos(\phi_i-\phi_j)+r_j^2\Big]^{-1/2}.
\label{CsecHess}\end{multline}

Writing $(r_1,r_2,r_3,\phi_1,\phi_2,\phi_3)=(q_1,\dotsc,q_6)\equiv q$, the function ${\cal E}(q)$ should be minimum for
$q=q(r)$, where
\ber
q(r)&=&\Big(r,f_2(r),f_3(r),0,\tilde{\phi}_2(r),\tilde{\phi}_3(r)\Big)\nonumber\\
&=&\Big(q_1(r),\dotsc,q_6(r)\Big).
\eer
Consequently, in the Taylor expansion
\begin{multline}
{\cal E}(q)={\cal E}\big(q(r)\big)+\\
\frac12\sum_{\alpha,\beta=1}^6H_{\alpha\beta}(r)\big(q_\alpha-q_\alpha(r)\big)\big(q_\beta-q_\beta(r)\big)+ \dotsb,
\label{TaylorEpot}\end{multline}
the Hessian matrix $H=H(r)$, with the elements
\beq
H_{\alpha\beta}(r)\equiv\frac{\partial^2{\cal E}(q)}{\partial q_\alpha\partial q_\beta}\Big|_{q=q(r)},
\label{elemHess}\eeq
should have non-negative eigenvalues only, namely it should have zero eigenvalues in the directions tangential to the manyfold $\Omega^{\rm SGS}_\rho$ of Eq.~\eqref{Omega^SGS_rho}, and positive eigenvalues in directions orthogonal to it \cite{GorVigSei-JCTC-09}.

In Ref.~\onlinecite{GorVigSei-JCTC-09} the effect of the electronic kinetic energy in the SIL has been added perturbatively, considering zero-point quantum oscillations around the SCE minimum. Introducing the diagonal matrix
\beq
M(r)={\rm diag}\big(1,1,1,r,f_2(r),f_3(r)\big),
\eeq
we switch from the coordinates $\delta q=q-q(r)$ to true lengths $u=M\delta q$. Here, $u_n$ and $u_{n+3}$, respectively, are the 
distances on the $xy$-plane travelled by particle $n$ in radial ($r_n$-) and in azimuthal ($\phi_n$-) direction, when $q$ changes 
from $q(r)$ to $q(r)+\delta q$. In matrix notation, the quadratic form in Eq.~\eqref{TaylorEpot} now reads
\beq
\frac12(\delta q)^TH(r)(\delta q)=\frac12u^TK(r)u,
\eeq
with the new matrix $K(r)=M^{-1}HM^{-1}$. Consequently, the classical equations of motion for the lengths $u$ read
\beq
m_e\ddot{u}=-Ku
\eeq
(where $m_e$ is the electron mass), with the eigenmodes
\beq
u(t)=e_\alpha\sin(\omega_\alpha t)\qquad(\alpha=1,\dotsc,6).
\eeq
Here, $e_\alpha$ are the eigenvectors of $K=K(r)$,
\beq
Ke_\alpha=m_e\,\omega_\alpha^2\,e_\alpha\qquad(\alpha=1,\dotsc,6).
\label{Kmatrix}\eeq

\begin{figure}
\includegraphics[width=0.8\columnwidth]{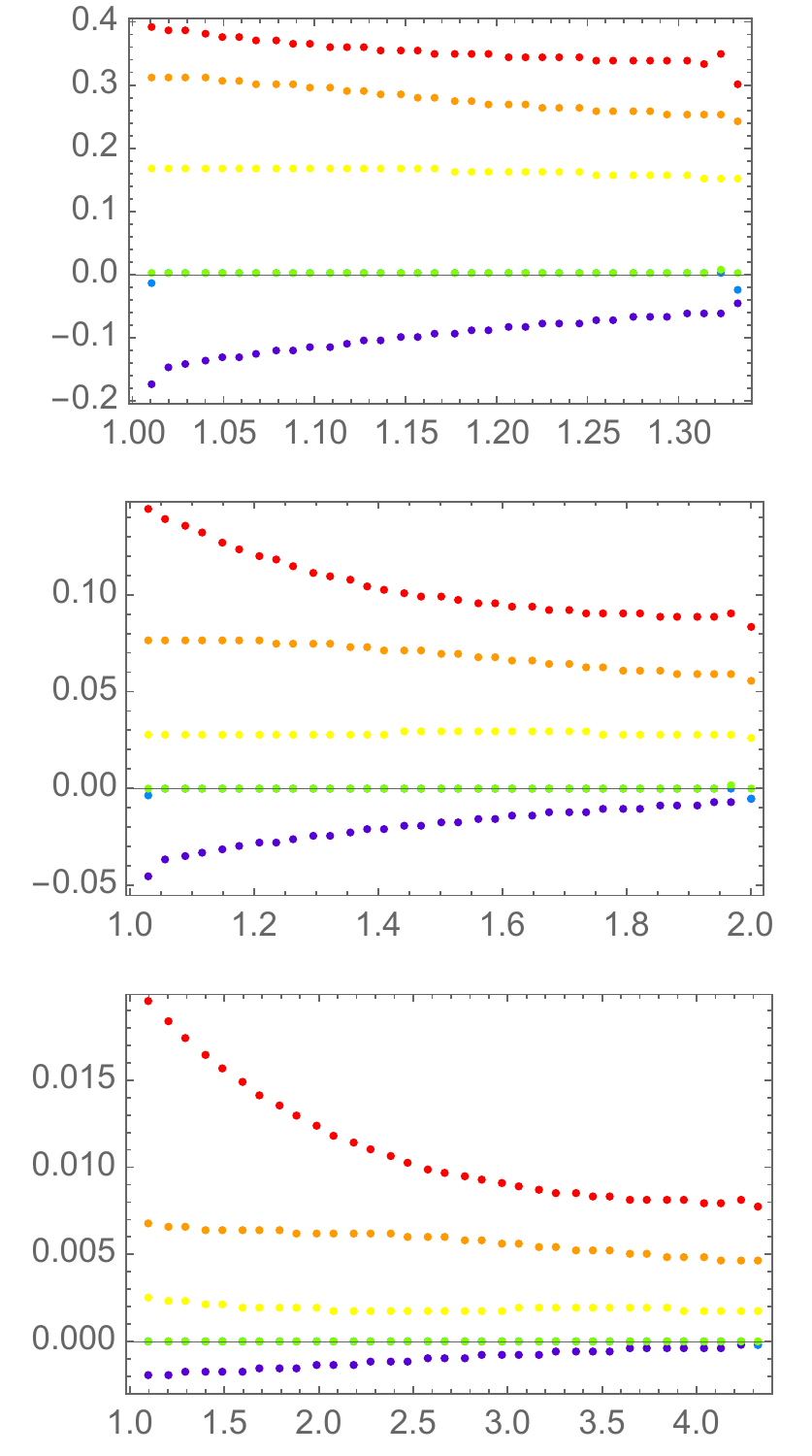} 
\caption{
The eigenvalues $m_e\,\omega_\alpha(r)^2$ of the matrix $K=K(r)$ in Eq.~\eqref{Kmatrix}, plotted versus $r\in[a_0,a_1]$,
setting $a\equiv a_0=1$, for $\varepsilon=1$ (top), $\varepsilon=3$ (center), and $\varepsilon=10$ (bottom).
}
\label{fig:HessianEVal.pdf}
\end{figure}

The six eigenvalues $m_e\,\omega_\alpha(r)^2$ of $K(r)$ are plotted in Fig.~\ref{fig:HessianEVal.pdf} as functions of 
$r$ for three selected values of $\varepsilon$. We see that for each value of $r$, there are always three positive (red, orange, yellow) and two 
zero eigenvalues (green and, hidden, blue). In addition, there is always a {\em negative} sixth eigenvalue $m_e\,\omega_6(r)^2<0$ 
(violet), indicating that ${\cal E}(q)$ does not have a minimum on the manyfold $q=q(r)$, revealing that the SGS solution is not optimal for this density. We also notice that the negative eignevalue becomes relatively smaller in magnitude as $\varepsilon$ increases.

Eigenmodes $e_\alpha$ with positive eigenvalues $m_e\,\omega_\alpha^2$ describe zero-point oscillations (with angular frequency 
$\omega_\alpha$) of strongly correlated electrons about the strictly correlated limit \cite{Sei-PRA-99,GorVigSei-JCTC-09}. 
In 2D, two eigenmodes $e_\alpha$ with zero eigenvalues (describing classical motion at constant potential energy) must be expected:
either a collective (rigid) 2D rotation of the electrons about the origin ($\alpha=4$) or a collective motion in accordance with the 
co-motion functions ($\alpha=5$), see Eqs.~\eqref{constEpot} and \eqref{proofEpot}. 

The corresponding 3D analysis (see Appendix \ref{Hess3D}) yields the same six eigenvalues as in 2D (including the negative one),
plus two additional zero eigenvalues (since there are two more rotational degrees of freedom in 3D), plus one additional positive
eigenvalue.

\subsection{Fractal (FRC) co-motion functions}

We now show that, for small $\varepsilon$, a lower expectation of the Coulomb cost (interaction energy) than the SGS one of Eq.~\eqref{VeeSGSCexExpan} can be obtained by using {\em fractal} (FRC) co-motion functions. Thus, considering the fractal function $S:[0,1]\to[0,1]$ from Appendix \ref{appS} for the case $N=3$, we construct, for the same density $\rho_{a,\varepsilon}(r)$ of Eq.~\eqref{denCex}, the radial co-motion functions
\begin{subequations}\label{FcexFRC}
\begin{align}
f^{\rm FRC}_1(r) &\equiv r, \\
f^{\rm FRC}_2(r) &= a+\varepsilon a\cdot S\Bigl(\frac{r-a}{\varepsilon a}\Bigr), \\
f^{\rm FRC}_3(r) &= a+\varepsilon a\cdot S\Bigl(S\Bigl(\frac{r-a}{\varepsilon a}\Bigr)\Bigr).
\end{align}
\label{fFRCdef}\end{subequations}
Due to Eq.~\eqref{eq:SnN}, these fractal functions satisfy the group relations of section \ref{gSGS}.
Since $x+S(x)+S(S(x))\equiv\frac32$, see Eq.~\eqref{eq:sumSn}, they add up to a constant,
\beq
\sum_{n=1}^3f^{\rm FRC}_n(r)\;\equiv\;3\frac{a+b}2\qquad(a\le r\le b).
\label{FcexFRCconst}\eeq
For the case $a=1$, $b=4$, they are plotted in Fig.~\ref{fig:FcexFRC}. Being not differentiable at any point,
they cannot satisfy the basic differential equation \eqref{DEqfSGS}. Nevertheless, they are consistent with the density 
$\rho_{a,\varepsilon}(r)$, see Appendix \ref{sec:SCEsystematic}. Replacing in Eq.~\eqref{VeeSGS} the SGS co-motion functions
with the FRC ones, we obtain formally
\begin{multline}
\begin{aligned}
V_{ee}^{\rm FRC}(a,\varepsilon)
&\equiv\widetilde{V}^{\rm SCE}_{ee}\big[\rho_{a,\varepsilon},\{\fv^{\rm FRC}_n[\rho_{a,\varepsilon}]\}\big] \\
&= \binteg{r}{a}{a_1}\frac{3}{\varepsilon a}\,V\Bigl(r,\,f^{\rm FRC}_2(r),\,f^{\rm FRC}_3(r)\Bigr) \\
&= \frac{3}{a}\binteg{x}{0}{1/3} V\Bigl(1+\varepsilon x,
\end{aligned} \\
1+\varepsilon S(x),\,1+\varepsilon S\bigl(S(x)\bigr)\Bigr).
\label{intS0}
\end{multline}
As the function $S(x)$ is highly discontinuous, this integral requires some care. Below, we shall find the expression
\begin{multline}
V_{ee}^{\rm FRC}(a,\varepsilon)
= \frac{3}{a}\lim_{k\to\infty}\frac{1}{3^k}\sum_{\crampedclap{m=1}}^{\crampedclap{3^{k-1}}}
V\Bigl(1+\varepsilon\langle x\rangle_m, \\
1+\varepsilon\bigl\langle S(x)\bigr\rangle_m,
1+\varepsilon\bigl\langle S\bigl(S(x)\bigr)\bigr\rangle_m\Bigr),
\label{intS1}
\end{multline}
using Eq.~\eqref{QintFRC} with $Q(x,y,z)=\frac3aV(1+\varepsilon x,1+\varepsilon y,1+\varepsilon z)$. Approximating the limit $k\to\infty$ 
by the finite value $k=5$, we find, for $0<\varepsilon<8.6$, that $V_{ee}^{\rm FRC}(a,\varepsilon)$ is slighlty lower than 
$V_{ee}^{\rm SGS}[\rho_{a,\varepsilon}]\approx1.0$, as shown in Fig.~\ref{fig:Diff}, where we report the difference $V_{ee}^{\rm FRC}(a,\varepsilon)-V_{ee}^{\rm SGS}[\rho_{a,\varepsilon}]$.

\begin{figure}
\includegraphics[width=0.8\columnwidth]{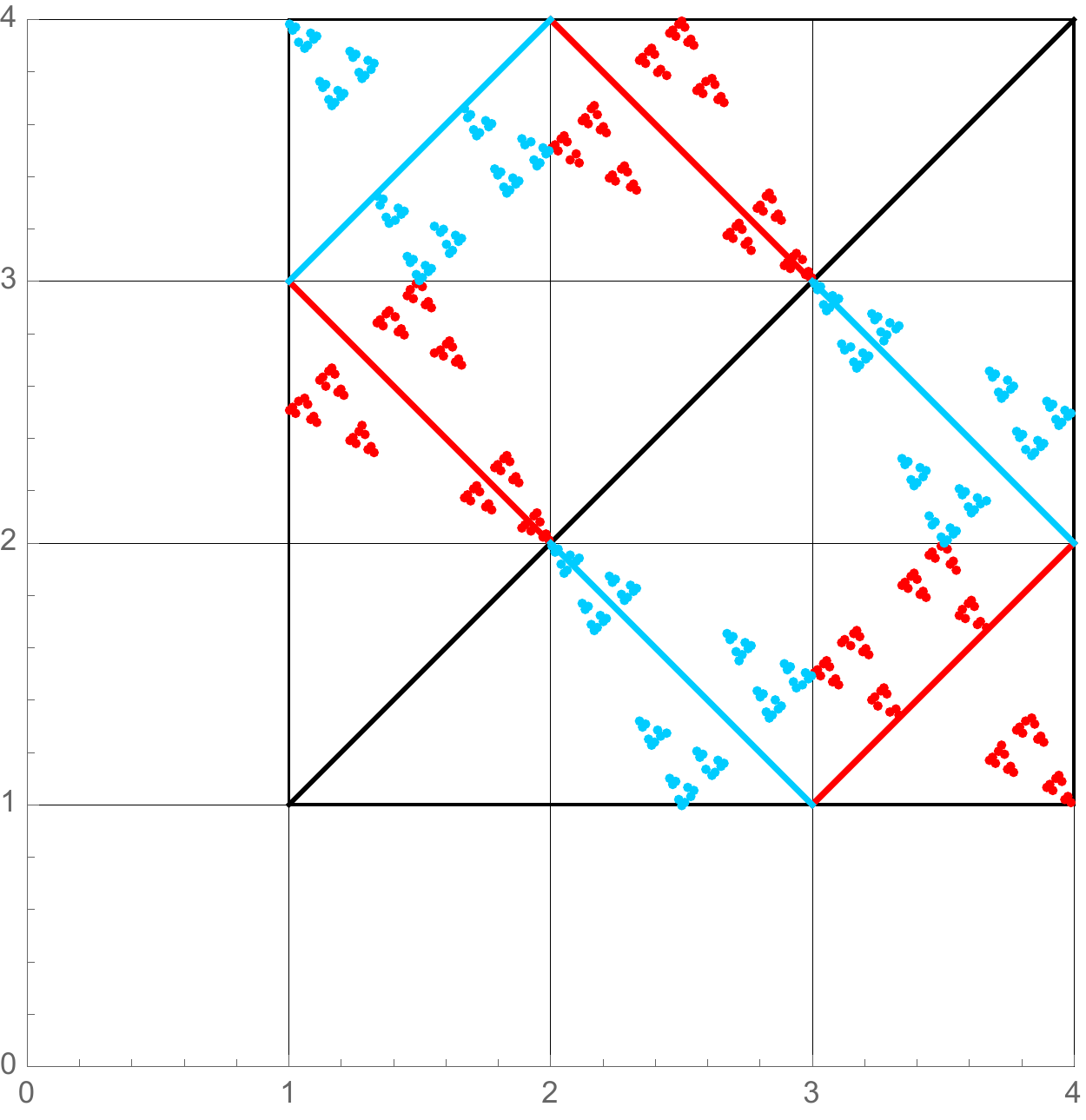} 
\caption{
The fractal co-motion functions $f_n^{\rm FRC}(r)$, Eq.~\eqref{FcexFRC}, for the density $\rho_{a,\varepsilon}(r)$ of 
Eq.~\eqref{denCex}, with $a=1$ and $\varepsilon=3$. Colors: Black, red (dots) and blue (dots) for $n=1,2,3$, respectively.
For comparison, the modified SGS co-motion functions $g_n^{\rm SGS}(r)$ from the right panel of Fig.~\ref{fig:FcexSGS} are shown 
as well.
}
\label{fig:FcexFRC}
\end{figure}

\begin{figure}
\includegraphics[width=0.85\columnwidth]{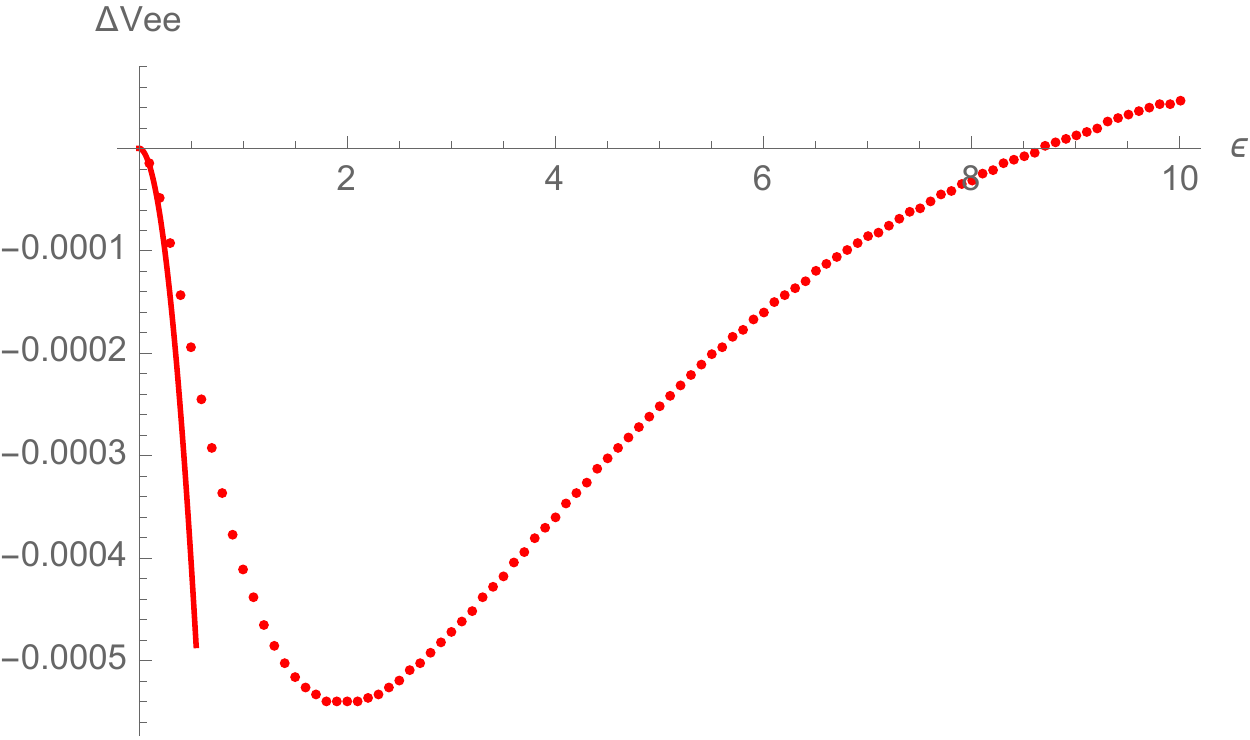} 
\caption{
The difference $V_{ee}^{\rm FRC}(a,\varepsilon)-V_{ee}^{\rm SGS}[\rho_{a,\varepsilon}]$ plotted versus
$\varepsilon$ (red dots). The corresponding difference between the expansions of Eqs.~\eqref{VeeSGSCexExpan} and 
\eqref{VeeFRCCexExpan}, Eq.~\eqref{VeeFRCCexExpanDiff}, is plotted as a red curve. Notice the small scale on the vertical axis.
}
\label{fig:Diff}
\end{figure}

In Appendix \ref{sec:SCEsystematic}, we find analytically for small $\varepsilon>0$
\beq
V_{ee}^{\rm FRC}(a,\varepsilon)=\frac{\sqrt{3}}a\Bigl[1-\frac{\varepsilon}2+\frac{31}{120}\varepsilon^2+O(\varepsilon^3)\Bigr].
\label{VeeFRCCexExpan}\eeq
Subtracting Eq.~\eqref{VeeSGSCexExpan} yields
\beq
\Delta V_{ee}=V_{ee}^{\rm FRC}-V_{ee}^{\rm SGS}=-\frac{\sqrt{3}}a\Bigl[\frac1{1080}\varepsilon^2+O(\varepsilon^3)\Bigr],
\label{VeeFRCCexExpanDiff}\eeq
proving rigorously that $V_{ee}^{\rm FRC}(a,\varepsilon)<V_{ee}^{\rm SGS}[\rho_{a,\varepsilon}]$ for sufficiently small 
$\varepsilon>0$ (see the red solid curve in Fig.~\ref{fig:Diff}).
In particular, a systematic minimization in Appendix \ref{app:minFRC} reveals that
\beq
V_{ee}^{\rm SIL}[\rho_{a,\varepsilon}]=V_{ee}^{\rm FRC}(a,\varepsilon)+O(\varepsilon^3).
\eeq

To derive Eq.~\eqref{intS1}, we choose an integer $k$, not too small, and divide the interval $[0,1)$ up into the $3^k\gg1$ 
intervals $I_m=[x_{m-1},x_m)$, with
\beq
x_m=\frac{m}{3^k},\qquad m\in\big\{0,1,2,\dotsc,3^k\big\}.
\eeq
For any function $g(x)$, let $\langle g(x)\rangle_m$ be its average value for $x\in I_m$. Then, we have
\beq
x\in I_m\Rightarrow\left\{
\begin{aligned}
\bigl\lvert x-\langle x\rangle_m\bigr\rvert&\le\frac1{2\cdot3^k},\\
\bigl\lvert S(x)-\big\langle S(x)\big\rangle_m \bigr\rvert&\le\frac1{2\cdot3^k},\\
\bigl\lvert S\big(S(x)\big)-\big\langle S\big(S(x)\big)\big\rangle_m \bigr\rvert&\le\frac1{3^k}.
\end{aligned}\right.
\label{Sineq}\eeq
The second one of these three inequalities is derived in Appendix \ref{appS}. Then, the third one follows
immediately from Eq.~\eqref{eq:sumSn}, $x+S(x)+S(S(x))=\frac32$.

Consequently, for any continuous function $Q(x,y,z)$, we may define
\begin{align}
&\binteg{x}{0}{1/3}Q\Bigl(x, S(x), S\bigl(S(x)\bigr)\Bigr) \notag \\
&\equiv \lim_{\crampedclap{k\to\infty}}\;\frac1{3^k}\sum_{\crampedclap{m=1}}^{\crampedclap{3^k/3}}\Big\langle Q\Big(x,\;S(x),\;S\big(S(x)\big)\Big)\Big\rangle_m \notag \\
&= \lim_{\crampedclap{k\to\infty}}\;\frac1{3^k}\sum_{\crampedclap{m=1}}^{\crampedclap{3^k/3}}
Q\Bigl(\langle x\rangle_m, \bigl\langle S(x)\bigr\rangle_m,
\bigl\langle S\bigl(S(x)\bigr)\bigr\rangle_m\Bigr),
\label{QintFRC}\end{align}
where we have applied the mean value theorem in the second step. The average values are given by
\ber
\langle x\rangle_m&\equiv&\frac{x_{m-1}+x_m}2=\frac{2m-1}{2\cdot3^k},\nonumber\\
\big\langle S(x)\big\rangle_m&=&S(x_{m-1}),\nonumber\\
\big\langle S\big(S(x)\big)\big\rangle_m&=&\frac32-\langle x\rangle_m-\big\langle S(x)\big\rangle_m.
\label{intSavs}\eer
The expression for $\langle S(x)\rangle_m$ is derived in Appendix \ref{appS}.
The one for $\langle S(S(x))\rangle_m$ is an immediate consequence of Eq.~\eqref{eq:sumSn}, $x+S(x)+S(S(x))=\frac32$.

\section{Numerical study of the SIL}
\label{sec:NumExpts}
We investigate here whether the SGS co-motion functions, even when not optimal, provide an approximation that is numerically close to the true SIL. To this end, we first give a short summary of the numerical methods we have used.
 
\subsection{Numerical approaches to SIL}
For a numerical approach to the problem of Eq.~\eqref{eq:VeeSCE_OT}, we assume that $\gamma\in\Pi(\R^{Nd},\rho)$ can be represented
by a regular symmetric function $\gamma(\rv_1,\dotsc,\rv_N)$. The cost becomes therefore an explicit integration over $\R^{Nd}$
\begin{align}
\langle C\rangle_\gamma
&= \integ{\rv_1}\dotsi \integ{\rv_N} \gamma(\rv_1,\dotsc,\rv_N)\,C(\rv_1,\dotsc,\rv_N) \notag \\
&= \integ{^N\rv}\gamma\big(\{\rv_n\}\big)\,C\big(\{\rv_n\}\big) ,
\end{align}
where $\rv_n \in \R^d$ for $n=1,\dotsc,N$.
Similarly, for the constraint we have
\begin{align}
\pi^{\sharp}_k\gamma(\rv_k)
&\equiv \integ{\rv_1}\dotsi\integ{\rv_{k-1}}\integ{\rv_{k+1}}\dotsi\integ{\rv_N}\gamma(\{\rv_n\}) \notag \\
&= \frac{\rho(\rv_k)}{N},
\end{align}
Notice that due to the symmetry of the function $\gamma$, it would be sufficient to impose the constraint for only one $k$, as this would imply that the constraint also holds for any $k$. Nevertheless, we keep all constraints explicitly, since it simplifies the forthcoming discussion.

The original minimization problem now becomes
\begin{align}
\label{primalPb}
&\textbf{Primal problem:} &
V_{ee}^{\rm SIL}[\rho] &= \min_{\crampedclap{\gamma\in\Pi(\rho)}}\;\langle C\rangle_\gamma,
\end{align}
where
\begin{align}\label{eq:PiRho}
\Pi(\rho) = \bigr\{\gamma(\{\rv_n\})\in\P(\R^{dN}) : \pi^{\sharp}_k\gamma(\rv) = \rho(\rv)  / N \; \forall_k\bigr\}.
\end{align}
The constraint that the probability distribution $\gamma$ should yield the density $\rho/N$ as its marginals can be imposed in the following manner
\begin{multline}
V_{ee}^{\rm SIL}[\rho] =
\min_{\gamma} \sup_{u} \Biggl(\langle C\rangle_\gamma + {} \\
\sum_{k=1}^N\integ{\rv} u(\rv)
\biggl(\frac{\rho(\rv)}{N} - \pi^{\sharp}_k\gamma(\rv)\biggr) \Biggr),
\end{multline}
where the minimization is now over all symmetric functions. This construction is readily seen to work, since if we had $\pi^{\sharp}_k\gamma(\rv) \neq \rho(\rv) / N$, then the supremum over $u$ would yield $+\infty$. So only symmetric functions $\gamma$ with the correct density $\rho$ can be candidates for the minimum.

Now if we interchange the minimum and supremum, we get the dual problem
\begin{multline}
V_{ee}^{\text{dual}}[\rho] =
\sup_{u} \Biggl\{\integ{\rv} u(\rv)\rho(\rv) + {} \\
\inf_{\gamma}
\integ{^N\rv}\gamma\big(\{\rv_n\}\big)\Biggl(C\big(\{\rv_n\}\big) - \sum_{k=1}^Nu(\rv_k)\Biggr)\Biggr\} .
\end{multline}
As we now first minimize and only afterwards maximize, we have $V_{ee}^{\text{dual}}[\rho] \leq V_{ee}^{\rm SIL}[\rho]$. Thus, $V_{ee}^{\text{dual}}[\rho]$ provides a lower bound to the primal problem. However, typically one expects that $V_{ee}^{\text{dual}}[\rho]= V_{ee}^{\rm SIL}[\rho]$, which is indeed the case for the Coulomb cost function~\cite{Dep-MMNM-15,ButChaDep-arXiv-16}.

The part between parentheses can now be regarded as a constraint on the maximization of $u$ in the first part. As the probability density $\gamma$ can only be a non-negative function, the infimum only collapses to $-\infty$ if $C < \sum u$. The dual problem can therefore be rewritten as the following constrained maximization
\begin{align}
\label{dualPb}
&\textbf{Dual problem:}&
V_{ee}^{\text{dual}}[\rho]&=\max_{\mathclap{u\in U(C)}}\;\;\integ{\rv}\rho(\rv)u(\rv),
\end{align}
where
\begin{multline}\label{eq:potDomain}
U(C) = \Bigg\{u(\rv) : \sum_{i=1}^Nu(\rv_i)\le C(\rv_1,\dotsc,\rv_N), \\
\forall\;(\rv_1,\dotsc,\rv_N)\in\R^{Nd}\Bigg\}.
\end{multline}
In order to solve numerically~\eqref{primalPb} and~\eqref{dualPb},
we use a discretization with $M$ equidistant points on the support of marginal as $\{\rv_{j}\}_{j=1,\dotsc,M}$ and define $\rho_j=\rho(\rv_j)$. 
Thus, we get the following discretized problem
\beq
\label{MKdiscrete}
 \min_{\gamma\in\Pi_k}\;\;\sum_{\crampedclap{j_1,\dotsc, j_N}} c_{j_1\dotsc j_N}\gamma_{j_1\dotsc j_N},
\eeq
where $\Pi_k$ is the discretization of $\Pi$ and $c_{j_1\dotsc j_N} = C(\rv_{j_{1}},\dotsc,\rv_{j_{N}})$; the transport plan
thus becomes a $M^N$ matrix again denoted $\gamma$ with elements 
 $\gamma_{j_1\dotsc j_N}$. The marginal constraints  $\mathcal{C}_{i}$ (such that $\Pi_k=\bigcap_{k=1}^{N}\mathcal{C}_{k}$) becomes
  \begin{multline}
\label{constraint}
 \mathcal{C}_{k} \equiv \Big\lbrace \gamma \in\mathbb{R}_+^{M^N} : \\ 
 \sum_{\crampedclap{j_{1},\dotsc,j_{k-1},j_{k+1},\dotsc,j_{N}}}\gamma_{j_{1}\dotsc j_{N}}=\rho_{j_k} ,  \,  \forall j_k = 1,\dotsc,M  \Big\rbrace. \notag
\end{multline}
As a Dirac $\delta$-``function'' cannot be represented exactly on a grid, a transport plan $\gamma$ of Monge (or SCE) type (see Sec.~\ref{sec:OT}) cannot be truly reproduced. Still, we expect the matrix $\gamma$ to be sparse.

As in the continuous framework we can recover the dual problem given by
\begin{equation}
 \label{DualDiscrete}
 \begin{split}
  \max_{u_{j}}&\sum_{j=1}^M u_{j}\rho_{j}\\
  \text{s.t.} & \sum_{k=1}^{N}u_{j_k} \leq c_{j_{1}\dotsc j_{N}}\quad\forall \ j_{k}=1,\dotsc,M,
 \end{split}
 \end{equation}
where $u_{j_k}=u(\rv_{j_k})$ is the Kantorovich potential.
One can notice that  the primal~\eqref{MKdiscrete} has $M^N$ unknowns and $M\times N$ linear constraints and
the dual problem~\eqref{DualDiscrete} has $M$ unknowns, but $M^N$ constraints. 
This actually makes the problems computationally unsolvable with standard linear programming methods even for small cases.

A different approach to the problem~\eqref{MKdiscrete} consists in adding the entropy of the transport plan $\gamma$. 
This regularization has been recently introduced in many applications involving optimal transport~\cite{Cut-NIPS-13,BenCarCutNenPey-SIAMJSC-15, BenCarNen-SMCISE-15, GalSal-CEPR-10, Nen-16}.
Thus, we consider the following discrete regularized problem
\begin{equation}
 \min_{\gamma\in\mathcal{C}}\;\;\sum_{\crampedclap{j_1,\dotsc, j_N}}
 c_{j_1\dotsc j_N}\gamma_{j_1 \dotsc j_N}+\T \,\E(\gamma)
\end{equation}
where  $\E(\gamma)$ is defined as follows
\begin{equation}
\E(\gamma)=\begin{dcases*}
\;\;\sum_{\crampedclap{j_1,\dotsc, j_{N}}}\gamma_{j_1\dotsc j_N}\log(\gamma_{j_1\dotsc j_N})
&if $\gamma \geq 0$ \\
+\infty & otherwise,
\end{dcases*}
\end{equation}
 with the convention $0\log 0=0$, $\mathcal{C}$ is the intersection of the set associated to the marginal constraints (we remark that the entropy is a penalization of the non-negative constraint on $\gamma$), and T is a ``temperature'' (a positive parameter that is kept small).
After elementary computations, we can re-write the problem as
\begin{equation}
 \label{KLdivergence}
 \min_{\gamma\in \mathcal{C}}\H(\gamma\vert\bar{\gamma})
\end{equation}
where we used the relative entropy
\begin{align}
\H(\gamma\lvert\bar{\gamma}) \equiv \sum_{\crampedclap{i_1,\dotsc,i_N}}\gamma_{i_1\dotsc i_N}
\log\left(\frac{\gamma_{i_1 \dotsc i_{N}}}{\bar{\gamma}_{i_1\dotsc i_N}}\right)
\end{align}
and $\bar{\gamma}_{i_1\dotsc i_N} \equiv\exp(-c_{j_1 \dotsc j_N}/\T)$.

The entropic regularization spreads the  support and this helps to stabilize the computation as
it defines a strongly convex program with a unique solution $\gamma_{\T}$.
In the limit $\T \to 0$, the regularized solutions $\gamma_{\T}$ converge to $\gamma^\star$, the solution of~\eqref{MKdiscrete} with minimal entropy (see~\cite{ComMar-MP-94} for a detailed asymptotic analysis and the proof of exponential convergence).
It is also interesting, as explained in appendix~\ref{EntrIne}, to notice that, in the measure continuous case, the functional \eqref{KLdivergence} can be regarded as a lower bound on the Levy--Lieb functional. 

The main advantage of the entropic regularization is that the solution $\gamma_{\T}$ can be obtained through elementary operations and only requires the storage of a few $M$-dimensional vectors. This semi-explicit solution relies on the following proposition (we consider the two marginal case for simplicity).
\begin{prop}
\label{sol-kl-2-marginal}
Problem~\eqref{KLdivergence} admits a unique solution $\gamma_{\T}^\star$. Moreover, there exists a non-negative vector $a$, uniquely determined up to a multiplicative 
constant, such that $\gamma_{\T}^\star$ has the form
\beq
\label{optimal_plan_reg}
(\gamma_{\T}^\star)_{ij}=a_i\bar{\gamma}_{ij}a_j,
\eeq
where $\bar{\gamma}_{ij}=\exp(-c_{ij}/\T)$. The entries $a_i$ are determined by the marginal constraints
\beq
\label{marginal-equations}
a_i=\dfrac{\rho_i}{\sum_j\bar{\gamma}_{ij}a_j} .
\eeq
Moreover, the vector can be written as $a_i =\exp(u_i/\T)$ where $u$ is the regularized Kantorovich 
potential.
\end{prop}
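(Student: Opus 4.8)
The plan is to regard problem~\eqref{KLdivergence} as a strictly convex minimization over the transport polytope and to extract the announced scaling structure from its Karush--Kuhn--Tucker conditions, then use the symmetry of the data to collapse the pair of scaling vectors to a single one. \emph{Existence and uniqueness:} the feasible set --- nonnegative matrices $\gamma$ with the prescribed marginals --- is a nonempty, closed, bounded, convex polytope (it contains the independent coupling $\gamma_{ij}\propto\rho_i\rho_j$, on which $\H(\cdot\,|\,\bar\gamma)$ is finite since all $\bar\gamma_{ij}=e^{-c_{ij}/\T}>0$). As $t\mapsto t\log t$ is strictly convex, $\gamma\mapsto\H(\gamma\,|\,\bar\gamma)$ is lower semicontinuous and strictly convex there, hence attains its infimum at a unique point $\gamma_\T^\star$.

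\emph{Interiority and KKT:} I would next show that every entry of $\gamma_\T^\star$ is strictly positive. If $(\gamma_\T^\star)_{ij}=0$, then, since $\rho_i,\rho_j>0$, there are indices $i'\neq i$ and $j'\neq j$ with $(\gamma_\T^\star)_{ij'}>0$ and $(\gamma_\T^\star)_{i'j}>0$; perturbing along the $2\times2$ cycle that adds $t$ to the $(i,j)$ and $(i',j')$ entries and subtracts $t$ from the $(i,j')$ and $(i',j)$ entries preserves all marginals and, for small $t>0$, nonnegativity, while the directional derivative of $\H$ is $-\infty$ (the $(i,j)$ term contributes $\log(t/\bar\gamma_{ij})+1\to-\infty$, the others stay finite), contradicting minimality. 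Thus the nonnegativity constraints are inactive and only the affine marginal constraints matter; their Jacobian has rank $2M-1$, the single dependence being that both marginals carry the same total mass, so after discarding one redundant row the Lagrange multiplier theorem gives numbers $u_i,v_j$ with $\log\bigl((\gamma_\T^\star)_{ij}/\bar\gamma_{ij}\bigr)+1=u_i+v_j$, i.e.\ $(\gamma_\T^\star)_{ij}=\tilde a_i\,\bar\gamma_{ij}\,\tilde b_j$ with $\tilde a_i=e^{u_i-1/2}$, $\tilde b_j=e^{v_j-1/2}$, the pair $(\tilde a,\tilde b)$ determined up to $(\tilde a,\tilde b)\mapsto(t\tilde a,\tilde b/t)$.

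\emph{Symmetrization and the fixed-point equation:} because $\bar\gamma$ is symmetric (the cost $c_{ij}$ is) and the two marginals coincide, the transpose of $\gamma_\T^\star$ is also feasible with the same entropy, so uniqueness forces $\gamma_\T^\star=(\gamma_\T^\star)^{\mathsf T}$; then $\tilde a_i\tilde b_j=\tilde a_j\tilde b_i$ for all $i,j$, so $\tilde a_i/\tilde b_i$ is a constant $\lambda>0$, and setting $a_i:=\sqrt\lambda\,\tilde b_i$ gives $(\gamma_\T^\star)_{ij}=a_i\,\bar\gamma_{ij}\,a_j$, which absorbs the residual scaling freedom. Imposing $\sum_j(\gamma_\T^\star)_{ij}=\rho_i$ yields exactly~\eqref{marginal-equations}, and since replacing $a$ by $ta$ multiplies its right-hand side by $1/t$, the vector $a$ is pinned down once the overall normalization (equivalently, the additive constant of the potential) is fixed. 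Finally, $a_i=e^{u_i/\T}$ with $u$ the regularized Kantorovich potential follows from strong duality for the entropic problem: the entropic dual is, up to constants, the smooth, strictly concave, coercive function $u\mapsto\sum_i\rho_i u_i-\T\sum_{ij}\bar\gamma_{ij}e^{(u_i+u_j)/\T}$, whose unique critical point satisfies precisely the marginal equations for $e^{u/\T}$, and the primal--dual optimality relation identifies $a=e^{u/\T}$.

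I expect the delicate step to be the interiority/KKT part --- ruling out zero entries so that the exponential form is \emph{forced} rather than merely admissible, and handling the rank deficiency of the marginal constraint matrix when invoking the Lagrange multiplier theorem --- whereas existence, uniqueness, and the symmetrization are routine.
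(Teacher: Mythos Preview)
Your argument is correct and follows the standard route to the Sinkhorn scaling theorem: strict convexity gives existence and uniqueness, a cycle perturbation forces interiority, the first-order conditions produce the diagonal-scaling form, and symmetry of the cost and marginals collapses the two scaling vectors to one. The interiority step, which you flag as delicate, is indeed the only place requiring care, and your $2\times2$-cycle argument with the $-\infty$ directional derivative handles it cleanly.

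There is nothing to compare against, however: the paper does not prove this proposition. It is stated as a known structural result about entropically regularized transport and immediately used to motivate the Sinkhorn/IPFP iteration~\eqref{marginal-equations}; the surrounding text only cites references for the convergence of that iteration, not for the proposition itself. So your proof is not an alternative to the paper's---it simply supplies what the paper omits.

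One small remark: in the genuinely symmetric case $(\gamma_\T^\star)_{ij}=a_i\bar\gamma_{ij}a_j$, rescaling $a\mapsto ta$ multiplies $\gamma_\T^\star$ by $t^2$ and hence changes the marginals, so the nonnegative $a$ is in fact uniquely determined, with no residual multiplicative freedom. The phrase ``up to a multiplicative constant'' in the statement is inherited from the general two-vector factorization $(\tilde a,\tilde b)\mapsto(t\tilde a,\tilde b/t)$ that you correctly identify at the KKT stage; once the symmetrization $a_i=\sqrt{\tilde a_i\tilde b_i}$ is made, that freedom disappears. Your proof already reflects this correctly.
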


It is now clear that one can use Eq.~\eqref{marginal-equations} in order to define a fixed point iterative algorithm known as Sinkhorn or Iterative Proportional
Fitting Procedure (IPFP)
\beq
a_i^{(n+1)}=\dfrac{\rho_i}{\sum_j\bar{\gamma}_{ij}a_j^{(n)}} .
\eeq
One can prove the convergence of the Sinkhorn\slash{}IPFP algorithm by using the Hilbert metric and the Birkhoff--Bushell theorem.
The main idea of this approach lies on the fact that the solution of problem~\eqref{KLdivergence} can be seen as the fixed point of a contractive map in the Hilbert metric, see \cite{FraLor-LAA-89,GeoPav-JMP-15} for a detailed proof. Moreover, one obtains a geometric rate of convergence, and the rate factor can be estimated a priori.
The extension to the multi-marginal case is straightforward and we refer the reader to \cite{BenCarNen-SMCISE-15,DiMGerNen-Survey-15,Nen-16}.\\

Now we specialize to the spherically symmetric problem with Coulombic cost. As already explained in Sec.~\ref{sec:SGSrevisit} and also Appendix~\ref{app:minFRC}, the problem can be reduced to one dimensional problem only depending on the radii. The primal problem becomes
\begin{align}
\label{reducedPrimalPb}
&\textbf{Primal problem:} &
V_{ee}^{r,\rm SIL}[\mu] &= \min_{\crampedclap{\beta\in\Pi(\mu)}}\;\langle V\rangle_{\beta},
\end{align}
where the reduced radial cost $V$ is given by Eq.~\eqref{Vspheric} and $\mu(r)=J_d(r)\rho(r)$ ($J_d$ is the $d$-dimensional Jacobian). Likewise, the dual problem becomes
\begin{align}
\label{reducedDualPb}
&\textbf{Dual problem:}&
V_{ee}^{r,\text{dual}} &=\max_{\mathclap{v\in U(V)}}\;\;\integ{r}\mu(r)v(r).
\end{align}
The discretization of the radial problem proceeds in exactly the same manner as described before.

\begin{figure}
\includegraphics[width=0.9\columnwidth]{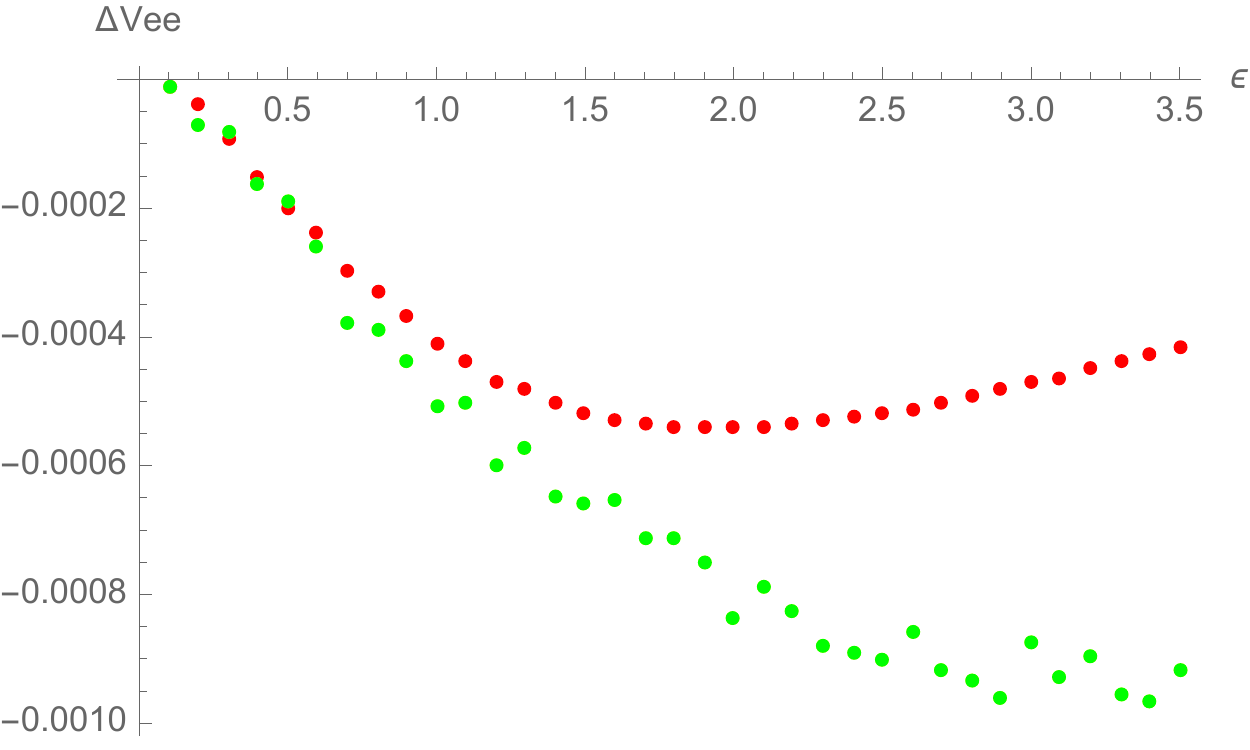}
\caption{
The differences $\widetilde{V}_{ee}^{\rm(method)}(a,\varepsilon)-\widetilde{V}_{ee}^{\rm SGS}(a,\varepsilon)$, plotted versus 
$\varepsilon$, from the SGS values $\widetilde{V}_{ee}^{\rm SGS}(a,\varepsilon)$ of:\\
(red) the values $\widetilde{V}_{ee}^{\rm FRC}(a,\varepsilon)$ due to Eq.~\eqref{intS1},\\
(green) the primal values $\widetilde{V}_{ee}^{\rm primal}(a,\varepsilon)$ of ~\eqref{reducedPrimalPb}.
}
\label{fig:entropic}
\end{figure}

\subsection{Results and comparison with SGS}
Consider now the 3-particle density given by~\eqref{denCex}, for which we want to solve the reduced problem~\eqref{reducedPrimalPb}. In order to do that, we consider a $M=100$ regular discretization of $[a,b]$, excluding the end-points, thus $r_i = a\bigl(1 + \varepsilon(i-1/2)/M\bigr)$.
In Fig.~\ref{fig:entropic} we compare the difference between $\tilde{V}_{ee}[\rho_{a,\varepsilon}]$ obtained 
by solving the primal problem~\eqref{reducedPrimalPb} directly and the SGS solution.
We see that solving~\eqref{reducedPrimalPb} provides an improvement over the SGS maps, but, again, the numerical differences are only in the order of 0.1~\%. We have also considered the value of $V_{ee}$ by using the fractal solution (FRC). For thin shells ($\varepsilon \lesssim 0.6$) the primal and FRC perform similarly. For larger shells the primal solution starts to yield a consistently lower value for $V_{ee}$ than the SGS and the fractal solutions. Moreover,  around $\varepsilon \approx 1.9$ the supremacy of the FRC solution over the SGS solution starts to deteriorate and its behavior becomes qualitatively different from the primal solution, as expected since it has been shown to be an accurate solution for small $\varepsilon$ only. 

As a second example, we consider a $d=3$ sphere of uniform density with $N=3$ electrons. Uniform spheres play an important role in establishing the optimal constant in the Lieb-Oxford inequality \cite{LewLie-PRA-15,SeiVucGor-MP-16} and for the low-density uniform electron gas \cite{RasSeiGor-PRB-11,LewLie-PRA-15,SeiVucGor-MP-16}. We know that for this density the SGS solution is not optimal, because we still have a small negative eigenvalue in the Hessian (see Sec.~\ref{secHessian}). Notice however that SGS has the right density and it is thus a variationally valid ``wavefunction'', meaning that the values obtained for the Lieb-Oxford inequality are always rigorous lower bounds for the optimal constant \cite{SeiVucGor-MP-16}. The SGS solution has the big advantage of being computationally much cheaper to evaluate than the other methods, making it possible to treat larger particle numbers \cite{SeiVucGor-MP-16}; it is thus important to validate its accuracy also when not optimal.  We find that
$\widetilde{V}_{ee}^{\rm SGS}=2.32682$, while with the entropic regularization method we obtain $\widetilde{V}_{ee}= 2.317215$, again a difference of the order of 0.4~\%. In Fig.~\ref{fig:suppSphere} we also show the support of the optimal pair density (i.e., the optimal plan integrated over all variables but two) obtained from the entropic regularization method, compared with the one from SGS. We clearly see that the optimal plan is now different from the SGS one, being much more spread and with a large weight in the top right corner, which corresponds to the case in which the 3 electrons are all almost at the same distance from the center, close to the boundary of the density support. 
\begin{figure}
\includegraphics[width=0.47\columnwidth]{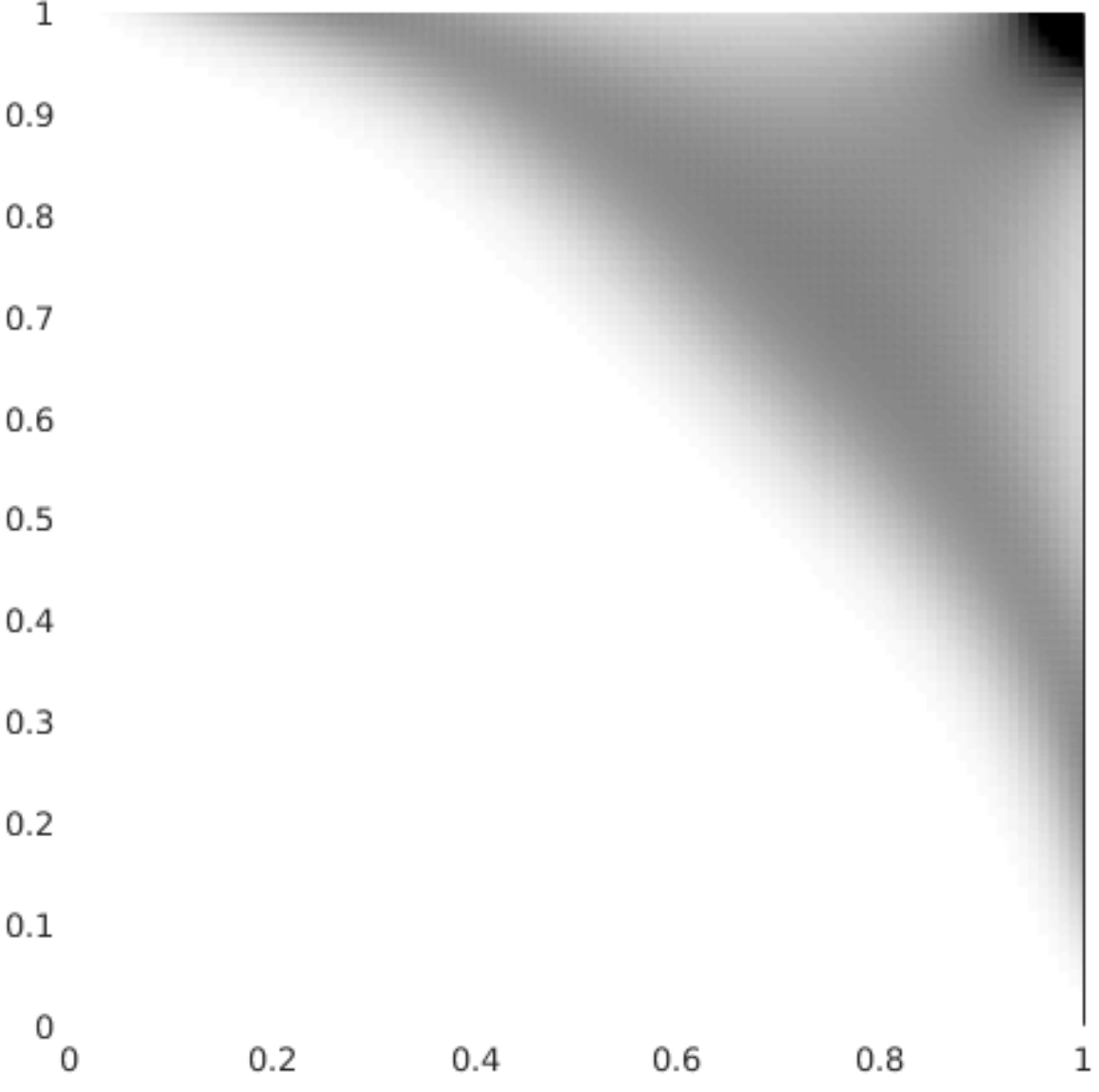}
\includegraphics[width=0.45\columnwidth]{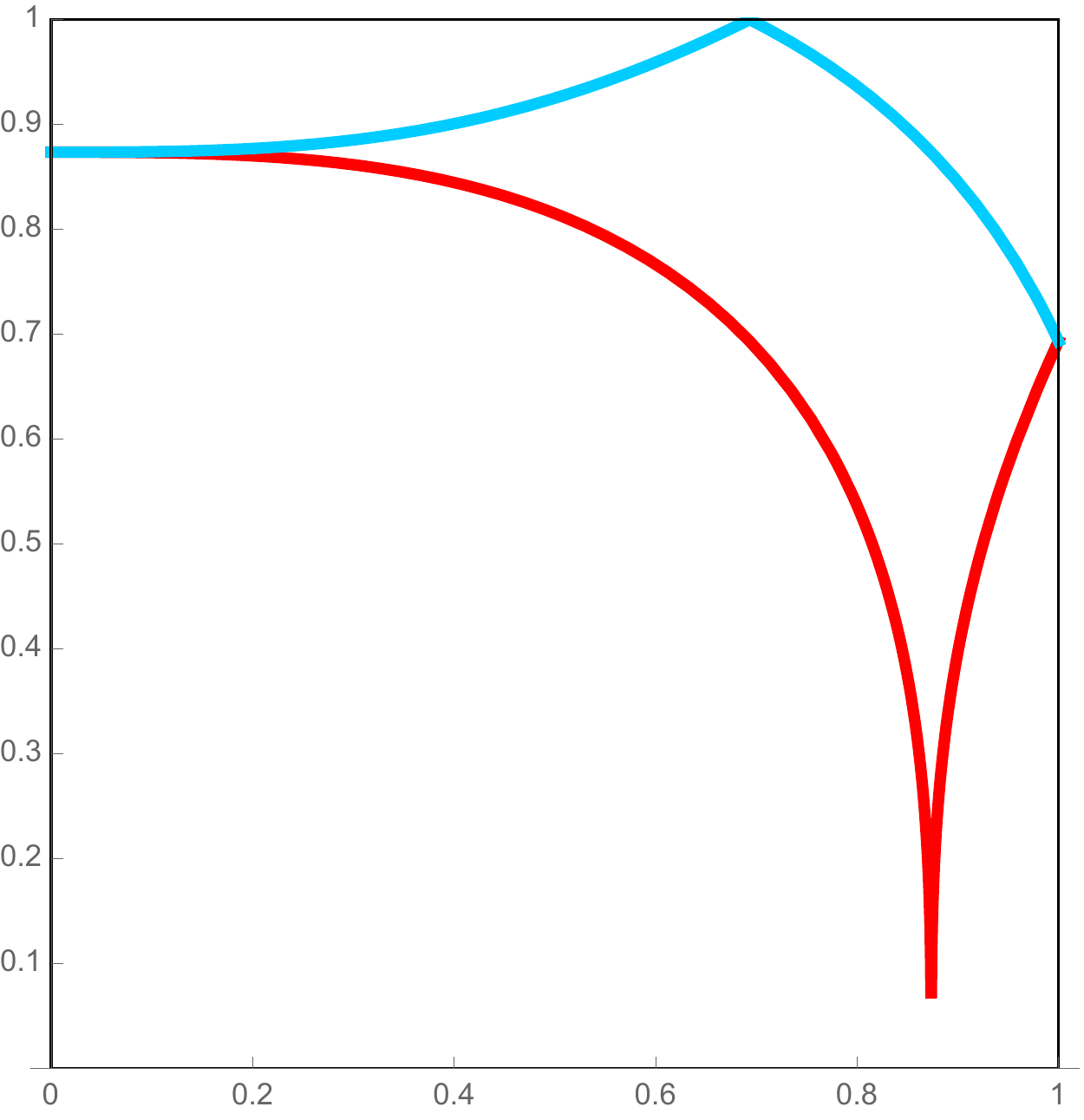}
\caption{
The support of the optimizing pair density from the entropic regularization method (left) and from the SGS ansatz (right) in the case of $N=3$ electrons in a 3-dimensional uniform density within a sphere of radius 1.}
\label{fig:suppSphere}
\end{figure}

An open question is whether there is a way to characterize the class of densities ${\cal P}_{\rm SGS}$ for which the SGS solution is the actual minimizer. To illustrate how puzzling is this question we now solve the problem~\eqref{reducedPrimalPb} for the following family of 3-particle densities
\beq
\label{familyDens}
\rho_\alpha(\rv)=(1-\alpha)\,\rho_{\rm exp}(\rv)+\alpha\,\rho_{\rm Li}(\rv),
\eeq
where $\rho_{\rm exp}=\frac{3}{\pi}\exp{(-2\,|\rv|)}$, $\rho_{\rm Li}(\rv)$ is an accurate density for the Lithium atom (exactly the same used by SGS) and $\alpha\in[0,1]$.
In Fig.~\ref{isosurface_plan}, we show the density and the corresponding support of the minimizing pair density: we clearly see a transition from a spread optimal plan to a plan concentrated on the SGS maps, which appear to be the true minimizer in the case of the Li atom density (for which the Hessian eigenvalues were found to be all non-negative in Ref.~\onlinecite{GorVigSei-JCTC-09}). In this case we have solved the problem by using both the entropic regularization and the linear programming approach, reporting in Table~\ref{tb_interpolation} the corresponding values of the expectation of $V_{ee}$, which confirm the optimality of the SGS solution when $\alpha$ is (close to) 1. Notice, again, that even when not optimal the SGS solution is very close to the LP and entropic values. Also, quite interestingly, when $\alpha$ is close to zero and the plan is spread, it is still concentrated in a region delimited by the SGS solution (see Fig.~1 of Ref.~\onlinecite{SeiGorSav-PRA-07} and Fig.~\ref{fig:F} of this paper). We also checked that for the exponential density a negative eigenvalue in the Hessian of the SGS solution is present for small $r$. The region where the eigenvalue is negative shrinks and diappears as $\alpha\to 1$. It seems that the the shell structure of the Li atom density makes the SGS solution become optimal, but further investigation on this intriguing aspect is needed.

\begin{figure}
		\includegraphics[width=0.95\columnwidth]{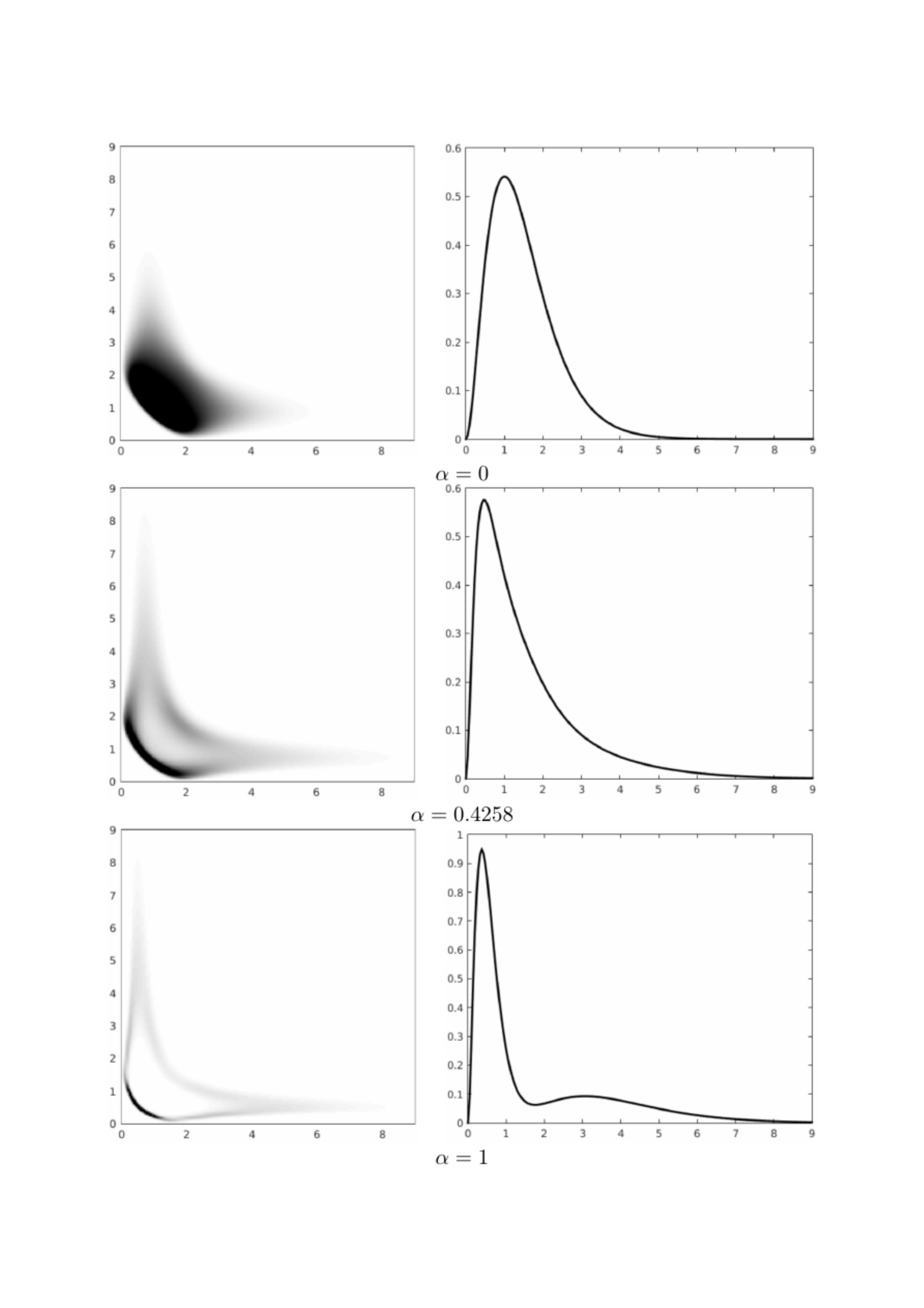}
   \caption{The support of the optimal pair density (left panels) for $N=3$ electrons with radial probability densities  $4\pi r^2\rho_\alpha(r)/N$ (right panels) corresponding to different values of $\alpha$ in Eq.~\eqref{familyDens}. We clearly see that for the Li atom density ($\alpha=1$) the optimal plan is concentrated on the SGS solution, which appears to be optimal in this case (see Fig.~1 in Ref.~\onlinecite{SeiGorSav-PRA-07}).}	
   \label{isosurface_plan}
\end{figure}	

\begin{table}
\centering
\begin{tabular}{|c| c| c| c| }

\hline\hline 
$\alpha$ & LP  & $\H$ & SGS   \\ [0.5ex]
\hline
   0      &  1.2109 & 1.2122 &  1.2178 \\ \hline

   0.1429 &  1.2270 & 1.2284 &  1.2325 \\ \hline

   0.2857 &  1.2471 & 1.2506 &  1.2499 \\ \hline

   0.4286 &  1.2723 & 1.2741 &  1.2723 \\ \hline

   0.5714 &  1.3045 & 1.3064 &  1.3026 \\ \hline

   0.7143 &  1.3462 & 1.3483 &  1.3434 \\ \hline

   0.8571 &  1.3989 & 1.4019 &  1.3902 \\ \hline
   
   1      &  1.4663 & 1.469 &  1.4624 \\ \hline

\hline

\end{tabular}
	
\caption{Values of the expectation of $V_{ee}$ obtained by the Linear Programming approach (LP), the entropic method ($\H$) and the SGS maps for densities of the family \eqref{familyDens} with different values of $\alpha$}
\label{tb_interpolation}
\end{table}

\section{Non-optimal solutions and functional derivative} 
\label{sec:non-optimal}

For radially symmetric densities $\rho(r)$ with $\rho\in{\cal P}_{\rm SGS}$ (see the opening of section \ref{sec:CounterExpl}),
the SGS co-motion functions $\fv_n^{\rm SGS}[\rho](r)$ provide an optimal solution in Eq.~\eqref{eq:VeeSCE_OT},
$V^{\rm SIL}_{ee}[\rho]=V^{\rm SGS}_{ee}[\rho]$. The functional $V^{\rm SGS}_{ee}[\rho]$, despite its highly non-local 
$\rho$-dependence in Eq.~\eqref{VeeSCEfSGS}, has in this case the simple derivative
\beq
\frac{\delta V^{\rm SGS}_{ee}[\rho]}{\delta\rho(r)}=v_{\rm SGS}[\rho](r),
\label{mildVeeSCE}\eeq
where the potential $v_{\rm SGS}[\rho](r)$ is readily evaluated from Eq.~\eqref{vSGSint}, which provides a 
powerful shortcut in solving the KS equations with the SCE functional as an approximation for exchange and correlation \cite{MenMalGor-PRB-14}.

We shall now prove that Eqs.~\eqref{mildVeeSCE} and~\eqref{vSGSint} are valid for a much more general class of radial densities that include but it is not limited to $\rho\in{\cal P}_{\rm SGS}$, provided that Eq.~\eqref{assumptFeps} below is satisfied. For densities $\rho\notin{\cal P}_{\rm SGS}$ the $\fv_n^{\rm SGS}[\rho](r)$ do not yield an optimal solution,
but, as we have seen from numerical experiments, $V^{\rm SGS}_{ee}[\rho]$ can still serve as a good model for the unknown functional $V^{\rm SIL}_{ee}[\rho]$. Our proof that even in this case the functional derivative is given by Eqs.~\eqref{mildVeeSCE} and~\eqref{vSGSint} uses 
Eq.~\eqref{VeeSCEfSGS} for $V^{\rm SGS}_{ee}[\rho]$ and the property \eqref{partialVu} of $v_{\rm SGS}[\rho](r)\equiv U(r)$.

Let $\rho_0(r)$ be a given radial density, $\rho_0\in{\cal P}_{\rm RAD}$, and $\xi(r)$ an arbitrary function with
$\int d\rv\,\xi(r)=0$. Considering the series of normalised radial densities
\beq
\rho_\varepsilon(r)=\rho_0(r)+\varepsilon\xi(r),
\eeq
with a small parameter $\varepsilon\in\R$, we have to show that
\ber
\Delta&\equiv&V^{\rm SGS}_{ee}[\rho_\varepsilon]-V^{\rm SGS}_{ee}[\rho_0]\nonumber\\
&=&\varepsilon\int_0^\infty dr\,J_d(r)\,\xi(r)\,v_{\rm SGS}[\rho_0](r)+O(\varepsilon^2).
\label{wantedDelta}\eer

Let $f^{\varepsilon}_1(r),\dotsc,f^{\varepsilon}_N(r)$ be the SGS radial co-motion functions for the density $\rho_\varepsilon(r)$.
Writing $J_d(r)\,\rho_\varepsilon(r)=\mu_\varepsilon(r)$, Eq.~\eqref{VeeSCEfSGS} for $V^{\rm SGS}_{ee}[\rho]$ yields
\ber
\Delta
&=&\int_{a^\varepsilon_0}^{a^\varepsilon_1}dr\,\mu_\varepsilon(r)\,V\Big(f^\varepsilon_1(r),\dotsc,f^\varepsilon_N(r)\Big)\nonumber\\
&-&\int_{a^0_0}^{a^0_1}dr\,\mu_0(r)\,V\Big(f^0_1(r),\dotsc,f^0_N(r)\Big),
\eer
where $a^\varepsilon_n$ are the radii of Eq.~\eqref{defRadiian} for the density $\rho_\varepsilon(r)$.

Using the monotonic function $\Ne^\varepsilon(r)=\int_{a^\varepsilon_0}^rds\,\mu_\varepsilon(s)$ and its inverse $\Re^\varepsilon(\nu)$, 
we may substitute $\Ne^\varepsilon(r)=\nu$ in the first integral, with $dr\,\mu_\varepsilon(r)=d\nu$,
and $\Ne^0(r)=\nu$ in the second one, with $dr\,\mu_0(r)=d\nu$,
\beq
\Delta=\int_0^1d\nu\Big[V\Big(F^\varepsilon_1(\nu),\dotsc,F^\varepsilon_N(\nu)\Big)-V\Big(F^0_1(\nu),\dotsc,F^0_N(\nu)\Big)\Big],
\label{rSubstNU}\eeq
where $F^{\varepsilon}_n(\nu)=f^{\varepsilon}_n\big(\Re^\varepsilon(\nu)\big)$. 
When we {\em assume} that
\beq
F^{\varepsilon}_n(\nu)-F^0_n(\nu)=O(\varepsilon)\qquad\mbox{(for all $\nu\in[0,1]$)}
\label{assumptFeps}\eeq
(see the discussion below), we may expand
\begin{multline}
\Delta=\sum_{n=1}^N\int_0^1d\nu\,V_n\Big(F^0_1(\nu),\dotsc,F^0_N(\nu)\Big)\big[F^{\varepsilon}_n(\nu)-F^0_n(\nu)\big]\\
+O(\varepsilon^2),
\end{multline}
with the notation $V_n(r_1,\dotsc,r_N)=\frac{\partial}{\partial r_n}V(r_1,\dotsc,r_N)$.
Now Eq.~\eqref{partialVu}, with $v_{\rm SGS}[\rho_0](r)\equiv U(r)$, yields
\ber
\Delta&=&\sum_{n=1}^N\int_0^1d\nu\,U'\big(F^0_n(\nu)\big)\big[F^{\varepsilon}_n(\nu)-F^0_n(\nu)\big]+O(\varepsilon^2)\nonumber\\
&=&\sum_{n=1}^N\int_0^1d\nu\Big[U\big(F^\varepsilon_n(\nu)\big)-U\big(F^0_n(\nu)\big)\Big]+O(\varepsilon^2),\qquad
\eer
where, in the second step, we have used Eq.~\eqref{assumptFeps} again. 

Now, we re-substitute,
\begin{multline}
\Delta=\sum_{n=1}^N\Bigg\{\int_{a^\varepsilon_0}^{a^\varepsilon_1}dr\,\mu_\varepsilon(r)\,U\big(f^\varepsilon_n(r)\big)\\
-\int_{a^0_0}^{a^0_1}dr\,\mu_0(r)\,U\big(f^0_n(r)\big)\Bigg\}+O(\varepsilon^2),
\end{multline}
and apply Eq.~\eqref{identityI1In} to both integrals,
\beq
\Delta=\int_{a^\varepsilon_0}^{a^\varepsilon_N}dr\,\mu_\varepsilon(r)\,U(r)-\int_{a^0_0}^{a^0_N}dr\,\mu_0(r)\,U(r)+O(\varepsilon^2).
\eeq
Since $\mu_\varepsilon(r)=0$ for $r\notin[a^\varepsilon_0,a^\varepsilon_N]$, we obtain Eq.~\eqref{wantedDelta},
\beq
\Delta=\int_0^\infty dr\big[\mu_\varepsilon(r)-\mu_0(r)\big]\,U(r)+O(\varepsilon^2).
\eeq

\noindent{\bf Discussion of Eq.~\eqref{assumptFeps}:} Eqs.~\eqref{fSGSdef2} and \eqref{fSGSdef1} for $\nu\in[0,1]$ or, 
equivalently, for $a_0\le r\le a_1$ yield
\ber
F^{\varepsilon}_1(\nu)=R_e^{\varepsilon}(\nu),\hspace*{6mm}&\quad&F^{\varepsilon}_2(\nu)=R_e^{\varepsilon}(2-\nu),\nonumber\\
F^{\varepsilon}_3(\nu)=R_e^{\varepsilon}(\nu+2),&\quad&F^{\varepsilon}_4(\nu)=R_e^{\varepsilon}(4-\nu),\nonumber\\
F^{\varepsilon}_5(\nu)=R_e^{\varepsilon}(\nu+4),&\quad&F^{\varepsilon}_6(\nu)=R_e^{\varepsilon}(6-\nu),\nonumber\\
\ldots\,&&(0\le\nu\le1).
\label{Fexpl}\eer
Therefore, Eq.~\eqref{assumptFeps} is true when the expansion
\beq
\Re^{\varepsilon}(\nu)=\Re(\nu)+\varepsilon\cdot X(\nu)+O(\varepsilon^2)\qquad(\nu\in[0,N]),
\label{RepsR0}\eeq
with $\Re(\nu)=\Re^0(\nu)$, has a {\em finite} coefficient $X(\nu)$.

An expression for $X(\nu)$ can be found from
\beq
\nu\equiv N_e^{\varepsilon}\big(R_e^{\varepsilon}(\nu)\big)=
\int_{a_0^\varepsilon}^{R_e^\varepsilon(\nu)}dr\,\mu_\varepsilon(r).
\eeq
We consider the case when $a_0^\varepsilon=a_0$ is independent of $\varepsilon$.
Taking the derivative $\frac{d}{d\varepsilon}$ and then setting $\varepsilon=0$ yields
\beq
0=\mu_0\big(R_e(\nu)\big)\frac{d}{d\varepsilon}R_e^{\varepsilon}(\nu)\Big|_{\varepsilon=0}+\int_{a_0}^{R_e(\nu)}dr\,J_d(r)\xi(r),
\eeq
where we have used $\mu_\varepsilon(r)=J_d(r)[\rho_0(r)+\varepsilon\xi(r)]$.
Writing $\int_{a_0}^rds\,J_d(s)\xi(s)\equiv\Xi(r)$, we have
\beq
X(\nu)\equiv\frac{d}{d\varepsilon}R_e^{\varepsilon}(\nu)\Big|_{\varepsilon=0}=-\frac{\Xi(R_e(\nu))}{\mu_0(R_e(\nu))}.
\eeq
Since $\mu_0(r)=N'_e(r)$ and $N_e(R_e(\nu))\equiv\nu$, we may write
\beq
X(\nu)=-\Xi\big(R_e(\nu)\big)\,R'_e(\nu).
\eeq
Since $R_e(\nu)$ and, for any reasonable perturbation $\xi(r)$, also $\Xi(r)$ are bounded functions, $X(\nu)$ is finite for all 
$\nu\in[0,N]$ when $R'_e(\nu)$ is. This is a sufficient condition for Eq.~\eqref{assumptFeps} to be true.

An example for a case where $R_e'(\nu)$ is not finite is a density $\rho(r)$ with $\rho(r)=0$ in a finite shell $r_1\le r\le r_2$. Notice again that condition \eqref{assumptFeps} is sufficient but not necessary, and thus its violation does not imply that Eq.~\eqref{mildVeeSCE} cannot hold also in this case.

\section{Summary and conclusions}
\label{sec:Summary}
The strictly-correlated (or Monge) solution for the strong-interaction limit provides a physically transparent route to build exchange-correlation functionals with a very non-local density dependence. Its mathematical structure is very different from the usual one of current approximations (which are based on the local density, density gradients, Kohn-Sham local kinetic energy, Hartree-Fock exchange, etc.), and has already inspired new functionals that use some {\em integrals} of the density \cite{WagGor-PRA-14,ZhoBahErn-JCP-15,BahZhoErn-JCP-16,VucIroWagTeaGor-PCCP-17}. 

In this context, an important question, which we have addressed here for the special case of spherically symmetric densities, is whether {\em approximate} co-motion functions (or maps) can provide reasonable solutions with a meaningful functional derivative that can be used in the Kohn-Sham equations. In particular, we have shown that
\begin{itemize}
	\item The co-motion functions conjectured in Ref.~\onlinecite{SeiGorSav-PRA-07} are not always optimal, but even in the case of non optimality yield an interaction energy that is numerically very close to the minimum one;
	\item It is very difficult to predict for which spherically symmetric densities the solution of Ref.~\onlinecite{SeiGorSav-PRA-07} is the actual minimizer (see Fig.~\ref{isosurface_plan});
	\item Even when not optimal, the co-motion functions conjectured in Ref.~\onlinecite{SeiGorSav-PRA-07} provide a well defined approximation for the Hartree-exchange-correlation energy whose functional derivative can still be computed via the powerful shortcut of Eq.~\eqref{eq:nablav}.
\end{itemize}
The fact that a conceptually simple approximation such as SGS \cite{SeiGorSav-PRA-07} yields very accurate results for the strong-interaction limit and allows us to compute easily the functional derivative of a highly non-local functional suggests that it might be possible to build new exchange-correlation functionals by using physically motivated {\em approximate} co-motion functions, a route that has not been really explored yet. Notice that the results for low density quantum dots of Fig.~1 of Ref.~\onlinecite{MenMalGor-PRB-14}, which showed very good agreement between the self-consistent KS densities obtained with the SGS functional and the accurate Quantum Monte Carlo values, were obtained for cases in which the SGS co-motion functions are actually not optimal (as shown by a small negative eigenvalue in the Hessian, see Sec.~\ref{secHessian}). This is very promising, as it shows that a good approximation for the SIL can be very accurate for systems driven to low density when combined with the KS approach.
In future works we will use our results and insight to improve the approximate exchange-correlation functionals proposed in Refs.~\onlinecite{WagGor-PRA-14,ZhoBahErn-JCP-15,BahZhoErn-JCP-16,VucIroWagTeaGor-PCCP-17}.

\begin{acknowledgments}
	Financial support was provided by the European Research Council under H2020/ERC Consolidator Grant “corr-DFT” [grant number 648932].
\end{acknowledgments}


\appendix
\section{The function $V(r_1,\dotsc,r_N)$}
\label{appEpot}

According to the lines following Eq.~\eqref{angles}, the value of the function $V(r_1,\dotsc,r_N)$ in
Eq.~\eqref{Vspheric} is the minimum electrostatic energy of $N$ equal classical point charges (electrons) that 
are confined to the surfaces of $N$ concentric spheres with radii $r_1,\dotsc,r_N$, respectively.
For its partial derivatives, we here use the notation
\beq
\frac{\partial V}{\partial r_i}=V_i(r_1,\dotsc,r_N),\quad
\frac{\partial^2V}{\partial r_i\partial r_j}=V_{ij}(r_1,\dotsc,r_N).
\eeq

\subsection{General properties}

With the origin $\rv={\bf0}$ at the center of these spheres, let $\{\rv_1,\dotsc,\rv_N\}$, with $|\rv_n|=r_n$ for $n=1,\dotsc,N$,
be a set of electronic equilibrium positions. (By rigid rotation, an infinite number of equivalent sets can be obtained.)
At equilibrium, the force on electron $k$, exerted by the $N-1$ other electrons, must point in radial direction,
\ber
\sum_{i(\ne k)=1}^N\frac{\rv_k-\rv_i}{|\rv_k-\rv_i|^3}\equiv-\frac{\partial C_{\rm Coul}}{\partial\rv_k}
=-V_k(r_1,\dotsc,r_N)\frac{\rv_k}{r_k}.\qquad
\label{VforceSCE}\eer
Setting here $k=1$, $\rv_1=\rv$, and using the SGS positions $\rv_i=\fv^{\rm SGS}_i[\rho](r)$ for a density
$\rho\in{\cal P}_{\rm RAD}$, Eq.~\eqref{vSGSder} yields
\beq
\frac{d}{dr}v_{\rm SGS}[\rho](r)=V_1\Big(f^{\rm SGS}_1[\rho](r),\dotsc,f^{\rm SGS}_N[\rho](r)\Big).
\label{SGSvSCE}\eeq

Obviously, the function $V$ has the symmetries
\ber
V(r_1,\dotsc,r_N)&=&V(r_{\wp(1)},\dotsc,r_{\wp(N)}),\label{symmV}\\
V_k(r_1,\dotsc,r_N)&=&V_{\wp(k)}(r_{\wp(1)},\dotsc,r_{\wp(N)}),\label{symmVk}
\eer
where $\wp$ is any permutation of $1,\dotsc,N$. Therefore, writing $v_{\rm SGS}[\rho](r)=U(r)$ and
$f^{\rm SGS}_n[\rho](r)=f_n(r)$, we find
\beq
U'\big(f_i(r)\big)=V_i\big(f_1(r),\dotsc,f_N(r)\big)\quad(i=1,\dotsc,N).
\label{partialVu}\eeq
Consequently, Eq.~\eqref{constEpot} yields in fact a constant,
\ber
\frac{d}{dr}\widetilde{E}(r)&\equiv&\frac{d}{dr}\Big[V\big(f_1(r),\dotsc,f_N(r)\big)-
\sum_{i=1}^NU\big(f_i(r)\big)\Big]\nonumber\\
&=&\sum_{i=1}^N\Big[V_i(f_1,\dotsc,f_N)-U'(f_i)\Big]f_i'(r)\;=\;0.\qquad
\label{proofEpot}\eer

\subsection{The case $N=2$}

In the case $N=2$, a minimum-energy configuration has the two charges on opposite sides of the origin, with mutual distance $r_1+r_2$. Therefore, we explicitly have
\beq
V(r_1,r_2)=\frac1{r_1+r_2}.
\eeq
Eqs.~\eqref{symmV}, \eqref{symmVk} and \eqref{VforceSCE} are readily verified in this case.
For Eq.~\eqref{proofEpot}, see Eq.~(22) in Ref.~\cite{Sei-PRA-99}.

\subsection{The case $N=3$}

In the case $N=3$, a minimum-energy configuration has the three charges on a plane containing the origin. For $k=1,2$, let $\theta_k$ be the angle between $\rv_k$ and $\rv_3$. Then,
\ber
V(r_1,r_2,r_3)&=&\min_{\theta_1,\theta_2}\widetilde{V}(r_1,r_2,r_3,\theta_1,\theta_2),
\label{eq:minTheta12_0}\eer
where, due to the cosine theorem,
\ber
\widetilde{V}&\equiv&\Big[r_1^2+r_2^2-2r_1r_2\cos(\theta_1+\theta_2)\Big]^{-1/2}\nonumber\\
&+&\sum_{k=1}^2\Big[r_k^2+r_3^2-2r_kr_3\cos\theta_k\Big]^{-1/2}.
\label{eq:minTheta12}\eer
In the trivial case $r_3=0$, we find $\theta_1+\theta_2=\pi$ and
\beq
V(r_1,r_2,0)=\frac1{r_1+r_2}+\frac1{r_1}+\frac1{r_2}.
\eeq
Finding the general function $V(r_1,r_2,r_3)$ explicitly seems to be a difficult task.

Instead, we shall now evaluate $V$ and its partial derivatives $V_i$ and
$V_{ij}$ for the case $r_1,r_2,r_3=a$, when the $N=3$ charges occupy one sphere with
radius $a$ and at equilibrium make an equilateral triangle with side length $a\sqrt{3}$,
\beq
V(a,a,a)=\frac{\sqrt{3}}a.
\label{eq:Vaaa}\eeq
The symmetry of this problem implies for $i=1,2,3$
\ber
V_i(a,a,a)&=&\frac13\sum_{k=1}^3V_k(a,a,a)\nonumber\\
&=&\frac13\frac{d}{da}V(a,a,a)=-\frac1{\sqrt{3}}\,a^{-2},
\label{eq:V1aaa}\eer
and, since $V_{12}(a,a,a)=V_{23}(a,a,a)=V_{13}(a,a,a)$, as well as
$V_{11}(a,a,a)=V_{22}(a,a,a)=V_{33}(a,a,a)$,
\ber
\frac2{\sqrt{3}}\,a^{-3}&\equiv&\frac{d}{da}V_i(a,a,a)\nonumber\\
&=&\sum_{k=1}^3V_{ik}(a,a,a)\nonumber\\
&=&V_{33}(a,a,a)+2V_{12}(a,a,a).
\eer
Since $V_{33}(a,a,a)=\frac4{5\sqrt{3}}\,a^{-3}$, Eq.~\eqref{eq:V33aaa} below, we have
\beq
V_{12}(a,a,a)=\frac3{5\sqrt{3}}\,a^{-3}.
\label{eq:V12aaa}\eeq
In summary, we obtain the Taylor expansion
\ber
V(r_1,r_2,r_3)&=&\frac{\sqrt{3}}a\Big[1-\frac{u_1+u_2+u_3}{3a}\nonumber\\
&+&\frac{u_1u_2+u_1u_3+u_2u_3}{5a^2}+2\frac{u_1^2+u_2^2+u_3^2}{15a^2}\nonumber\\
&+&O\Big(\frac{u_1}a,\frac{u_2}a,\frac{u_3}a\Big)^3\Big],
\label{TaylorV}\eer
where $u_i=r_i-a$.

To find $V_{33}(a,a,a)=\frac{d^2}{dr^2}V(a,a,r)|_{r=a}$, we observe that any equilibrium configuration with $r_1=r_2=a$ has equal angles $\theta_1=\theta_2\equiv\theta$. In this case, Eq.~\eqref{eq:minTheta12} reads
\beq
\widetilde{V}=\frac1a\Big\{\frac1{2\sin\theta}+\frac2{\sqrt{1+s^2-2s\cos\theta}}\Big\}\equiv\frac{\widetilde{W}(s,\theta)}a,
\eeq
with the new variable $s=\frac{r_3}a$. For $s=1$, the equilibrium angle is $\theta=\frac{2\pi}3$. For $\widetilde{W}(s,\frac{2\pi}3+\alpha)\equiv W(s,\alpha)$, the addition theorems yield
\ber
W(s,\alpha)&=&\frac1{\sqrt{3}\cos\alpha-\sin\alpha}\nonumber\\
&&+2\Big[1+s^2+s\big(\cos\alpha+\sqrt{3}\sin\alpha\big)\Big]^{-1/2}\nonumber\\
&=&W(s,0)+\sum_{n=1}^\infty W_n(s)\alpha^n.
\label{Wsalpha}\eer
The equilibrium angle $\alpha(s)$ is fixed by $\frac{\partial}{\partial\alpha}W(s,\alpha)=0$,
\beq
0=W_1(s)+2W_2(s)\alpha(s)+3W_3(s)\alpha(s)^2 + \dotsb
\eeq
Taking the derivative $\frac{d}{ds}$ yields
\ber
0&=&W'_1(s)+2\Big[W'_2(s)\alpha(s)+W_2(s)\alpha'(s)\Big]\nonumber\\
&+&3\Big[W'_3(s)\alpha(s)+2W_3(s)\alpha'(s)\Big]\alpha(s)+ \dotsb
\eer
Setting $s=1$ and using $\alpha(1)=0$, we obtain
\beq
\alpha'(1)=-\frac{W'_1(1)}{2W_2(1)}=-\frac{\sqrt{3}}{15},
\label{eq:alphaP1}\eeq
where we have used $W_1(s)=\frac13-s\sqrt{3}(1+s+s^2)^{-3/2}$ and $W_2(s)=\frac5{6\sqrt{3}}+\frac14(2s+11s^2+2s^3)(1+s+s^2)^{-5/2}$.
Eventually, we find
\ber
V_{33}(a,a,a)&\equiv&\frac{d^2}{dr^2}V(a,a,r)\Big|_{r=a}\nonumber\\
&=&\frac1{a^2}\frac{d^2}{ds^2}\,\frac{W\big(s,\alpha(s)\big)}a\Big|_{s=1}\nonumber\\
&=&\frac1{a^3}\Big[\frac{\partial^2W}{\partial s^2}+2\frac{\partial^2W}{\partial s\partial\alpha}\alpha'(s)+\nonumber\\
&&\frac{\partial^2W}{\partial\alpha^2}\alpha'(s)^2+\frac{\partial W}{\partial\alpha}\alpha''(s)\Big]\Big|_{s=1}.\qquad
\label{eq:V33}\eer
The partial derivatives of $W=W(s,\alpha)$ are readily evaluated from Eq.~\eqref{Wsalpha}. As expected, $\frac{\partial W}{\partial\alpha}|_{s=1}=0$.
With Eq.~\eqref{eq:alphaP1}, the remaining three terms in Eq.~\eqref{eq:V33} yield
\beq
V_{33}(a,a,a)=\frac4{5\sqrt{3}}a^{-3}.
\label{eq:V33aaa}\eeq

\section{The functions $S(x)$}
\label{appS}

For a given $N=2,3,4,\dotsc$, each real number $x\in[0,1]$ always has a unique representation in the form
\beq
x=\sum_{\ell=1}^\infty\frac{n_\ell}{N^\ell},\qquad n_\ell\in\{0,1,2,\dotsc,N-1\}.
\eeq
$n_\ell$ is the $\ell$-th digit of the $N$-fraction representing $x$ (decimal fraction when $N=10$). 
We define $S(x)$ as the function that raises each digit of $x$ by 1. More precisely, in terms of the 
particular permutation $\wp$ with $\wp(n_\ell)=n_\ell+1$ for $0\le n_\ell\le N-2$ and $\wp(N-1)=0$, we 
define
\beq
S(x)\equiv S\bigg(\sum_{\ell=1}^\infty\frac{n_\ell}{N^\ell}\bigg)=\sum_{\ell=1}^\infty\frac{\wp(n_\ell)}{N^\ell}.
\eeq
Since $\wp^N(n_\ell)=n_\ell$, we then trivially have
\beq
S^N(x)\equiv\sum_{\ell=1}^\infty\frac{\wp^N(n_\ell)}{N^\ell}=x.
\label{eq:SnN}\eeq
Since $\{\wp(n_\ell),\wp^2(n_\ell),\dotsc,\wp^N(n_\ell)\}=\{0,1,\dotsc,N-1\}$, we similarly obtain
\ber
\sum_{n=1}^NS^n(x)&=&\sum_{n=1}^N\left(\sum_{\ell=1}^\infty\frac{\wp^n(n_\ell)}{N^\ell}\right)\nonumber\\
&=&\sum_{\ell=1}^\infty\frac1{N^\ell}\left(\sum_{n=0}^{N-1}n\right)=\frac{N}2.
\label{eq:sumSn}\eer

In the case $N=2$, Eq.~\eqref{eq:sumSn} implies
\beq
S(x)=1-x.
\eeq
In the cases $N\ge3$, in contrast, $S(x)$ is a discontinuous function whose graph is a fractal.

To see this, we consider for $x\in[0,1]$ and $k=1,2,3,\dotsc$ the functions $T_k(x)$, that raise only the 
$k$-th digit of $x$ by 1 and leave all other digits unchanged. Formally,
\beq
T_k(x)=\sum_{\ell=1}^{k-1}\frac{n_\ell}{N^\ell}
+\frac{\wp(n_k)}{N^k}+\sum_{\ell=k+1}^\infty\frac{n_\ell}{N^\ell}.
\eeq
An explicit expression, valid for $x<1$, is easily found: For $y\in\R$, let $[y]$ be the largest integer 
with $[y]\le y$ and consider the function $U:\R\to\R$, $y\mapsto U(y)$, with
\beq
U(y)=\left\{\begin{array}{ccc}y+\frac1N&&\big(y-[y]<\frac{N-1}N\big),\\&&\\
y-\frac{N-1}N&&\big(y-[y]\ge\frac{N-1}N\big).\end{array}\right.
\eeq
Then, for $0\le x<1$, we have
\beq
T_k(x)=\frac{U(N^{k-1}x)}{N^{k-1}}\qquad(0\le x<1).
\label{eq:TkDef}\eeq

$T_1(x)$, $T_2(x)$ and $T_3(x)$ are plotted in Fig.~\ref{fig:T} for the cases $N=2$ and $N=3$. Obviously,
\beq
\lim_{k\to\infty}T_k(x)\equiv x.
\eeq

\begin{figure}
\includegraphics[width=\columnwidth]{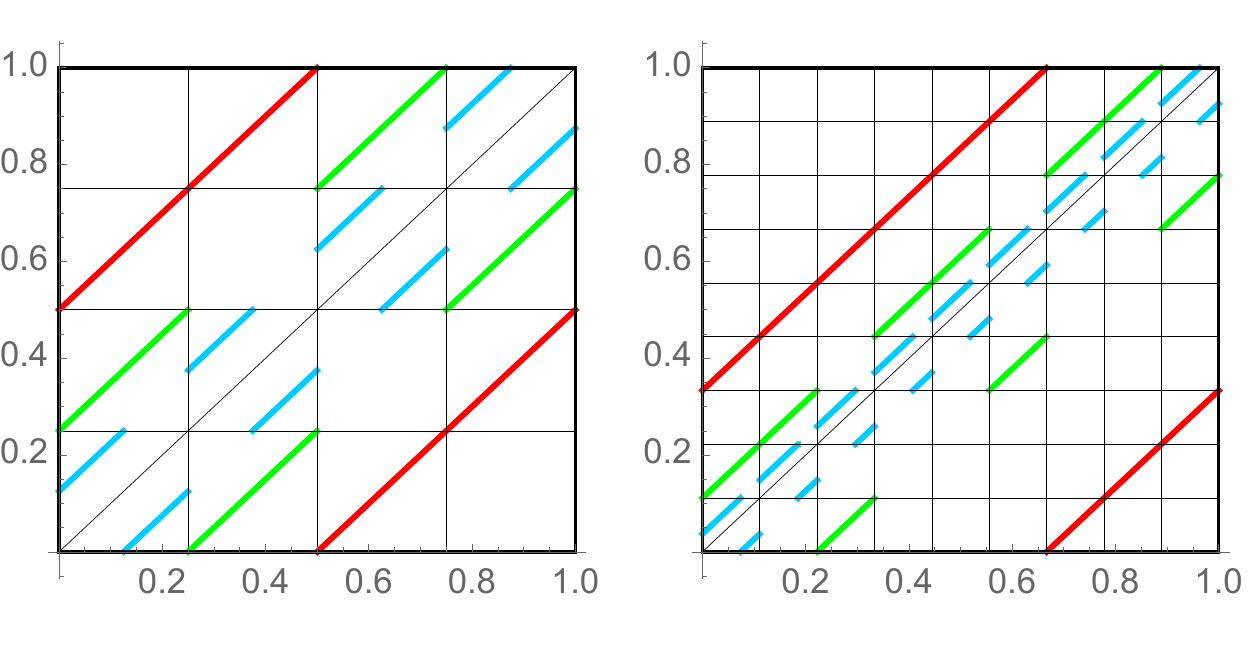} 
\caption{The functions $T_1(x)$, $T_2(x)$, and $T_3(x)$ (in red, green, and blue, respectively)
for the cases $N=2$ (left panel) and $N=3$ (right panel).}
\label{fig:T}
\end{figure}

By composition, we define further functions,
\beq
S_k(x)=T_k(T_{k-1}(\dotso\,T_1(x)\dotso))\qquad(k=1,2,3,\dotsc).
\label{SkDef}\eeq
$S_1(x)$, $S_2(x)$ and $S_3(x)$ are plotted in Fig.~\ref{fig:S} for the cases $N=2$ and $N=3$.
By definition,
\beq
\lim_{k\to\infty}S_k(x)=S(x).
\eeq
Fig.~\ref{fig:S} clearly illustrates for $N=2$ that $S_k(x)\to1-x$ as $k\to\infty$,
while for $N=3$ the graph of $S_k(x)$ becomes a fractal in that limit.

\begin{figure}
\includegraphics[width=\columnwidth]{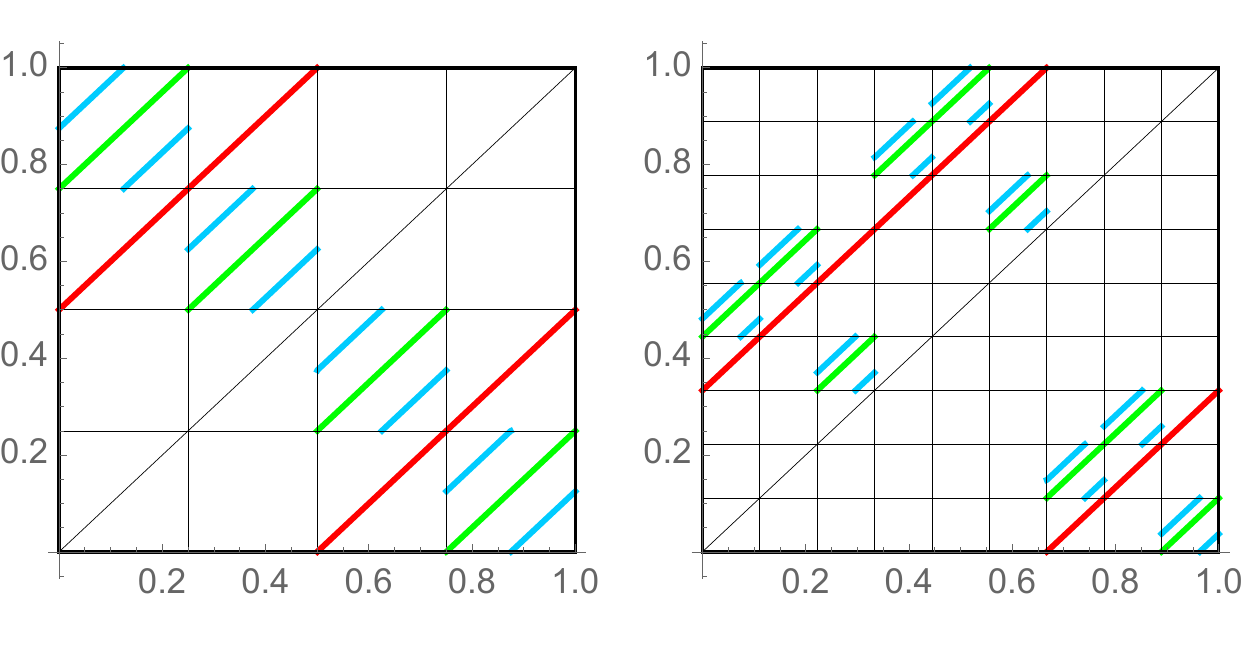} 
\caption{The functions $S_1(x)\equiv T_1(x)$, $S_2(x)$, and $S_3(x)$ (in red, green, and blue, 
respectively) for the cases $N=2$ (left panel) and $N=3$ (right panel).}
\label{fig:S}
\end{figure}

Focusing on the case $N=3$, we now derive Eqs.~\eqref{Sineq} and \eqref{intSavs}.
In terms of the equidistant numbers
\beq
x_m=\frac{m}{3^k},\qquad m\in\big\{0,1,2,\dotsc,3^k\big\},
\eeq
we consider the intervals $I_m=[x_{m-1},x_m)$.

Any $x\in I_m$ has a unique representation
\beq
x=\underbrace{\sum_{\ell=1}^k\frac{n_\ell}{3^\ell}}_{\displaystyle x_{m-1}}+\sum_{\ell=k+1}^\infty\frac{n_\ell(x)}{3^\ell},
\eeq
where, for a fixed value of $m$, the first $k$ coefficients $n_1,\dotsc,n_k$ do not depend on $x$. By definition, we have
\ber
S(x)&=&\sum_{\ell=1}^k\frac{\wp(n_\ell)}{3^\ell}+\sum_{\ell=k+1}^\infty\frac{\wp(n_\ell(x))}{3^\ell}\nonumber\\
&=&S(x_{m-1})+\sum_{\ell=k+1}^\infty\frac{\wp(n_\ell(x))-1}{3^\ell}.
\label{SxIm}\eer
In the second step, we have used $\wp(0)=1$, implying that 
\beq
S(x_{m-1})=\sum_{\ell=1}^k\frac{\wp(n_\ell)}{3^\ell}+\sum_{\ell=k+1}^\infty\frac1{3^\ell}.
\eeq
The permutation $\wp$ in the case $N=3$ is given by
\beq
\wp(0)=1,\quad\wp(1)=2,\quad\wp(2)=0.
\eeq
Averaging Eq.~\eqref{SxIm} over all $x\in I_m$ yields Eq.~\eqref{intSavs},
\beq
\big\langle S(x)\big\rangle_m=S(x_{m-1}),
\label{avS}\eeq
since for each $\ell\ge k+1$, $\wp(n_\ell(x))$ assumes its values 0, 1 or 2 with equal probabilites, 
$\langle\wp(n_\ell(x))-1\rangle_m=0$. Moreover, for $x\in I_m$, we find Eq.~\eqref{Sineq},
\beq
\Big|S(x)-\big\langle S(x)\big\rangle_m\Big|\le\sum_{\ell=k+1}^\infty\frac1{3^\ell}=\frac1{2\cdot3^k}.
\eeq

\section{Systematic minimization in Eq.~\eqref{eq:VeeSCE_OT}}
\label{app:minFRC}

\subsection{Simplification for spherical densities $\rho(r)$}

For convenience, we focus here on the case with $d=3$ dimensions. When $\rho$ is a spherical density, $\rho\in{\cal P}_{\rm RAD}$,
any probability measure $\gamma\in\Pi(\R^{3N},\rho)$, formally written as a regular function $\gamma(\rv_1,\dotsc,\rv_N)$ here,
corresponds to a simpler one $\beta\in\Pi(\Rp^N,\mu)$, given by
\beq
\beta(r_1,\dotsc,r_N)=r_1^2 \dotsm r_N^2\int d\Omega_1\dotsi\int d\Omega_N\,\gamma(\rv_1,\dotsc,\rv_N).
\label{gamGam}\eeq
Here, $\Rp=\R_0^+$ and $\mu(r)=4\pi r^2\rho(r)$. $\beta$ has the $N$ identical marginals $\frac{\mu(r)}N$.
In particular, we have
\ber
\langle V\rangle_{\gamma}&=&
\binteg{r_1}{0}{\infty}\dotsi \binteg{r_N}{0}{\infty}\beta(r_1,\dotsc,r_N)\,V(r_1,\dotsc,r_N)\nonumber\\
&\equiv&\langle V\rangle_{\beta},
\label{Vbeta}\eer
and Eq.~\eqref{eq:VeeSCE_OT} can be written as \cite{SeiGorSav-PRA-07}
\beq
V^{\rm SIL}_{ee}[\rho]\;=\;\min_{\beta\in\Pi(\Rp^N,\mu)}\big\langle V\big\rangle_{\beta}.
\label{SILspher2}\eeq

\noindent{\bf Example:}  The Monge (or SCE) type measure $\gamma=\gamma^{\rm SGS}$ describing the SGS ansatz of section \ref{sec:SGSrevisit} corresponds to
\beq
\beta^{\rm SGS}(r_1,\dotsc,r_N)=\frac{\mu(r_1)}N\prod_{n=2}^N\delta\big(r_n-f^{\rm SGS}_n(r_1)\big).
\label{GamSGS}\eeq
Despite not looking symmetric at first glance, $\beta^{\rm SGS}$ does have the correct $N$ identical marginals: (i) Obviously, 
$\integ{r_2}\dotsi\integ{r_N}\,\beta^{\rm SGS}(r_1,\dotsc,r_N)=\frac{\mu(r_1)}N$, and (ii) we also have, e.g.,
\begin{multline}
\integ{r_1}\integ{r_3}\dotsi\integ{r_N}\beta^{\rm SGS}(r_1,\dotsc,r_N) \\
= \integ{r_1}\frac{\mu(r_1)}N\,\delta\big(r_2-f_2(r_1)\big)
=\frac{\frac{\mu(r_0)}N}{\abs{f'_2(r_0)}},\qquad
\end{multline}
where, due to a well known rule for the $\delta$-function, $r_0$ is the radius satisfying $r_2-f_2(r_0)=0$. Employing
Eq.~\eqref{DEqfSGS}, we therefore correctly find
\beq
\frac{\frac{\mu(r_0)}N}{\abs{f'_2(r_0)}}=\frac{\mu\big(f_2(r_0)\big)}N=\frac{\mu(r_2)}N.
\eeq

\subsection{Application to the density $\rho_{a,\varepsilon}(r)$ of Eq.~\eqref{denCex}}

We shall now perform the minimization in Eq.~\eqref{SILspher2} for the density $\rho=\rho_{a,\varepsilon}$ of Eq.~\eqref{denCex},
when $N=3$ and $\mu(r)=\mu_{a,\varepsilon}(r)\equiv4\pi r^2\rho_{a,\varepsilon}(r)=\frac{N}{a\varepsilon}$ for
$a\le r\le a(1+\varepsilon)$. Any $\beta\in\Pi(\Rp^3,\mu_{a,\varepsilon})$ has the identical marginals
\beq
\frac{\mu_{a,\varepsilon}(r_k)}3\equiv\binteg{r_i}{a}{b}\binteg{r_j}{a}{b}\beta(r_1,r_2,r_3)=\frac1{a\varepsilon},
\label{betaMarg}\eeq
where $b=a(1+\varepsilon)$ and $\{i,j,k\}=\{1,2,3\}$.

Substituting in Eq.~\eqref{Vbeta} $r_n=a+\varepsilon ax_n$, $x_n\in[0,1]$, and rearranging quadratic terms in the
Taylor expansion of $V(r_1,r_2,r_3)$, Eq.~\eqref{TaylorV}, we obtain
\ber
\langle V\rangle_\beta
= \binteg{x_1}{0}{1}\binteg{x_2}{0}{1}\binteg{x_3}{0}{1}
(\varepsilon a)^3\tilde{\beta}(x_1,x_2,x_3) \times {} \nonumber\\
\left[\sum_{k=1}^3v_{\rm sep}^{(\varepsilon)}(x_k)+\frac{\sqrt{3}\varepsilon^2}{10a}(x_1+x_2+x_3)^2+
V_{\rm res}^{(\varepsilon)}\right],\quad
\label{betaTildeV}\eer
with $v_{\rm sep}^{(\varepsilon)}(x)=\frac{\sqrt{3}}{3a}(1-\varepsilon x+\frac{\varepsilon^2}{10}x^2)$ and a residual term
\beq
V_{\rm res}^{(\varepsilon)}(x_1,x_2,x_3)=O(\varepsilon^3).
\eeq
For $\tilde{\beta}(\{x_n\})=\beta(\{a+\varepsilon ax_n\})$, Eq.~\eqref{betaMarg} implies
\beq
\binteg{x_i}{0}{1}\binteg{x_j}{0}{1}\tilde{\beta}(x_1,x_2,x_3)=\frac1{(\varepsilon a)^3}\qquad(i\ne j),
\eeq
and the first term in Eq.~\eqref{betaTildeV} can be integrated, yielding
\beq
\sum_{k=1}^3\binteg{x_k}{0}{1}v_{\rm sep}^{(\varepsilon)}(x_k)
=\frac{\sqrt{3}}a\Big[1-\frac{\varepsilon}2+\frac{\varepsilon^2}{30}\Big],
\eeq
cf.~Eq.~\eqref{eq:gammaCfg} for separable interactions.
Since this result does not depend on $\beta$, Eq.~\eqref{SILspher2} now reads
\begin{multline}
V_{ee}^{\rm SIL}[\rho_{a,\varepsilon}]=\frac{\sqrt{3}}a\Big[1-\frac{\varepsilon}2+\frac{\varepsilon^2}{30}\Big]+\\
+\min_{\beta\in\Pi(\Rp^3,\mu_{a,\varepsilon})}
\Big\langle\frac{\sqrt{3}\varepsilon^2}{10a} C_h+V_{\rm res}^{(\varepsilon)}\Big\rangle_\beta,
\label{minVeeCex}\end{multline}
where, in terms of the convex function $h(x)=x^2$,
\ber
C_h(x_1,x_2,x_3)&\equiv&(x_1+x_2+x_3)^2\nonumber\\
&=&h(x_1+x_2+x_3).
\eer
 Any series of minimizers $\beta_\varepsilon$ in Eq.~\eqref{minVeeCex} converges for $\varepsilon\to0$ to a minimizer 
of $\langle C_h\rangle$, since $V_{\rm res}^{(\varepsilon)}=O(\varepsilon^3)$,
\begin{multline}
V_{ee}^{\rm SIL}[\rho_{a,\varepsilon}]=\frac{\sqrt{3}}a\Big[1-\frac{\varepsilon}2+\frac{\varepsilon^2}{30}\Big]+\\
+\frac{\sqrt{3}\varepsilon^2}{10a}\min_{\beta\in\Pi(\Rp^3,\mu_{a,\varepsilon})}
\big\langle C_h\big\rangle_\beta+O(\varepsilon^3).
\label{minVeeCex2}\end{multline}

In the next section we study the minimization problem for $\big\langle C_h\big\rangle$, showing necessary and sufficient conditions that a minimizer should satisfy, which are violated by the SGS ansatz (for details and a complete proof see \cite{Ger-PhDthesis-16}). In particular this will imply that SGS solutions are also not minimizers for $V_{ee}^{\rm SIL}[\rho_{a,\varepsilon}]$ if $\varepsilon$ is small enough.

Moreover, we also show three different examples  of minimizers for the repulsive harmonic cost in the one-dimensional case that can be used as trial plans for $V_{ee}^{\rm SIL}[\rho_{a,\varepsilon}]$ for small $\varepsilon$. 

A similar proof of the fact that SGS minimizers are not always minimizers in \eqref{eq:defSCE}) was also obtained by Colombo and Stra \cite{ColStr-MMMAS-16}, who also showed that $\rho\in{\cal P}_{\rm SGS}\neq \emptyset$.

\subsection{Cost $h(x_1+x_2+x_3)$ with $h$ convex}

Now, we consider cost functions $C_h(x_1,x_2,x_3)=h(x_1+x_2+x_3)$, $h$ convex. We will show that for this class of cost functions, we can construct examples of $SCE$-type minimizers and non $SCE$-type minimizers.

To find a minimizer of $\langle C_h\rangle_\beta$, we consider a particular $\beta=\beta_0\in\Pi(\Rp^3,\mu_{a,\varepsilon})$ which 
is {\em concentrated on the hyperplane} $H=\{(x_1,x_2,x_3)\,|\,x_1+x_2+x_3=NX\}$, thus fixing the average value $X$ of the $N=3$ 
coordinates $x_1,x_2,x_3$. In this case, we obviously have
\beq
\int(x_1+x_2+x_3)\,d\beta_0=NX,\qquad\langle C_h\rangle_{\beta_0}=h(NX).
\eeq
For a general $\beta\in\Pi(\Rp^3,\mu_{a,\varepsilon})$, Eq.~\eqref{eq:gammaCfg} implies
\beq
\int(x_1+x_2+x_3)\,d\beta=\sum_{k=1}^N\int_a^b dr_k\,\frac{\mu_{a,\varepsilon}(r_k)}N\,x_k\equiv N\bar{x}.
\label{barycenter}
\eeq
Consequently, the fixed average value $X$, dictated by $\beta_0$, must satisfy
$X=\bar{x}\equiv\frac{\bar{r}-a}{\varepsilon a}=\frac12$, with the barycenter $\bar{r}$ of the density $\mu_{a,\varepsilon}(r)$,
\beq
\bar{r}\equiv\int_a^b dr\,\frac{\mu_{a,\varepsilon}(r)}N\,r=\frac1{\varepsilon a}\frac{b^2-a^2}2=\frac{a+b}2.
\eeq
Moreover, Jensen's inequality for convex functions yields
\ber
\langle C_h\rangle_\beta&\equiv&\int h(x_1+x_2+x_3)\,d\beta\nonumber\\
&\ge&h\left(\int(x_1+x_2+x_3)\,d\beta\right)\nonumber\\
&=&h(N\bar{x})\equiv h(NX)=\langle C_h\rangle_{\beta_0}.
\label{gamma0minim}\eer
In other words, $\beta_0$ is a minimizer,
\beq
\min_{\beta\in\Pi(\Rp^3,\mu_{a,\varepsilon})}\langle C_h\rangle_{\beta}=h(N\bar{x})=\Big(3\cdot\frac12\Big)^2=\frac94,
\label{expectCh}\eeq
and Eq.~\eqref{minVeeCex2} yields
\beq
V_{ee}^{\rm SIL}[\rho_{a,\varepsilon}]=
\frac{\sqrt{3}}a\Big[1-\frac{\varepsilon}2+\frac{31}{120}\varepsilon^2+O(\varepsilon^3)\Big].
\label{VeeFRCCexExpanApp}\eeq

We shall now construct different examples for measures $\beta_0\in\Pi(\Rp^3,\mu_{a,\varepsilon})$ that are concentrated on the 
hyperplane $H$ and, therefore, are minimizers of $\langle C_h\rangle_{\beta}$ in Eq.~\eqref{minVeeCex2}.
For all these examples, we conclude
\beq
\langle V\rangle_{\beta_0}=V_{ee}^{\rm SIL}[\rho_{a,\varepsilon}]+O(\varepsilon^3).
\eeq

\subsubsection{An SCE-type minimizer}
\label{sec:SCEsystematic}

We now use the fractal co-motion functions $f^{\rm FRC}_n(r)$ of Eq.~\eqref{fFRCdef} to construct an SCE-type probability measure 
$\beta^{\rm FRC}$ with the identical marginals $\frac{\mu_{a,\varepsilon}(r)}N\equiv\frac1{\varepsilon a}$, implying that
$\beta^{\rm FRC}\in\Pi(\Rp^3,\mu_{a,\varepsilon})$. In other words, despite being fractal, the co-motion functions
$f^{\rm FRC}_n(r)$ do belong to an SCE state with the smooth density $\rho_{a,\varepsilon}(r)$ of Eq.~\eqref{denCex}.

In a second step, we shall see further below, that $\beta^{\rm FRC}$ is a minimizer of $\langle C_h\rangle_{\beta}$ in 
Eq.~\eqref{minVeeCex2}.

The fractal function $S(x)$ in Eq.~\eqref{fFRCdef} is the (uniform) limit $k\to\infty$ of the piecewise linear functions $S_k(x)$ 
in Eq.~\eqref{SkDef}. Replacing in Eq.~\eqref{fFRCdef} $S(x)$ with $S_k(x)$, for some finite $k\in\N$, we obtain 
piecewise linear functions 
\begin{subequations}
\begin{align}
f^{\rm FRC}_{k,1}(r) &\equiv r, \\
f^{\rm FRC}_{k,2}(r) &= a+\varepsilon a\cdot S_k\Bigl(\frac{r-a}{\varepsilon a}\Bigr), \\
f^{\rm FRC}_{k,3}(r) &= a+\varepsilon a\cdot S_k\Bigl(S_k\Bigl(\frac{r-a}{\varepsilon a}\Bigr)\Bigr),
\end{align}
\label{fFRCkdef}\end{subequations}
for $r\in[a,b]$, with piecewise constant derivatives
\beq
\frac{d}{dr}f^{\rm FRC}_{k,n}(r)=1,\qquad r\in(c_{m-1},c_m),
\label{derivFRCk}\eeq
where $c_m=a+m\frac{b-a}{3^k}$ and $m=1,\dotsc,3^k$.
For $k=2$, these three functions are plotted in Fig.~\ref{fig:SCEmin} (upper panel).

\begin{figure}
\includegraphics[width=0.5\columnwidth]{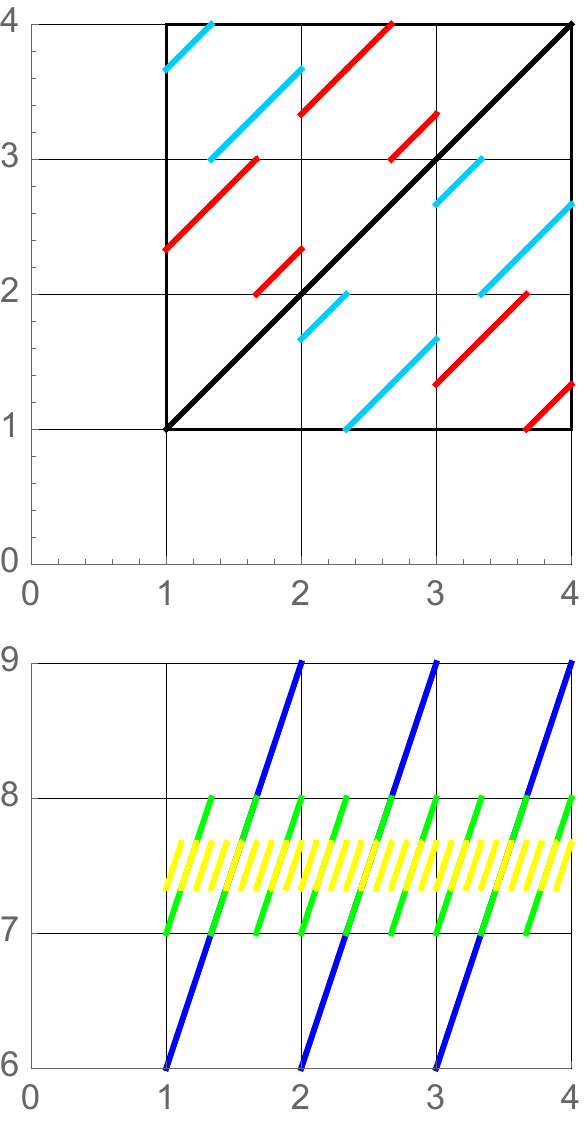} 
\caption{Upper panel: The functions $f^{\rm FRC}_{k,1}(r)=r$, $f^{\rm FRC}_{k,2}(r)$, and $f^{\rm FRC}_{k,3}(r)=
f^{\rm FRC}_{k,2}(f^{\rm FRC}_{k,2}(r))$ for $k=2$ (in black, red and blue, respectively) in the case $a=1$, $\varepsilon=3$
when $r\in[1,4]$.\\
Lower panel: The functions $\psi^{\rm FRC}_k(r)=\sum_{n=1}^3f^{\rm FRC}_{k,n}(r)$, for $k=1$ (dark-blue), $k=2$ (green) and $k=3$ 
(yellow), indicating that the limiting function is a constant, $\psi^{\rm FRC}_\infty(r)=\frac{15}2$.}
\label{fig:SCEmin}
\end{figure}

For a given $k\ge2$, consider for any radial interval $I=[r_A,r_B]\subseteq[a,b]$ the ``strictly correlated'' subset
\beq
\Omega(I)=\left\{\Big(r,f^{\rm FRC}_{k,2}(r),f^{\rm FRC}_{k,3}(r)\Big)\,\Big|\,r\in I\right\}\subseteq\Rp^3
\eeq
of the radial configuration space $\Rp^3=\{(r_1,r_2,r_3)\}$. A particular probability measure $\beta_k^{\rm FRC}$ on $\Rp^3$ is 
specified when we assign to the subsets $\Omega(I)$ the probabilities
\beq
p_{\Omega(I)}\equiv\int_{\Omega(I)}d\beta_k^{\rm FRC}=\frac{r_B-r_A}{b-a},
\eeq
since then $p_{\Omega([a,b])}=1$, and any subset $\Omega\subset\Rp^3$ with $\Omega\cap\Omega([a,b])=\emptyset$ has 
zero probability. This means that the probability measure $\beta_k^{\rm FRC}$ is of the SCE-type.

Now, it is easy to see that $\beta^{\rm FRC}_k\in\Pi(\Rp^3,\mu_{a,\varepsilon})$: In the configurations
$(r_1,r_2,r_3)\in\Omega(I)$, each one of the two coordinates $r_2$ and $r_3$ covers a finite set of no more than $3^k$ disjoint 
subintervals of $[a,b]$. Due to Eq.~\eqref{derivFRCk}, the lengths of these disjoint intervals in both cases add up to the length 
$r_B-r_A$ of $I$. Therefore, $r_1$, $r_2$, and $r_3$ all have the same uniform radial probability density
\beq
\frac{p_{\Omega(I)}}{r_B-r_A}=\frac1{b-a}=\frac1{\varepsilon a}\equiv\frac{\mu_{a,\varepsilon}(r)}N.
\eeq

Furthermore, any SCE-type $\beta\in\Pi(\Rp^3,\mu_{a,\varepsilon})$ is concentrated on the hyperplane $H$ with 
$r_1+r_2+r_3=3\bar{r}\equiv3\frac{a+b}2$, and therefore is a minimizer in Eq.~\eqref{expectCh},
when its radial co-motion functions $f_n(r)$ add up to a constant.
\beq
\sum_{n=1}^3f_n(r)=3\bar{r}\equiv3\frac{a+b}2\qquad(a\le r\le b).
\label{sumrule}\eeq
This condition is violated by the SGS co-motion functions $f^{\rm SGS}_n(r)$ for the density $\rho_{a,\varepsilon}$, see 
Eq.~\eqref{FcexSGS}, but also by the present ones $f^{\rm FRC}_{k,n}(r)$, see the lower panel of Fig.~\ref{fig:SCEmin}.
In the limit $k\to\infty$, however, when the fractal functions $f^{\rm FRC}_n(r)$ of Eq.~\eqref{FcexFRC} are recovered,
the condition is satisfied, see Eq.~\eqref{FcexFRCconst}.

\subsubsection{Non-SCE type minimizers}
\label{sec:nonSCE}

We now consider (Example 4.13 in \cite{DiMGerNen-Survey-15}) a probability measure $\beta_0$ with the (almost continuous)
co-motion functions $f^{\rm VIO}_1(r)\equiv r$ and
\ber
f_2^{\rm VIO}(r)&=&\left\{\begin{array}{cc}r+\bar{r}-a&\quad(a\le r<\bar{r})\\
r+a-\bar{r}&\quad(\bar{r}\le r\le b)\end{array}\right\},\nonumber\\
f_3^{\rm VIO}(r)&=&\left\{\begin{array}{cc}2a+b-2r&\quad(a\le r<\bar{r})\\
a+2b-2r&\quad(\bar{r}\le r\le b)\end{array}\right\}.
\label{FcexVIO}\eer
Since they satisfy Eq.~\eqref{sumrule}, this $\beta_0$ is concentrated on $H$ and therefore a minimizer. However, the functions 
$f_n^{\rm VIO}(r)$ violate the group  relations of section \ref{gSGS}. They do not describe a true SCE state, since
$f_3^{\rm VIO}(r)$ is not an injective function, relating each radius $r_3=f_3^{\rm VIO}(r_1)$ to two different values of $r_1$.
Consistently, these functions do not satisfy the SCE basic differential equation \eqref{DEqfSGS}.

Another minimizer $\beta_0\in\Pi(\Rp^N,\mu)$ (concentrated on $H$) which is {\em not of the SCE type at all}, is given by
\ber
\beta_0(r_1,r_2,r_3)=\frac4{(\varepsilon a)^3}\delta(r_1+r_2+r_3-3\bar{r})\times\nonumber\\
\times\max\Big(|r_1-\bar{r}|,|r_2-\bar{r}|,|r_3-\bar{r}|\Big).
\label{gammaFAT}\eer
The $\delta$-function guarantees that $\beta_0$ is concentrated on $H$ and therefore is certainly a minimizer in 
Eq.~\eqref{minVeeCex2}. However, it is not of the SCE type, since each one of the radii $r_2$ and $r_3$
can, at fixed radius $r_1$, assume arbitrary values. Only their sum $r_2+r_3$ is fixed by $r_1$. 

To show that $\beta_0$ has the correct uniform marginals $\frac{\mu(r_k)}N=\frac1{\varepsilon a}$, it is convenient to 
switch from $r_n\in[a,b]$ to shifted coordinates $s_n=r_n-\bar{r}\in[-c,c]$, where $c=\frac12\varepsilon a$,
\beq
\tilde{\beta}_0(s_1,s_2,s_3)=\frac4{(\varepsilon a)^3}\delta(s_1+s_2+s_3)\max\Big(|s_1|,|s_2|,|s_3|\Big).
\eeq
Obviously, it is sufficient to consider
\ber
\frac{\mu(r_1)}N&=&\int_{-c}^cds_2\int_{-c}^cds_3\,\tilde{\beta}_0(s_1,s_2,s_3)\nonumber\\
&=&\frac4{(\varepsilon a)^3}\int_{-c}^cds_2\,\theta\Big(c-|s_1+s_2|\Big)\times\nonumber\\
&&\times\max\Big(|s_1|,|s_2|,|s_1+s_2|\Big).
\eer
Here, $\theta$ is the Heavyside step function, with $\theta(s)=1$ for $s\ge0$ and $\theta(s)=0$ 
otherwise. We first consider the case $s_1\ge0$, when $\theta(c-|s_1+s_2|)=0$ for $s_2>c-s_1$,
\ber
\frac{\mu(r_1)}N&=&\frac4{(\varepsilon a)^3}\int_{-c}^{c-s_1}ds_2\,\max\Big(|s_1|,|s_2|,|s_1+s_2|\Big)\nonumber\\
&=&\frac4{(\varepsilon a)^3}\int_{-c}^{-s_1}ds_2|s_2|\nonumber\\
&+&\frac4{(\varepsilon a)^3}\int_{-s_1}^0ds_2|s_1|\nonumber\\
&+&\frac4{(\varepsilon a)^3}\int_0^{c-s_1}ds_2|s_1+s_2|\qquad(s_1\ge0).
\eer
In the latter three integrals, we may write, respectively, $|s_2|=-s_2$, $|s_1|=s_1$, $|s_1+s_2|=s_1+s_2$, to find
\beq
\frac{\mu(r_1)}N=\frac1{\varepsilon a}.
\eeq
A similar analysis yields the same result for $s_1\le0$.

\section{Hessian matrix in 3D}
\label{Hess3D}

For the 3D treatment of the problem in section \ref{secHessian}, we use spherical polar coordinates
$\{r_n,\theta_n,\phi_n\}_{n=1,2,3}$ for the vectors $\rv_n$ in Eq.~\eqref{eq:EpotCoulSCE}. Then, Eq.~\eqref{EpotEcal} becomes
\ber
E^{\rm SGS}_{\rm pot}[\rho](\rv_1,\rv_2,\rv_3)&=&{\cal C}(\{r_n,\theta_n,\phi_n\})-\sum_{i=1}^3U(r_i)\nonumber\\
&\equiv&{\cal E}(\{r_n,\theta_n,\phi_n\}),
\eer
where, instead of Eq.~\eqref{CsecHess}, we now have
\beq
{\cal C}(\{r_n,\theta_n,\phi_n\})=\sum_{i=1}^2\sum_{j=i+1}^3\Big[r_i^2-2r_ir_j\cos\gamma_{ij}+r_j^2\Big]^{-1/2},
\label{CpartEpot3D}\eeq
with the angle $\gamma_{ij}$ between the vectors $\rv_i$ and $\rv_j$,
\beq
\cos\gamma_{ij}=\sin\theta_i\sin\theta_j\cos(\phi_i-\phi_j)+\cos\theta_i\cos\theta_j.
\eeq
Writing $(r_1,r_2,r_3,\phi_1,\phi_2,\phi_3,\theta_1,\theta_2,\theta_3)=(q_1,\dotsc,q_9)\equiv q$, the function ${\cal E}(q)$ should
be minimum for $q=q(r)$,
\ber
q(r)&=&\Big(r,f_2(r),f_3(r),0,\tilde{\phi}_2(r),\tilde{\phi}_3(r),{\textstyle\frac{\pi}2,\frac{\pi}2,\frac{\pi}2}\Big)\nonumber\\
&=&\Big(q_1(r),\dotsc,q_9(r)\Big).
\eer

The corresponding Hessian matrix $H^{9\times9}(r)$, given by
\beq
H_{\alpha\beta}^{9\times9}(r)=\frac{\partial^2{\cal E}(q)}{\partial q_\alpha\partial q_\beta}\Big|_{q=q(r)}\qquad
(\alpha,\beta=1,\dotsc,9),
\label{elemHess0}\eeq
has block form: When $q_\alpha\in\{r_1,r_2,r_3,\phi_1,\phi_2,\phi_3\}$ and $q_\beta\in\{\theta_1,\theta_2,\theta_3\}$, we easily 
verify from Eq.~\eqref{CpartEpot3D} that
\beq
H_{\alpha\beta}^{9\times9}(r)\equiv\frac{\partial^2{\cal C}(q)}{\partial q_\alpha\partial q_\beta}\Big|_{q=q(r)}=0.
\eeq
On the other hand, we obviously have
\beq
H_{\alpha\beta}^{9\times9}(r)=H_{\alpha\beta}(r)\qquad(\alpha,\beta\le6),
\eeq
with the corresponding $(6\times6)$-matrix $H(r)$ from the 2D treatment of section \ref{secHessian}.
Consequently, six eigenvalues of $H^{9\times9}(r)$ are identical with the ones of $H(r)$, and
the remaining three eigenvalues are identical with the ones of the $(3\times3)$-matrix $H^{(\theta)}(r)$, given by
\beq
H_{ij}^{(\theta)}(r)=\frac{\partial^2{\cal C}(q)}{\partial\theta_i\partial\theta_j}\Big|_{q=q(r)}\qquad
(i,j=1,2,3).
\eeq

The frequencies $\omega_\alpha$ of the new eigenmodes $e_{7,8,9}$ are obtained from the eigenvalues
$m\,\omega_\alpha^2$ of the $(3\times3)$-matrix $K^{(\theta)}(r)=M^{-1}H^{(\theta)}M^{-1}$, with the diagonal matrix
$M={\rm diag}(r,f_2(r),f_3(r))$.

\section{An entropic inequality}
\label{EntrIne}
Consider the $N-$marginals Monge-Kantorovich (namely $V_{ee}^{\rm SIL}$) problem with the Coulomb cost and all marginals equal to $\rho(\rv)$ (where we have assumed that $\rho$ is a measure absolutely continuous with respect the $d-$dimensional Lebesgue measure) 
\beq
V_{ee}^{\rm SIL}[\rho]=\min_{\gamma\in\Pi(\R^{Nd},\rho)}\langle C_{\rm Coul}\rangle_\gamma,
\eeq
and the entropic regularization 
\beq
\label{entropic-reg}
\S_{N,\T}[\rho]\equiv\min_{\gamma\in\Pi(\R^{Nd},\rho)} \H(\gamma|\eta_\T),
\eeq
where $\eta_\T\equiv\frac{1}{L}\exp{(-\sum_{i<j}\frac{1}{\T|\rv_i-\rv_j|})}\otimes_{i=1}^K d\rv_i$ ($L$ is the normalization constant)
and the relative entropy is defined as \[ \H(\mu|\nu)=\integ{\rv}\mu\log(\frac{\mu}{\nu}). \]
We show now that problem (\ref{entropic-reg}) with a fixed parameter $\T$ is a lower bound of the Levy-Lieb functional.\\
Take a plan $\gamma(\rv_1,\cdots,\rv_N) = \vert \psi(\rv_1,\cdots,\rv_N)\vert^2$ (it is obvious that $\sqrt{\gamma}\in\Hun(\R^{Nd})$), then the Levy-Lieb functional $F^{LL}[\rho]$ reads as
\beq
F^{LL}[\rho]\equiv \inf_{\gamma\in\Pi(\R^{Nd},\rho)}  \dfrac{\hbar^2}{2}\integ{\rv_1}\dotsi\integ{\rv_{N}}|\nabla\sqrt{\gamma}|^2+ 
  \langle C_{\rm Coul}\rangle_\gamma.
\eeq
 We can establish the following result
 \begin{teo}[Entropy Lower bound,\cite{DimNeN-inprep-17,Nen-16}]
 \label{low-bound}
 Let be $\rho\in\mathcal{P}(\R^d)$ and $\psi\in\Hun(\R^{Nd};\R)$, then the following inequality holds
 \beq
 \label{entropic-ineq}
 F^{LL}[\rho]\geq\S_{N,\T}[\rho],
 \eeq
 with $\T=\dfrac{\pi\hbar^2}{2}$.
 \end{teo}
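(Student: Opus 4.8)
The plan is to bound the Levy--Lieb functional from below by throwing away the kinetic part judiciously rather than simply dropping it. The naive bound $F^{LL}[\rho]\ge \langle C_{\rm Coul}\rangle_\gamma \ge V_{ee}^{\rm SIL}[\rho]$ is too weak to reach $\S_{N,\T}[\rho]$, because the entropic functional is strictly larger than $V_{ee}^{\rm SIL}$. Instead, I would start from the key observation that for any $\gamma = |\psi|^2$ with $\sqrt\gamma\in \Hun(\R^{Nd})$, the von~Weizs\"acker-type identity holds: writing $\gamma = e^{-\varphi}$ (so $\varphi = -\log\gamma$), one has $|\nabla\sqrt\gamma|^2 = \tfrac14 \gamma\,|\nabla\varphi|^2$, hence
\beq
\frac{\hbar^2}{2}\integ{^N\rv}|\nabla\sqrt\gamma|^2 = \frac{\hbar^2}{8}\integ{^N\rv}\gamma\,|\nabla\log\gamma|^2 .
\eeq
The goal is then to show that this Fisher-information term, together with $\langle C_{\rm Coul}\rangle_\gamma$, dominates $\H(\gamma|\eta_\T)$ for the stated value of $\T$.

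The mechanism I would use is the logarithmic Sobolev inequality for the Gaussian, or more precisely the fact that relative entropy is controlled by a combination of the cost and the Fisher information. Write out
\beq
\H(\gamma|\eta_\T) = \integ{^N\rv}\gamma\log\gamma + \frac{1}{\T}\langle C_{\rm Coul}\rangle_\gamma + \log L,
\eeq
since $-\log\eta_\T = \frac1\T\sum_{i<j}\frac1{|\rv_i-\rv_j|} + \log L$. So the inequality $F^{LL}\ge \S_{N,\T}$ would follow from the pointwise-in-$\gamma$ estimate
\beq
\frac{\hbar^2}{8}\integ{^N\rv}\gamma\,|\nabla\log\gamma|^2 + \Bigl(1-\frac1\T\Bigr)\langle C_{\rm Coul}\rangle_\gamma \;\ge\; \integ{^N\rv}\gamma\log\gamma + \log L ,
\eeq
and then taking the infimum over admissible $\gamma$ on the left while using that the minimizer of the right side over $\Pi(\R^{Nd},\rho)$ defines $\S_{N,\T}$. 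With $\T = \pi\hbar^2/2$ the coefficient $\hbar^2/8$ becomes exactly $\T/(4\pi)$, which is the critical constant in the Euclidean logarithmic Sobolev / Gross inequality relating the entropy of a density to its Fisher information (the Gaussian saturates $\mathrm{Ent}(f^2)\le \tfrac{1}{2\pi}\!\int|\nabla f|^2$ in the normalization where $L$ is the Gaussian partition function); the residual Coulomb term, being nonnegative and with coefficient $1-1/\T$, can only help once $\hbar$ is such that $\T>1$, and more carefully one tracks that the Coulomb contribution to $\eta_\T$ is precisely what is subtracted. The cleanest route is: fix $\gamma$, compare $\H(\gamma|\eta_\T)$ to $\H(\gamma|\mathcal G)$ where $\mathcal G$ is the standard Gaussian on $\R^{Nd}$, use $\H(\gamma|\eta_\T) = \H(\gamma|\mathcal G) + \int \gamma\log(\mathcal G/\eta_\T)$, bound $\H(\gamma|\mathcal G)$ by the Fisher information via Gross's inequality, and check that the leftover term $\int\gamma\log(\mathcal G/\eta_\T)$ is absorbed by $\langle C_{\rm Coul}\rangle_\gamma$ plus constants.

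The main obstacle, I expect, is matching the normalization constants and the additive-constant bookkeeping: the Gaussian reference measure carries a quadratic potential $\tfrac12|\rv|^2$ that does not appear in $\eta_\T$ (which has only the Coulomb potential and a flat Lebesgue part), so the comparison $\int\gamma\log(\mathcal G/\eta_\T)$ involves $\tfrac12\int\gamma\sum_i|\rv_i|^2$, which is \emph{not} obviously controlled. This means the bare Gross inequality is not quite the right tool; one likely needs instead the logarithmic Sobolev inequality on $\R^{Nd}$ with respect to Lebesgue measure in its scale-invariant (Weissler / optimal Euclidean LSI) form, which states $\mathrm{Ent}_{dx}(f^2)\le \tfrac{n}{2}\log\!\bigl(\tfrac{1}{2\pi e\, n}\!\int|\nabla f|^2\bigr)\,\int f^2$ for $f$ normalized, and then use Jensen/convexity to linearize the logarithm around the right point so as to produce the clean inequality with constant $\T=\pi\hbar^2/2$. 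Handling this linearization — and verifying it is tight exactly at $\T=\pi\hbar^2/2$ — is the delicate step; everything else (the Fisher-information rewriting, nonnegativity of the Coulomb remainder, passing to the infimum) is routine. I would therefore structure the proof as: (i) rewrite kinetic energy as Fisher information; (ii) expand $\H(\gamma|\eta_\T)$ and isolate $\int\gamma\log\gamma$; (iii) invoke the optimal Euclidean LSI and linearize to get $\int\gamma\log\gamma \le \tfrac{\T}{4\pi}\int\gamma|\nabla\log\gamma|^2/(4) + \mathrm{const}$ in the appropriate normalization; (iv) collect the Coulomb terms, using their nonnegativity; (v) take $\inf_\gamma$ on both sides.
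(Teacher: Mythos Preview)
Your overall architecture is right: rewrite the kinetic energy as Fisher information, bound the Boltzmann entropy $\int\gamma\log\gamma$ by the Fisher information via a log--Sobolev inequality, and then add the Coulomb term to both sides to reconstitute $\H(\gamma|\eta_\T)$. That is exactly what the paper does. The difference lies in \emph{which} LSI is invoked and how the quadratic--potential obstacle you flagged is handled.

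The paper does not go through Weissler's scale--invariant Euclidean LSI and a linearization. Instead it proves directly (its Theorem~\ref{teo-lsi}) the clean inequality
\[
\int_{\R^{Nd}}\gamma\log\gamma \;\le\; \frac{1}{\pi}\int_{\R^{Nd}}|\nabla\sqrt\gamma|^2,
\]
with no additive dimension--dependent constant, by the following short trick: for each center $\rv_2$, apply the Bakry--\'Emery LSI to the Gaussian $\nu_{\rv_2}=e^{-\pi|\rv_1-\rv_2|^2}$ (curvature $\kappa=2\pi$), use that $\nu_{\rv_2}\le\mathcal L^d$ so $\int f\,d\nu_{\rv_2}\le 1$, and then \emph{integrate over all centers} $\rv_2$. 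The averaging kills the unwanted quadratic potential $\pi|\rv_1-\rv_2|^2$ because $\int e^{-\pi|\rv_1-\rv_2|^2}d\rv_2=1$. This bypasses entirely the issue you correctly identified with a fixed Gaussian reference, and it avoids the linearization bookkeeping you were worried about. Your Weissler route would also succeed (and in fact yields a slightly stronger inequality with an extra $-Nd$ on the right), but it is longer and the constant--tracking you anticipated is unnecessary once you have the averaged--Gaussian LSI.

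One simplification you should make: there is no ``residual Coulomb term'' to control and no need for the condition $\T>1$. Once the LSI gives $\tfrac{\hbar^2}{2}\int|\nabla\sqrt\gamma|^2\ge \T\int\gamma\log\gamma$, you simply add $\langle C_{\rm Coul}\rangle_\gamma$ to both sides and recognize the right--hand side as $\H(\gamma|\eta_\T)$ (up to normalization). The Coulomb cost is not split; it is carried over intact. Your displayed inequality with the $(1-1/\T)$ coefficient and the subsequent discussion of signs is an unnecessary detour.
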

 In order to prove theorem \ref{low-bound} we need some useful results on the logarithmic Sobolev inequality (LSI) for the Lebesgue measure.
 \begin{cor}[Corollary 7.3, \cite{GozChr-MPRF-10}] 
 \label{lsi-gaussian}
 Let us consider $\nu \in \mathcal{P}(\mathbb{R}^d)$ such that $\nu(\rv)=e^{-V(\rv)}$ with $\Dd^2V \geq \kappa Id$. 
 Then, for every $f \geq 0$ such that $f\nu \in \mathcal{P}(\mathbb{R}^d)$ we have that
 \beq 
 \label{eqn:gozin}
 \H(f\nu|\nu) \leq \frac 2{\kappa} \int | \nabla \sqrt{f}|^2 \, d \nu.
 \eeq
 \end{cor}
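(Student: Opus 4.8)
The plan is to establish this logarithmic Sobolev inequality by the Bakry--\'Emery semigroup method, using the curvature hypothesis $\Dd^2V\ge\kappa\,\mathrm{Id}$ as a lower Ricci-type bound in the sense of $\Gamma$-calculus. Let $L=\Delta-\nabla V\cdot\nabla$ be the diffusion generator, symmetric in $L^2(\nu)$ in the sense that $\int (Lu)\,v\,d\nu=-\int\nabla u\cdot\nabla v\,d\nu$, and let $(P_t)_{t\ge0}$ be the associated Markov semigroup; since $\kappa>0$, the measure $\nu$ is the unique invariant probability and $P_tg\to\int g\,d\nu$ as $t\to\infty$. First I would reduce to the case in which $V$ is smooth with $\Dd^2V\ge\kappa\,\mathrm{Id}$ and $f$ is smooth, bounded, and bounded away from zero: the general statement then follows by truncation and approximation together with Fatou's lemma (if $\int|\nabla\sqrt f|^2\,d\nu=\infty$ there is nothing to prove). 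In this regular regime $u_t:=P_tf$ is smooth and strictly positive for all $t\ge0$.

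The second step is to differentiate the relative entropy along the flow. Setting $\Lambda(t):=\H(u_t\nu\,|\,\nu)=\int u_t\log u_t\,d\nu$ and using $\int u_t\,d\nu\equiv 1$ together with an integration by parts, one gets the entropy--dissipation identity
\beq
\Lambda'(t)=\int (Lu_t)\,\log u_t\,d\nu=-\int\frac{|\nabla u_t|^2}{u_t}\,d\nu\equiv -I(t),
\eeq
where $I(t)$ is the Fisher information of $u_t\nu$ relative to $\nu$. Since $\Lambda(t)\to\H(\nu|\nu)=0$ as $t\to\infty$, integrating over $[0,\infty)$ yields the representation $\H(f\nu\,|\,\nu)=\int_0^\infty I(t)\,dt$.

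The crux is the exponential decay of $I(t)$. Differentiating $I$ along the semigroup and invoking Bochner's identity $\Gamma_2(h)=\|\Dd^2h\|_{\mathrm{HS}}^2+\Dd^2V(\nabla h,\nabla h)$ with the hypothesis $\Dd^2V\ge\kappa\,\mathrm{Id}$ (hence $\Gamma_2(h)\ge\kappa\,|\nabla h|^2$, i.e.\ the $\mathrm{CD}(\kappa,\infty)$ condition) gives
\beq
I'(t)\le-2\kappa\,I(t),\qquad\text{so that}\qquad I(t)\le e^{-2\kappa t}\,I(0).
\eeq
Inserting this into the representation above produces $\H(f\nu\,|\,\nu)\le\frac{1}{2\kappa}\,I(0)$, and since $I(0)=\int\frac{|\nabla f|^2}{f}\,d\nu=4\int|\nabla\sqrt f|^2\,d\nu$, this is exactly the claimed bound $\H(f\nu\,|\,\nu)\le\frac{2}{\kappa}\int|\nabla\sqrt f|^2\,d\nu$.

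I expect the main obstacle to be the analytic bookkeeping rather than the underlying computation: one must justify the smoothness and strict positivity of $P_tf$, the differentiation under the integral sign defining $\Lambda'(t)$ and $I'(t)$, the integrability needed for the integrations by parts, and the limit $t\to\infty$, and then transfer the inequality from the class of smooth $(f,V)$ to the general one. A fully rigorous treatment typically fixes these points by working first with $V$ replaced by a smooth strongly convex approximation and $f$ truncated, then passing to the limit; an alternative that sidesteps the $\Gamma_2$ calculus is the Otto--Villani route through the HWI inequality, exploiting $\kappa$-displacement convexity of the entropy along $W_2$-geodesics, but the semigroup argument above is the most economical.
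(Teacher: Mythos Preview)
Your argument via the Bakry--\'Emery $\Gamma_2$ calculus is correct and is the standard route to this logarithmic Sobolev inequality under the curvature condition $\Dd^2V\ge\kappa\,\mathrm{Id}$; the entropy-dissipation identity, the $\mathrm{CD}(\kappa,\infty)$ bound $I'(t)\le-2\kappa I(t)$, and the final integration are all right, and your remarks on the approximation bookkeeping are appropriate.

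However, there is nothing to compare against: the paper does not supply its own proof of this statement. It is quoted verbatim as Corollary~7.3 from the cited survey of Gozlan and L\'eonard and then used as a black box in the proof of the subsequent Theorem (the LSI for Lebesgue measure). So your proposal is not ``the same as'' or ``different from'' the paper's proof---it simply fills in what the paper takes for granted. For the record, the semigroup argument you sketch is essentially the one in the cited reference, so you have reconstructed the intended background.
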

 Notice that, thanks to the $1$-homogeneity of both sides of the inequality with respect to $f$, one can forget the constraint $f \nu \in \mathcal{P}(\R^d)$. 
 Now we are ready to state our result for the Lebesgue measure:
 \begin{teo}[LSI,\cite{DimNeN-inprep-17,Nen-16}] 
 \label{teo-lsi}
 Let $f\geq 0$ be a function such that $\sqrt{f} \in \Hun(\mathbb{R}^d)$ and $f\mathcal{L}^d\in\P(\R^d)$. Then the following holds:
 \begin{equation}\label{eqn:mainineq}
 \H(f \mathcal{L}^d|\mathcal{L}^d) \leq \frac 1{\pi} \integ{\rv_1} | \nabla \sqrt{f}|^2 .
 \end{equation}
 \end{teo}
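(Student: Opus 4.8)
The plan is to derive this sharp Euclidean (Lebesgue) logarithmic Sobolev inequality from the Gaussian one of Corollary~\ref{lsi-gaussian}, using the curvature $\kappa$ of the Gaussian reference measure as a free parameter to be optimized at the end. For $\kappa>0$ consider the centered Gaussian density
\[
\nu_\kappa(\rv)=\Bigl(\tfrac{\kappa}{2\pi}\Bigr)^{d/2}e^{-\kappa\abs{\rv}^2/2}=e^{-V_\kappa(\rv)},\qquad V_\kappa(\rv)=\tfrac{\kappa}{2}\abs{\rv}^2-\tfrac{d}{2}\log\tfrac{\kappa}{2\pi},
\]
which has $\Dd^2V_\kappa=\kappa\,\mathrm{Id}$, so Corollary~\ref{lsi-gaussian} applies for this $\kappa$. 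I would apply it to the test function $g=f/\nu_\kappa\ge 0$, for which $g\,\nu_\kappa=f\in\P(\R^d)$, to obtain
\beq
\integ{\rv}f\log\tfrac{f}{\nu_\kappa}\;\le\;\frac{2}{\kappa}\integ{\rv}\abs{\nabla\sqrt{g}}^2\,\nu_\kappa .
\eeq

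The next step is to rewrite both sides in terms of $f$, of $A\equiv\integ{\rv}\abs{\nabla\sqrt{f}}^2$ and of $m_2\equiv\integ{\rv}\abs{\rv}^2f$. On the left, $\integ{\rv}f\log(f/\nu_\kappa)=\integ{\rv}f\log f-\tfrac{d}{2}\log\tfrac{\kappa}{2\pi}+\tfrac{\kappa}{2}m_2$. On the right, since $\nabla\log\nu_\kappa=-\kappa\rv$ one has $\nabla\sqrt{g}=\nu_\kappa^{-1/2}\bigl(\nabla\sqrt{f}+\tfrac{\kappa}{2}\sqrt{f}\,\rv\bigr)$; expanding the square and integrating the cross term by parts through $\integ{\rv}\rv\cdot\nabla f=-d$ gives $\integ{\rv}\abs{\nabla\sqrt{g}}^2\,\nu_\kappa=A-\tfrac{\kappa d}{2}+\tfrac{\kappa^2}{4}m_2$, so that $\tfrac{2}{\kappa}$ times this equals $\tfrac{2A}{\kappa}-d+\tfrac{\kappa}{2}m_2$. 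The crucial observation is that the $m_2$ terms on the two sides cancel, leaving the bound $\integ{\rv}f\log f\le\tfrac{2A}{\kappa}-d+\tfrac{d}{2}\log\tfrac{\kappa}{2\pi}$ for every $\kappa>0$. Minimizing the right-hand side over $\kappa$ (the optimum is $\kappa=4A/d$) produces the sharp Euclidean inequality $\integ{\rv}f\log f\le\tfrac{d}{2}\log\bigl(\tfrac{2A}{\pi d\,e}\bigr)$, and finally $\tfrac{d}{2}\log\bigl(\tfrac{2A}{\pi d\,e}\bigr)=\tfrac{d}{2}\bigl(\log\tfrac{2A}{\pi d}-1\bigr)\le\tfrac{d}{2}\cdot\tfrac{2A}{\pi d}=\tfrac{A}{\pi}$ by $\log t\le t-1$, which is exactly the claim of Theorem~\ref{teo-lsi} (the constant $1/\pi$ thus comes out with some room to spare).

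The main work lies in justifying these manipulations, and I would do this by first reducing to a convenient dense class. Since $\sqrt{f}\in\Hun(\R^d)$, the Sobolev embedding puts $f$ in $L^1\cap L^p$ for some $p>1$, hence $\integ{\rv}(f\log f)_+<\infty$ and $\H(f\mathcal{L}^d|\mathcal{L}^d)$ is well defined in $[-\infty,+\infty)$; if it equals $-\infty$ there is nothing to prove, so one may assume it finite, which also makes $f\lvert\log f\rvert$ integrable. Then approximate by $f_n=(\eta_n\sqrt{f})^2/\lVert\eta_n\sqrt{f}\rVert_2^2$, with smooth cutoffs $\eta_n\equiv 1$ on $B_n$, supported in $B_{2n}$ and $\abs{\nabla\eta_n}\le C/n$: each $f_n$ has compact support (hence $f_n\in W^{1,1}$ with finite second moment, so the integration-by-parts identities above hold with no boundary terms and $g_n=f_n/\nu_\kappa$ is an admissible test function in Corollary~\ref{lsi-gaussian}), while $\sqrt{f_n}\to\sqrt{f}$ in $\Hun$ and, by dominated convergence, $A_n\to A$ and $\integ{\rv}f_n\log f_n\to\integ{\rv}f\log f$. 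Running the argument above for each $f_n$ gives $\integ{\rv}f_n\log f_n\le A_n/\pi$, and passing to the limit yields the stated inequality. I expect this integrability and approximation bookkeeping — rather than the algebraic identities, whose only delicate feature is the cancellation of $m_2$ — to be the part requiring the most care; one should also check that the optimal $\kappa=4A/d$ is positive and finite, which holds because $A>0$ (a nonconstant $L^1$ density on $\R^d$ cannot have $\nabla\sqrt{f}\equiv 0$).
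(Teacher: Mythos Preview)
Your argument is correct, but it follows a genuinely different route from the paper. The paper fixes the curvature at $\kappa=2\pi$ and instead exploits the freedom in the \emph{center} of the Gaussian: it takes $\nu_{\rv_2}(\rv_1)=e^{-\pi|\rv_1-\rv_2|^2}$, observes that $\nu_{\rv_2}\le\mathcal L^d$ so that $\int f\,d\nu_{\rv_2}\le 1$ and hence $\int f\log f\,d\nu_{\rv_2}\le\H(f\nu_{\rv_2}\,|\,\nu_{\rv_2})$, applies Corollary~\ref{lsi-gaussian} with this $\kappa=2\pi$, and then integrates over the center $\rv_2$ using $\int e^{-\pi|\rv_1-\rv_2|^2}\,d\rv_2=1$ on both sides. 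This immediately gives the constant $1/\pi$ without any optimization, integration by parts, or appearance of the second moment.

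By contrast, you keep the center fixed, make $\kappa$ a free parameter, and pass to the test function $g=f/\nu_\kappa$; the price is the integration-by-parts identity $\int\rv\cdot\nabla f=-d$ and the need to control (and then cancel) $m_2$, which forces the compact-support approximation bookkeeping you outline. The payoff is that you actually obtain the sharp Euclidean log-Sobolev inequality $\int f\log f\le\tfrac{d}{2}\log\bigl(\tfrac{2A}{\pi d e}\bigr)$ as an intermediate step, which is strictly stronger than the stated linear bound $A/\pi$. The paper's averaging trick is shorter and sidesteps all the regularity issues (no $m_2$, no boundary terms, no approximation), but it does not isolate the sharp form along the way.
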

 \begin{proof} The proof is rather simple: it relies on the observation that if $\int f\,  d \nu \leq 1$ then $\H(f \nu| \nu) \geq \int f \log f \, d \nu$. 
 In particular we can consider the measure $\nu_{\rv_2} = e^{- \pi |\rv_1-\rv_2|^2}$. Since \eqref{eqn:mainineq} is again $1$-homogeneous in both sides, we can suppose that $\integ{\rv_1} f =1$. 
 It is clear that, since $\nu_{\rv_2} \leq \mathcal{L}^d$, we have that $\int f \, d \nu_{\rv_2} \leq 1$ for every $\rv_2$. In particular, we have that
 $$ \integ{\nu_{\rv_2}} f \log f \ \leq \H(f \nu_{\rv_2} | \nu_{\rv_2}). $$
 Now we can integrate this with respect to $\rv_2$ and use that $\integ{\rv_2} e^{-\pi |\rv_1-\rv_2|^2} = 1$ to obtain
 $$ \integ{\rv_1} f \log f  \leq \integ{\rv_2} \H(f \nu_{\rv_2} | \nu_{\rv_2}).$$
 Now considering $V(\rv_1)= \pi|\rv_1-\rv_2|^2$, we have $\Dd^2V = 2\pi Id$ and in particular we have that \eqref{eqn:gozin} holds with $\kappa = 2\pi$ and so we conclude
 \beq
 \begin{split}
 &\H(f \mathcal{L}^d| \mathcal{L}^d) = \integ{\rv_1} f \log f  \leq  \\
 &\dfrac 1 \pi \int \int \, d \nu_{\rv_2} \, d \rv_2 | \nabla \sqrt{f}|^2   = \\
 &\dfrac 1\pi \integ{\rv_1} | \nabla \sqrt{f}|^2. 
 \end{split}
 \eeq
 \end{proof}
 \begin{proof}[Proof Theorem \ref{low-bound}]
 Notice that by definition $\gamma\geq0$ and $\gamma\in\Hun(\R^{Nd})$ so we can apply theorem \ref{teo-lsi} and we have
 \beq
 \dfrac{\hbar^2}{2}\integ{\rv_1}\dotsi\integ{\rv_{N}}|\nabla_{\rv}\sqrt{\gamma}|^2\geq \T \H(\gamma|\mathcal{L}^{dN}), 
 \eeq
 where $\T\equiv \dfrac{\pi\hbar^2}{2}$.
 It follows that
 \beq
 \begin{split}
 &\integ{\rv_1}\dotsi\integ{\rv_{N}}|\nabla_{\rv}\sqrt{\gamma(\rv_1\cdots \rv_N)}|^2+\\
 &\integ{\rv_1}\dotsi\integ{\rv_{N}}\sum_{i<j}\dfrac{1}{|\rv_i-\rv_j|}\gamma(\rv_1\cdots \rv_N)\geq \\
 &\T \H(\gamma|\mathcal{L}^{dN})+\integ{\rv_1}\dotsi\integ{\rv_{N}}\sum_{i<j}\dfrac{1}{|\rv_i-\rv_j|}\gamma(\rv_1\cdots \rv_N)= \\
 &\H(\gamma|\eta_\T),
 \end{split}
 \eeq
 where $\eta_\T=\exp{(-\sum_{i<j}\frac{1}{\T|\rv_i-\rv_j|})}\otimes_{i=1}^N \difd{\rv_i}$ (notice that w.l.o.g. we can normalize $\eta_\T$ in order to have a probability measure).
 Then, the inequality (\ref{entropic-ineq}) easily follows.
 \end{proof}


\begin{thebibliography}{43}%
\makeatletter
\providecommand \@ifxundefined [1]{%
 \@ifx{#1\undefined}
}%
\providecommand \@ifnum [1]{%
 \ifnum #1\expandafter \@firstoftwo
 \else \expandafter \@secondoftwo
 \fi
}%
\providecommand \@ifx [1]{%
 \ifx #1\expandafter \@firstoftwo
 \else \expandafter \@secondoftwo
 \fi
}%
\providecommand \natexlab [1]{#1}%
\providecommand \enquote  [1]{``#1''}%
\providecommand \bibnamefont  [1]{#1}%
\providecommand \bibfnamefont [1]{#1}%
\providecommand \citenamefont [1]{#1}%
\providecommand \href@noop [0]{\@secondoftwo}%
\providecommand \href [0]{\begingroup \@sanitize@url \@href}%
\providecommand \@href[1]{\@@startlink{#1}\@@href}%
\providecommand \@@href[1]{\endgroup#1\@@endlink}%
\providecommand \@sanitize@url [0]{\catcode `\\12\catcode `\$12\catcode
  `\&12\catcode `\#12\catcode `\^12\catcode `\_12\catcode `\%12\relax}%
\providecommand \@@startlink[1]{}%
\providecommand \@@endlink[0]{}%
\providecommand \url  [0]{\begingroup\@sanitize@url \@url }%
\providecommand \@url [1]{\endgroup\@href {#1}{\urlprefix }}%
\providecommand \urlprefix  [0]{URL }%
\providecommand \Eprint [0]{\href }%
\providecommand \doibase [0]{http://dx.doi.org/}%
\providecommand \selectlanguage [0]{\@gobble}%
\providecommand \bibinfo  [0]{\@secondoftwo}%
\providecommand \bibfield  [0]{\@secondoftwo}%
\providecommand \translation [1]{[#1]}%
\providecommand \BibitemOpen [0]{}%
\providecommand \bibitemStop [0]{}%
\providecommand \bibitemNoStop [0]{.\EOS\space}%
\providecommand \EOS [0]{\spacefactor3000\relax}%
\providecommand \BibitemShut  [1]{\csname bibitem#1\endcsname}%
\let\auto@bib@innerbib\@empty
\bibitem [{\citenamefont {Seidl}(1999)}]{Sei-PRA-99}%
  \BibitemOpen
  \bibfield  {author} {\bibinfo {author} {\bibfnamefont {M.}~\bibnamefont
  {Seidl}},\ }\href@noop {} {\bibfield  {journal} {\bibinfo  {journal} {Phys.
  Rev. A}\ }\textbf {\bibinfo {volume} {{60}}},\ \bibinfo {pages} {4387}
  (\bibinfo {year} {1999})}\BibitemShut {NoStop}%
\bibitem [{\citenamefont {Seidl}\ \emph {et~al.}(1999)\citenamefont {Seidl},
  \citenamefont {Perdew},\ and\ \citenamefont {Levy}}]{SeiPerLev-PRA-99}%
  \BibitemOpen
  \bibfield  {author} {\bibinfo {author} {\bibfnamefont {M.}~\bibnamefont
  {Seidl}}, \bibinfo {author} {\bibfnamefont {J.~P.}\ \bibnamefont {Perdew}}, \
  and\ \bibinfo {author} {\bibfnamefont {M.}~\bibnamefont {Levy}},\ }\href@noop
  {} {\bibfield  {journal} {\bibinfo  {journal} {Phys. Rev. A}\ }\textbf
  {\bibinfo {volume} {{59}}},\ \bibinfo {pages} {51} (\bibinfo {year}
  {1999})}\BibitemShut {NoStop}%
\bibitem [{\citenamefont {Seidl}\ \emph {et~al.}(2007)\citenamefont {Seidl},
  \citenamefont {Gori-Giorgi},\ and\ \citenamefont {Savin}}]{SeiGorSav-PRA-07}%
  \BibitemOpen
  \bibfield  {author} {\bibinfo {author} {\bibfnamefont {M.}~\bibnamefont
  {Seidl}}, \bibinfo {author} {\bibfnamefont {P.}~\bibnamefont {Gori-Giorgi}},
  \ and\ \bibinfo {author} {\bibfnamefont {A.}~\bibnamefont {Savin}},\
  }\href@noop {} {\bibfield  {journal} {\bibinfo  {journal} {Phys. Rev. A}\
  }\textbf {\bibinfo {volume} {{75}}},\ \bibinfo {pages} {042511} (\bibinfo
  {year} {2007})}\BibitemShut {NoStop}%
\bibitem [{\citenamefont {Gori-Giorgi}\ \emph {et~al.}(2009)\citenamefont
  {Gori-Giorgi}, \citenamefont {Vignale},\ and\ \citenamefont
  {Seidl}}]{GorVigSei-JCTC-09}%
  \BibitemOpen
  \bibfield  {author} {\bibinfo {author} {\bibfnamefont {P.}~\bibnamefont
  {Gori-Giorgi}}, \bibinfo {author} {\bibfnamefont {G.}~\bibnamefont
  {Vignale}}, \ and\ \bibinfo {author} {\bibfnamefont {M.}~\bibnamefont
  {Seidl}},\ }\href@noop {} {\bibfield  {journal} {\bibinfo  {journal} {J.
  Chem. Theory Comput.}\ }\textbf {\bibinfo {volume} {{5}}},\ \bibinfo {pages}
  {743} (\bibinfo {year} {2009})}\BibitemShut {NoStop}%
\bibitem [{\citenamefont {Levy}(1979)}]{Lev-PNAS-79}%
  \BibitemOpen
  \bibfield  {author} {\bibinfo {author} {\bibfnamefont {M.}~\bibnamefont
  {Levy}},\ }\href@noop {} {\bibfield  {journal} {\bibinfo  {journal} {Proc.
  Natl. Acad. Sci. U.S.A.}\ }\textbf {\bibinfo {volume} {76}},\ \bibinfo
  {pages} {6062} (\bibinfo {year} {1979})}\BibitemShut {NoStop}%
\bibitem [{\citenamefont {Cotar}\ \emph {et~al.}(2013)\citenamefont {Cotar},
  \citenamefont {Friesecke},\ and\ \citenamefont
  {Kl\"uppelberg}}]{CotFriKlu-CPAM-13}%
  \BibitemOpen
  \bibfield  {author} {\bibinfo {author} {\bibfnamefont {C.}~\bibnamefont
  {Cotar}}, \bibinfo {author} {\bibfnamefont {G.}~\bibnamefont {Friesecke}}, \
  and\ \bibinfo {author} {\bibfnamefont {C.}~\bibnamefont {Kl\"uppelberg}},\
  }\href@noop {} {\bibfield  {journal} {\bibinfo  {journal} {Comm. Pure Appl.
  Math.}\ }\textbf {\bibinfo {volume} {66}},\ \bibinfo {pages} {548} (\bibinfo
  {year} {2013})}\BibitemShut {NoStop}%
\bibitem [{\citenamefont {Colombo}\ and\ \citenamefont {{Di
  Marino}}(2013)}]{ColDiM-INC-13}%
  \BibitemOpen
  \bibfield  {author} {\bibinfo {author} {\bibfnamefont {M.}~\bibnamefont
  {Colombo}}\ and\ \bibinfo {author} {\bibfnamefont {S.}~\bibnamefont {{Di
  Marino}}},\ }in\ \href@noop {} {\emph {\bibinfo {booktitle} {Annali di
  Matematica Pura ad Applicata}}}\ (\bibinfo  {publisher} {Springer},\ \bibinfo
  {address} {Berlin Heidelberg},\ \bibinfo {year} {2013})\ pp.\ \bibinfo
  {pages} {1--14}\BibitemShut {NoStop}%
\bibitem [{\citenamefont {Malet}\ and\ \citenamefont
  {Gori-Giorgi}(2012)}]{MalGor-PRL-12}%
  \BibitemOpen
  \bibfield  {author} {\bibinfo {author} {\bibfnamefont {F.}~\bibnamefont
  {Malet}}\ and\ \bibinfo {author} {\bibfnamefont {P.}~\bibnamefont
  {Gori-Giorgi}},\ }\href@noop {} {\bibfield  {journal} {\bibinfo  {journal}
  {Phys. Rev. Lett.}\ }\textbf {\bibinfo {volume} {{109}}},\ \bibinfo {pages}
  {246402} (\bibinfo {year} {2012})}\BibitemShut {NoStop}%
\bibitem [{\citenamefont {Malet}\ \emph {et~al.}(2013)\citenamefont {Malet},
  \citenamefont {Mirtschink}, \citenamefont {Cremon}, \citenamefont {Reimann},\
  and\ \citenamefont {Gori-Giorgi}}]{MalMirCreReiGor-PRB-13}%
  \BibitemOpen
  \bibfield  {author} {\bibinfo {author} {\bibfnamefont {F.}~\bibnamefont
  {Malet}}, \bibinfo {author} {\bibfnamefont {A.}~\bibnamefont {Mirtschink}},
  \bibinfo {author} {\bibfnamefont {J.~C.}\ \bibnamefont {Cremon}}, \bibinfo
  {author} {\bibfnamefont {S.~M.}\ \bibnamefont {Reimann}}, \ and\ \bibinfo
  {author} {\bibfnamefont {P.}~\bibnamefont {Gori-Giorgi}},\ }\href@noop {}
  {\bibfield  {journal} {\bibinfo  {journal} {Phys. Rev. B}\ }\textbf {\bibinfo
  {volume} {{87}}},\ \bibinfo {pages} {115146} (\bibinfo {year}
  {2013})}\BibitemShut {NoStop}%
\bibitem [{\citenamefont {Mendl}\ \emph {et~al.}(2014)\citenamefont {Mendl},
  \citenamefont {Malet},\ and\ \citenamefont {Gori-Giorgi}}]{MenMalGor-PRB-14}%
  \BibitemOpen
  \bibfield  {author} {\bibinfo {author} {\bibfnamefont {C.~B.}\ \bibnamefont
  {Mendl}}, \bibinfo {author} {\bibfnamefont {F.}~\bibnamefont {Malet}}, \ and\
  \bibinfo {author} {\bibfnamefont {P.}~\bibnamefont {Gori-Giorgi}},\
  }\href@noop {} {\bibfield  {journal} {\bibinfo  {journal} {Phys. Rev. B}\
  }\textbf {\bibinfo {volume} {89}},\ \bibinfo {pages} {125106} (\bibinfo
  {year} {2014})}\BibitemShut {NoStop}%
\bibitem [{\citenamefont {Colombo}\ \emph {et~al.}(2015)\citenamefont
  {Colombo}, \citenamefont {{De Pascale}},\ and\ \citenamefont {{Di
  Marino}}}]{ColDepDiM-CJM-14}%
  \BibitemOpen
  \bibfield  {author} {\bibinfo {author} {\bibfnamefont {M.}~\bibnamefont
  {Colombo}}, \bibinfo {author} {\bibfnamefont {L.}~\bibnamefont {{De
  Pascale}}}, \ and\ \bibinfo {author} {\bibfnamefont {S.}~\bibnamefont {{Di
  Marino}}},\ }\href@noop {} {\bibfield  {journal} {\bibinfo  {journal} {Can.
  J. Math.}\ }\textbf {\bibinfo {volume} {67}},\ \bibinfo {pages} {350}
  (\bibinfo {year} {2015})}\BibitemShut {NoStop}%
\bibitem [{\citenamefont {Buttazzo}\ \emph {et~al.}(2012)\citenamefont
  {Buttazzo}, \citenamefont {{De Pascale}},\ and\ \citenamefont
  {Gori-Giorgi}}]{ButDepGor-PRA-12}%
  \BibitemOpen
  \bibfield  {author} {\bibinfo {author} {\bibfnamefont {G.}~\bibnamefont
  {Buttazzo}}, \bibinfo {author} {\bibfnamefont {L.}~\bibnamefont {{De
  Pascale}}}, \ and\ \bibinfo {author} {\bibfnamefont {P.}~\bibnamefont
  {Gori-Giorgi}},\ }\href@noop {} {\bibfield  {journal} {\bibinfo  {journal}
  {Phys. Rev. A}\ }\textbf {\bibinfo {volume} {{85}}},\ \bibinfo {pages}
  {062502} (\bibinfo {year} {2012})}\BibitemShut {NoStop}%
\bibitem [{\citenamefont {Mirtschink}\ \emph {et~al.}(2012)\citenamefont
  {Mirtschink}, \citenamefont {Seidl},\ and\ \citenamefont
  {Gori-Giorgi}}]{MirSeiGor-JCTC-12}%
  \BibitemOpen
  \bibfield  {author} {\bibinfo {author} {\bibfnamefont {A.}~\bibnamefont
  {Mirtschink}}, \bibinfo {author} {\bibfnamefont {M.}~\bibnamefont {Seidl}}, \
  and\ \bibinfo {author} {\bibfnamefont {P.}~\bibnamefont {Gori-Giorgi}},\
  }\href@noop {} {\bibfield  {journal} {\bibinfo  {journal} {J. Chem. Theory
  Comput.}\ }\textbf {\bibinfo {volume} {8}},\ \bibinfo {pages} {3097}
  (\bibinfo {year} {2012})}\BibitemShut {NoStop}%
\bibitem [{\citenamefont {Vuckovic}\ \emph {et~al.}(2016)\citenamefont
  {Vuckovic}, \citenamefont {Irons}, \citenamefont {Savin}, \citenamefont
  {Teale},\ and\ \citenamefont {Gori-Giorgi}}]{VucIroSavTeaGor-JCTC-16}%
  \BibitemOpen
  \bibfield  {author} {\bibinfo {author} {\bibfnamefont {S.}~\bibnamefont
  {Vuckovic}}, \bibinfo {author} {\bibfnamefont {T.~J.~P.}\ \bibnamefont
  {Irons}}, \bibinfo {author} {\bibfnamefont {A.}~\bibnamefont {Savin}},
  \bibinfo {author} {\bibfnamefont {A.~M.}\ \bibnamefont {Teale}}, \ and\
  \bibinfo {author} {\bibfnamefont {P.}~\bibnamefont {Gori-Giorgi}},\
  }\href@noop {} {\bibfield  {journal} {\bibinfo  {journal} {J. Chem. Theory
  Comput.}\ }\textbf {\bibinfo {volume} {12}},\ \bibinfo {pages} {2598}
  (\bibinfo {year} {2016})}\BibitemShut {NoStop}%
\bibitem [{\citenamefont {Malet}\ \emph {et~al.}(2015)\citenamefont {Malet},
  \citenamefont {Mirtschink}, \citenamefont {Mendl}, \citenamefont {Bjerlin},
  \citenamefont {Karabulut}, \citenamefont {Reimann},\ and\ \citenamefont
  {Gori-Giorgi}}]{MalMirMenBjeKarReiGor-PRL-15}%
  \BibitemOpen
  \bibfield  {author} {\bibinfo {author} {\bibfnamefont {F.}~\bibnamefont
  {Malet}}, \bibinfo {author} {\bibfnamefont {A.}~\bibnamefont {Mirtschink}},
  \bibinfo {author} {\bibfnamefont {C.~B.}\ \bibnamefont {Mendl}}, \bibinfo
  {author} {\bibfnamefont {J.}~\bibnamefont {Bjerlin}}, \bibinfo {author}
  {\bibfnamefont {E.~O.}\ \bibnamefont {Karabulut}}, \bibinfo {author}
  {\bibfnamefont {S.~M.}\ \bibnamefont {Reimann}}, \ and\ \bibinfo {author}
  {\bibfnamefont {P.}~\bibnamefont {Gori-Giorgi}},\ }\href {\doibase
  10.1103/PhysRevLett.115.033006} {\bibfield  {journal} {\bibinfo  {journal}
  {Phys. Rev. Lett.}\ }\textbf {\bibinfo {volume} {115}},\ \bibinfo {pages}
  {033006} (\bibinfo {year} {2015})}\BibitemShut {NoStop}%
\bibitem [{\citenamefont {Colombo}\ and\ \citenamefont
  {Stra}(2016)}]{ColStr-MMMAS-16}%
  \BibitemOpen
  \bibfield  {author} {\bibinfo {author} {\bibfnamefont {M.}~\bibnamefont
  {Colombo}}\ and\ \bibinfo {author} {\bibfnamefont {F.}~\bibnamefont {Stra}},\
  }\href@noop {} {\bibfield  {journal} {\bibinfo  {journal} {Math. Models
  Methods Appl. Sci.}\ }\textbf {\bibinfo {volume} {26}},\ \bibinfo {pages}
  {1025} (\bibinfo {year} {2016})}\BibitemShut {NoStop}%
\bibitem [{Note1()}]{Note1}%
  \BibitemOpen
  \bibinfo {note} {$\Pi ({\protect \mathbb R}^{2d};\eta _1,\eta _2)$ is a
  \protect \emph {compact} set. Consequently, since ${\delimiter "426830A
  C\delimiter "526930B _{\gamma }}$ is a linear (thus continuous) functional of
  $\gamma $, Eq.~\protect \textup {\hbox {\mathsurround \z@ \protect
  \normalfont (\ignorespaces \ref {eq:MongeKantorovich}\unskip \@@italiccorr
  )}} is truly a minimum, not only an infimum.}\BibitemShut {Stop}%
\bibitem [{\citenamefont {Monge}(1781)}]{Mon-BOOK-1781}%
  \BibitemOpen
  \bibfield  {author} {\bibinfo {author} {\bibfnamefont {G.}~\bibnamefont
  {Monge}},\ }\href@noop {} {\emph {\bibinfo {title} {M\'emoire sur la
  th\'eorie des d\'eblais et des remblais}}}\ (\bibinfo  {publisher} {Histoire
  Acad. Sciences},\ \bibinfo {address} {Paris},\ \bibinfo {year}
  {1781})\BibitemShut {NoStop}%
\bibitem [{\citenamefont {Brenier}(1991)}]{Bre-CPAM-91}%
  \BibitemOpen
  \bibfield  {author} {\bibinfo {author} {\bibfnamefont {Y.}~\bibnamefont
  {Brenier}},\ }\href@noop {} {\bibfield  {journal} {\bibinfo  {journal}
  {Communications on pure and applied mathematics}\ }\textbf {\bibinfo {volume}
  {44}},\ \bibinfo {pages} {375} (\bibinfo {year} {1991})}\BibitemShut
  {NoStop}%
\bibitem [{\citenamefont {Caffarelli}\ \emph {et~al.}(2002)\citenamefont
  {Caffarelli}, \citenamefont {Feldman},\ and\ \citenamefont
  {McCann}}]{CafFelMcC-JAMS-2002}%
  \BibitemOpen
  \bibfield  {author} {\bibinfo {author} {\bibfnamefont {L.}~\bibnamefont
  {Caffarelli}}, \bibinfo {author} {\bibfnamefont {M.}~\bibnamefont {Feldman}},
  \ and\ \bibinfo {author} {\bibfnamefont {R.}~\bibnamefont {McCann}},\
  }\href@noop {} {\bibfield  {journal} {\bibinfo  {journal} {J. Amer. Math.
  Soc}\ ,\ \bibinfo {pages} {1}} (\bibinfo {year} {2002})}\BibitemShut
  {NoStop}%
\bibitem [{\citenamefont {Trudinger}\ and\ \citenamefont
  {Wang}(2001)}]{TruWan-CVPDE-2001}%
  \BibitemOpen
  \bibfield  {author} {\bibinfo {author} {\bibfnamefont {N.}~\bibnamefont
  {Trudinger}}\ and\ \bibinfo {author} {\bibfnamefont {X.-J.}\ \bibnamefont
  {Wang}},\ }\href@noop {} {\bibfield  {journal} {\bibinfo  {journal} {Calc.
  Var. Paritial Differential Equations}\ ,\ \bibinfo {pages} {19}} (\bibinfo
  {year} {2001})}\BibitemShut {NoStop}%
\bibitem [{\citenamefont {Kantorovich}(1942)}]{Kan-DAN-42}%
  \BibitemOpen
  \bibfield  {author} {\bibinfo {author} {\bibfnamefont {L.~V.}\ \bibnamefont
  {Kantorovich}},\ }\href@noop {} {\bibfield  {journal} {\bibinfo  {journal}
  {Dokl. Akad. Nauk. SSSR.}\ }\textbf {\bibinfo {volume} {37}},\ \bibinfo
  {pages} {227} (\bibinfo {year} {1942})}\BibitemShut {NoStop}%
\bibitem [{\citenamefont {De~Pascale}(2015)}]{Dep-MMNM-15}%
  \BibitemOpen
  \bibfield  {author} {\bibinfo {author} {\bibfnamefont {L.}~\bibnamefont
  {De~Pascale}},\ }\href@noop {} {\bibfield  {journal} {\bibinfo  {journal}
  {ESAIM: Mathematical Modelling and Numerical Analysis}\ }\textbf {\bibinfo
  {volume} {49}},\ \bibinfo {pages} {1643} (\bibinfo {year}
  {2015})}\BibitemShut {NoStop}%
\bibitem [{\citenamefont {Buttazzo}\ \emph {et~al.}(2016)\citenamefont
  {Buttazzo}, \citenamefont {Champion},\ and\ \citenamefont
  {De~Pascale}}]{ButChaDep-arXiv-16}%
  \BibitemOpen
  \bibfield  {author} {\bibinfo {author} {\bibfnamefont {G.}~\bibnamefont
  {Buttazzo}}, \bibinfo {author} {\bibfnamefont {T.}~\bibnamefont {Champion}},
  \ and\ \bibinfo {author} {\bibfnamefont {L.}~\bibnamefont {De~Pascale}},\
  }\href@noop {} {\bibfield  {journal} {\bibinfo  {journal} {arXiv preprint
  arXiv:1608.08780}\ } (\bibinfo {year} {2016})}\BibitemShut {NoStop}%
\bibitem [{\citenamefont {Cuturi}(2013)}]{Cut-NIPS-13}%
  \BibitemOpen
  \bibfield  {author} {\bibinfo {author} {\bibfnamefont {M.}~\bibnamefont
  {Cuturi}},\ }in\ \href@noop {} {\emph {\bibinfo {booktitle} {Advances in
  Neural Information Processing Systems}}}\ (\bibinfo {year} {2013})\ pp.\
  \bibinfo {pages} {2292--2300}\BibitemShut {NoStop}%
\bibitem [{\citenamefont {Benamou}\ \emph {et~al.}(2015)\citenamefont
  {Benamou}, \citenamefont {Carlier}, \citenamefont {Cuturi}, \citenamefont
  {Nenna},\ and\ \citenamefont {Peyr{\'e}}}]{BenCarCutNenPey-SIAMJSC-15}%
  \BibitemOpen
  \bibfield  {author} {\bibinfo {author} {\bibfnamefont {J.-D.}\ \bibnamefont
  {Benamou}}, \bibinfo {author} {\bibfnamefont {G.}~\bibnamefont {Carlier}},
  \bibinfo {author} {\bibfnamefont {M.}~\bibnamefont {Cuturi}}, \bibinfo
  {author} {\bibfnamefont {L.}~\bibnamefont {Nenna}}, \ and\ \bibinfo {author}
  {\bibfnamefont {G.}~\bibnamefont {Peyr{\'e}}},\ }\href@noop {} {\bibfield
  {journal} {\bibinfo  {journal} {SIAM Journal on Scientific Computing}\
  }\textbf {\bibinfo {volume} {37}},\ \bibinfo {pages} {A1111} (\bibinfo {year}
  {2015})}\BibitemShut {NoStop}%
\bibitem [{\citenamefont {Benamou}\ \emph {et~al.}(2016)\citenamefont
  {Benamou}, \citenamefont {Carlier},\ and\ \citenamefont
  {Nenna}}]{BenCarNen-SMCISE-15}%
  \BibitemOpen
  \bibfield  {author} {\bibinfo {author} {\bibfnamefont {J.-D.}\ \bibnamefont
  {Benamou}}, \bibinfo {author} {\bibfnamefont {G.}~\bibnamefont {Carlier}}, \
  and\ \bibinfo {author} {\bibfnamefont {L.}~\bibnamefont {Nenna}},\ }\enquote
  {\bibinfo {title} {A numerical method to solve multi-marginal optimal
  transport problems with coulomb cost},}\ in\ \href {\doibase
  10.1007/978-3-319-41589-5_17} {\emph {\bibinfo {booktitle} {Splitting Methods
  in Communication, Imaging, Science, and Engineering}}},\ \bibinfo {editor}
  {edited by\ \bibinfo {editor} {\bibfnamefont {R.}~\bibnamefont {Glowinski}},
  \bibinfo {editor} {\bibfnamefont {S.~J.}\ \bibnamefont {Osher}}, \ and\
  \bibinfo {editor} {\bibfnamefont {W.}~\bibnamefont {Yin}}}\ (\bibinfo
  {publisher} {Springer International Publishing},\ \bibinfo {address} {Cham},\
  \bibinfo {year} {2016})\ pp.\ \bibinfo {pages} {577--601}\BibitemShut
  {NoStop}%
\bibitem [{\citenamefont {Galichon}\ and\ \citenamefont
  {Salani{\'e}}(2010)}]{GalSal-CEPR-10}%
  \BibitemOpen
  \bibfield  {author} {\bibinfo {author} {\bibfnamefont {A.}~\bibnamefont
  {Galichon}}\ and\ \bibinfo {author} {\bibfnamefont {B.}~\bibnamefont
  {Salani{\'e}}},\ }\href@noop {} {\bibfield  {journal} {\bibinfo  {journal}
  {CEPR Discussion Paper}\ } (\bibinfo {year} {2010})}\BibitemShut {NoStop}%
\bibitem [{\citenamefont {Nenna}(2016)}]{Nen-16}%
  \BibitemOpen
  \bibfield  {author} {\bibinfo {author} {\bibfnamefont {L.}~\bibnamefont
  {Nenna}},\ }\emph {\bibinfo {title} {Numerical Methods for Multi-Marginal
  Opimal Transportation}},\ \href@noop {} {Ph.D. thesis},\ \bibinfo  {school}
  {Universit{\'e} Paris-Dauphine} (\bibinfo {year} {2016})\BibitemShut
  {NoStop}%
\bibitem [{\citenamefont {Cominetti}\ and\ \citenamefont
  {Mart{\'\i}n}(1994)}]{ComMar-MP-94}%
  \BibitemOpen
  \bibfield  {author} {\bibinfo {author} {\bibfnamefont {R.}~\bibnamefont
  {Cominetti}}\ and\ \bibinfo {author} {\bibfnamefont {J.~S.}\ \bibnamefont
  {Mart{\'\i}n}},\ }\href@noop {} {\bibfield  {journal} {\bibinfo  {journal}
  {Mathematical Programming}\ }\textbf {\bibinfo {volume} {67}},\ \bibinfo
  {pages} {169} (\bibinfo {year} {1994})}\BibitemShut {NoStop}%
\bibitem [{\citenamefont {Franklin}\ and\ \citenamefont
  {Lorentz}(1989)}]{FraLor-LAA-89}%
  \BibitemOpen
  \bibfield  {author} {\bibinfo {author} {\bibfnamefont {J.}~\bibnamefont
  {Franklin}}\ and\ \bibinfo {author} {\bibfnamefont {J.}~\bibnamefont
  {Lorentz}},\ }\href@noop {} {\bibfield  {journal} {\bibinfo  {journal}
  {Linear Algebra and its Applications}\ }\textbf {\bibinfo {volume}
  {114--115}},\ \bibinfo {pages} {717} (\bibinfo {year} {1989})}\BibitemShut
  {NoStop}%
\bibitem [{\citenamefont {Georgiou}\ and\ \citenamefont
  {Pavon}(2015)}]{GeoPav-JMP-15}%
  \BibitemOpen
  \bibfield  {author} {\bibinfo {author} {\bibfnamefont {T.~T.}\ \bibnamefont
  {Georgiou}}\ and\ \bibinfo {author} {\bibfnamefont {M.}~\bibnamefont
  {Pavon}},\ }\href@noop {} {\bibfield  {journal} {\bibinfo  {journal} {Journal
  of Mathematical Physics}\ }\textbf {\bibinfo {volume} {56}},\ \bibinfo
  {pages} {033301} (\bibinfo {year} {2015})}\BibitemShut {NoStop}%
\bibitem [{\citenamefont {Di~Marino}\ \emph {et~al.}()\citenamefont
  {Di~Marino}, \citenamefont {Gerolin},\ and\ \citenamefont
  {Nenna}}]{DiMGerNen-Survey-15}%
  \BibitemOpen
  \bibfield  {author} {\bibinfo {author} {\bibfnamefont {S.}~\bibnamefont
  {Di~Marino}}, \bibinfo {author} {\bibfnamefont {A.}~\bibnamefont {Gerolin}},
  \ and\ \bibinfo {author} {\bibfnamefont {L.}~\bibnamefont {Nenna}},\
  }\href@noop {} {\bibinfo  {journal} {pre-print arXiv:1506.04565}\
  }\BibitemShut {NoStop}%
\bibitem [{\citenamefont {Lewin}\ and\ \citenamefont
  {Lieb}(2015)}]{LewLie-PRA-15}%
  \BibitemOpen
\bibfield  {journal} {  }\bibfield  {author} {\bibinfo {author} {\bibfnamefont
  {M.}~\bibnamefont {Lewin}}\ and\ \bibinfo {author} {\bibfnamefont {E.~H.}\
  \bibnamefont {Lieb}},\ }\href@noop {} {\bibfield  {journal} {\bibinfo
  {journal} {Phys. Rev. A}\ }\textbf {\bibinfo {volume} {{91}}},\ \bibinfo
  {pages} {022507} (\bibinfo {year} {2015})}\BibitemShut {NoStop}%
\bibitem [{\citenamefont {Seidl}\ \emph {et~al.}(2016)\citenamefont {Seidl},
  \citenamefont {Vuckovic},\ and\ \citenamefont
  {Gori-Giorgi}}]{SeiVucGor-MP-16}%
  \BibitemOpen
  \bibfield  {author} {\bibinfo {author} {\bibfnamefont {M.}~\bibnamefont
  {Seidl}}, \bibinfo {author} {\bibfnamefont {S.}~\bibnamefont {Vuckovic}}, \
  and\ \bibinfo {author} {\bibfnamefont {P.}~\bibnamefont {Gori-Giorgi}},\
  }\href@noop {} {\bibfield  {journal} {\bibinfo  {journal} {Mol. Phys.}\
  }\textbf {\bibinfo {volume} {114}},\ \bibinfo {pages} {1076} (\bibinfo {year}
  {2016})}\BibitemShut {NoStop}%
\bibitem [{\citenamefont {R\"as\"anen}\ \emph {et~al.}(2011)\citenamefont
  {R\"as\"anen}, \citenamefont {Seidl},\ and\ \citenamefont
  {Gori-Giorgi}}]{RasSeiGor-PRB-11}%
  \BibitemOpen
  \bibfield  {author} {\bibinfo {author} {\bibfnamefont {E.}~\bibnamefont
  {R\"as\"anen}}, \bibinfo {author} {\bibfnamefont {M.}~\bibnamefont {Seidl}},
  \ and\ \bibinfo {author} {\bibfnamefont {P.}~\bibnamefont {Gori-Giorgi}},\
  }\href@noop {} {\bibfield  {journal} {\bibinfo  {journal} {Phys. Rev. B}\
  }\textbf {\bibinfo {volume} {{83}}},\ \bibinfo {pages} {195111} (\bibinfo
  {year} {2011})}\BibitemShut {NoStop}%
\bibitem [{\citenamefont {Wagner}\ and\ \citenamefont
  {Gori-Giorgi}(2014)}]{WagGor-PRA-14}%
  \BibitemOpen
  \bibfield  {author} {\bibinfo {author} {\bibfnamefont {L.~O.}\ \bibnamefont
  {Wagner}}\ and\ \bibinfo {author} {\bibfnamefont {P.}~\bibnamefont
  {Gori-Giorgi}},\ }\href@noop {} {\bibfield  {journal} {\bibinfo  {journal}
  {Phys. Rev. A}\ }\textbf {\bibinfo {volume} {90}},\ \bibinfo {pages} {052512}
  (\bibinfo {year} {2014})}\BibitemShut {NoStop}%
\bibitem [{\citenamefont {Zhou}\ \emph {et~al.}(2015)\citenamefont {Zhou},
  \citenamefont {Bahmann},\ and\ \citenamefont {Ernzerhof}}]{ZhoBahErn-JCP-15}%
  \BibitemOpen
  \bibfield  {author} {\bibinfo {author} {\bibfnamefont {Y.}~\bibnamefont
  {Zhou}}, \bibinfo {author} {\bibfnamefont {H.}~\bibnamefont {Bahmann}}, \
  and\ \bibinfo {author} {\bibfnamefont {M.}~\bibnamefont {Ernzerhof}},\
  }\href@noop {} {\bibfield  {journal} {\bibinfo  {journal} {J. Chem. Phys.}\
  }\textbf {\bibinfo {volume} {143}},\ \bibinfo {pages} {124103} (\bibinfo
  {year} {2015})}\BibitemShut {NoStop}%
\bibitem [{\citenamefont {Bahmann}\ \emph {et~al.}(2016)\citenamefont
  {Bahmann}, \citenamefont {Zhou},\ and\ \citenamefont
  {Ernzerhof}}]{BahZhoErn-JCP-16}%
  \BibitemOpen
  \bibfield  {author} {\bibinfo {author} {\bibfnamefont {H.}~\bibnamefont
  {Bahmann}}, \bibinfo {author} {\bibfnamefont {Y.}~\bibnamefont {Zhou}}, \
  and\ \bibinfo {author} {\bibfnamefont {M.}~\bibnamefont {Ernzerhof}},\
  }\href@noop {} {\bibfield  {journal} {\bibinfo  {journal} {J. Chem. Phys.}\
  }\textbf {\bibinfo {volume} {145}},\ \bibinfo {pages} {124104} (\bibinfo
  {year} {2016})}\BibitemShut {NoStop}%
\bibitem [{\citenamefont {Vuckovic}\ \emph {et~al.}(2017)\citenamefont
  {Vuckovic}, \citenamefont {Irons}, \citenamefont {Wagner}, \citenamefont
  {Teale},\ and\ \citenamefont {Gori-Giorgi}}]{VucIroWagTeaGor-PCCP-17}%
  \BibitemOpen
  \bibfield  {author} {\bibinfo {author} {\bibfnamefont {S.}~\bibnamefont
  {Vuckovic}}, \bibinfo {author} {\bibfnamefont {T.~J.~P.}\ \bibnamefont
  {Irons}}, \bibinfo {author} {\bibfnamefont {L.~O.}\ \bibnamefont {Wagner}},
  \bibinfo {author} {\bibfnamefont {A.~M.}\ \bibnamefont {Teale}}, \ and\
  \bibinfo {author} {\bibfnamefont {P.}~\bibnamefont {Gori-Giorgi}},\
  }\href@noop {} {\bibfield  {journal} {\bibinfo  {journal} {Phys. Chem. Chem.
  Phys.}\ } (\bibinfo {year} {2017})}\BibitemShut {NoStop}%
\bibitem [{\citenamefont {Gerolin}(2016)}]{Ger-PhDthesis-16}%
  \BibitemOpen
  \bibfield  {author} {\bibinfo {author} {\bibfnamefont {A.}~\bibnamefont
  {Gerolin}},\ }\emph {\bibinfo {title} {Multimarginal optimal transport and
  potential optimization problems for Schr{\"o}dinger operators}},\ \href@noop
  {} {Ph.D. thesis},\ \bibinfo  {school} {Universit{\`a} degli studi di Pisa}
  (\bibinfo {year} {2016})\BibitemShut {NoStop}%
\bibitem [{\citenamefont {Di~Marino}\ and\ \citenamefont
  {Nenna}()}]{DimNeN-inprep-17}%
  \BibitemOpen
  \bibfield  {author} {\bibinfo {author} {\bibfnamefont {S.}~\bibnamefont
  {Di~Marino}}\ and\ \bibinfo {author} {\bibfnamefont {L.}~\bibnamefont
  {Nenna}},\ }\href@noop {} {\bibinfo  {journal} {in preparation}\
  }\BibitemShut {NoStop}%
\bibitem [{\citenamefont {Gozlan}\ and\ \citenamefont
  {L{\'e}onard}(2010)}]{GozChr-MPRF-10}%
  \BibitemOpen
\bibfield  {journal} {  }\bibfield  {author} {\bibinfo {author} {\bibfnamefont
  {N.}~\bibnamefont {Gozlan}}\ and\ \bibinfo {author} {\bibfnamefont
  {C.}~\bibnamefont {L{\'e}onard}},\ }\href@noop {} {\bibfield  {journal}
  {\bibinfo  {journal} {Markov Processes and Related Fields}\ }\textbf
  {\bibinfo {volume} {16}},\ \bibinfo {pages} {635} (\bibinfo {year}
  {2010})}\BibitemShut {NoStop}%
\end{thebibliography}
%

\end{document}